\title{Renormalization for holomorphic field theories}
\author{Brian R. Williams}
\address{Northeastern University}
\email{brwilliams@northeastern.edu}
\urladdr{}
\date{}
\begin{document}

\maketitle

\begin{abstract}
We introduce the concept of a holomorphic field theory on any complex manifold in the language of the Batalin-Vilkovisky formalism. 
When the complex dimension is one, this setting agrees with that of chiral conformal field theory. 
Our main result concerns the behavior of holomorphic theories under renormalization group flow.
Namely, we show that holomorphic theories are one-loop finite. 
We use this to completely characterize holomorphic anomalies in any dimension.
Throughout, we compare our approach to holomorphic field theories to more familiar approaches including that of supersymmetric field theories. 
\end{abstract}

\tableofcontents 

\section{Introduction}

From a mathematical perspective, much of the appeal of quantum field theory is that often theories depend naturally on input geometric data. 
Furthermore, the usual quantities in quantum field theory such as expectation values and the partition function produce invariants of these underlying geometries.
An important and fruitful instance of this is the notion of a {\em topological field theory}.
Mathematically, a topological field theory can be defined on an arbitrary manifold of a fixed dimension.
In a precise way, topological theories depend naturally on the smooth structure of the manifold (or smooth structures on associated data such as a bundle).
A more complicated class of theories are Riemannian field theories, which, in addition to smooth structures, are sensitive to input metric data. 
These theories have more refined invariants associated to them, such as the $\beta$-function, and are often more relevant to physical examples.
In this paper, we study a class of theories that lie between the aforementioned examples. 
These {\em holomorphic theories} depend naturally on the complex structure of the underlying space-time. 

The idea of studying holomorphic dependence in quantum field theory is certainly not a new one.
The most well-known case of this appears in complex dimension one with the notion of a chiral conformal field theory.
Here, the holomorphic structure shines most brightly through the {\em operator product expansion} (OPE) of chiral operators in the theory. 
This says that the dependence on the product of operators on their relative position is holomorphic, even at the quantum level. 
These operators combine to form a mathematical object called a vertex algebra.
Numerous calculations in conformal field theory reduce to algebraic manipulations at the level of vertex algebras. 
Furthermore, on arbitrary algebraic curves, the phenomena of operator product expansions has been interpreted mathematically through the pioneering work of Beilinson and Drinfeld on chiral algebras \cite{BD}. 
This is arguably one of the greatest successes of mathematics in describing a small, albeit important, class of field theories. 

Past dimension one, in complex dimensions two, four and six, an approach to studying special types of holomorphic theories has appeared in the work of Nekrasov and collaborators \cite{NekThesis, NekChiral, NekCFT}. 
In the physics literature, a holomorphic version of Chern-Simons theory has appeared as a twist of 10d supersymmetric Yang-Mills theory \cite{Baulieu} as well as in \cite{Popov1, Popov2}.

A holomorphic theory of gravity has been proposed in \cite{BCOVafa}.
Recently, Costello and Li \cite{bcov, CL1,CosKos1, CosM} have given a mathematical construction of this theory in the BV formalism and have studied applications to the topological string.
Our motivation for the definition of a holomorphic quantum field theory is largely based off of an abstraction of the formalism that Costello and Li have developed in their work. 
Many of the technical methods that are employed here are generalizations of some key results that appear in this work. 

The goal of this paper is two-fold. 
In the first part the discussion is fairly formal.
After a short recollection of field theory in the Batalin-Vilkovisky formalism, we go on to define the definition of a holomorphic field theory on any complex manifold. 
We characterize holomorphic deformations of holomorphic theories and provide numerous examples of these theories in the language we set up.

The second part of the paper proceeds to study quantizations of holomorphic field theories defined on $\CC^d$, for any $d \geq 1$. 
Of course, studying properties of quantization is extremely theory-dependent.
Nevertheless, our main result says that when it comes to renormalization, holomorphic theories are generically well-behaved.
We show that the renormalization of a holomorphic theory on $\CC^d$ is {\em finite} for quantization at one-loop.
A more precise statement is given in Theorem \ref{thm: holrenorm3}. 
 
The approach to quantum field theory we use follows Costello's theory of renormalization and the Batalin-Vilkovisky formalism developed in \cite{CosRenorm}.
In broad strokes, it says that to construct a full quantum field theory it suffices to define the theory at each energy (or length) scale and to ask that these descriptions be compatible as we vary the scale.
Concretely, this compatibility is through the {\em renormalization group (RG) flow} and is encoded by an operator $W(P_{\epsilon < L}, -)$ acting on the space of functionals. 
The functional $W(P_{\epsilon < L},-)$ is defined as a sum over weights of graphs which is how Feynman diagrams appear in Costello's formalism.
The infamous infinities of quantum field theory arise due to studying behavior of theories at arbitrarily high energies (or small lengths). 
In physics this is called the ultra-violet (UV) divergence.
Our result can be interpreted by saying that, at one-loop, holomorphic theories have no UV divergences. 

Although we do not consider this topic in the present paper, a large collection of examples of holomorphic theories come from familiar physical theories.
Namely, holomorphic theories generically appear as minimal {\em twists} of supersymmetric theories. 
These are more general than the topological twists considered by Witten in \cite{WittenTwist}.
Any supercharge $Q$ of a supersymmetric theory satisfying $Q^2 = 0$ allows one to construct a ``twist". 
In some cases, where Clifford multiplication with $Q$ spans all translations such a twist becomes a topological theory (in the weak sense). 
In any case, however, such a $Q$ defines a ``holomorphic twist" \cite{CostelloHolomorphic}, which results in the type of holomorphic theories we consider.
Regularization in supersymmetric theories, especially gauge theories, is notoriously difficult. 
Our result implies that after twisting the analytic difficulties become much easier to deal with. 
Consequently, facets of these theories, such as their anomalies, can be cast in a more algebraic framework.
For a recent discussion of holomorphic aspects of twists of supersymmetric theories see \cite{EagSab}. 

In no way does this paper tell the complete story of holomorphic field theory.
A major future program of the author is to study the behavior of operators for holomorphic field theory, even in the case that the complex manifold is $X = \CC^d$. 
In general, the operators of any quantum field theory form a {\em factorization algebra} \cite{CG1,CG2}. 
For one-dimensional holomorphic theories, our formalism recovers the theory of chiral and vertex algebras \cite{BWVir, GGW, CG1}. 
When $d \geq 2$ there is strong evidence that the factorization algebras of holomorphic theories combine to form some higher dimensional vertex algebra structure, where the OPE still varies holomorphically with respect to the relative location of the operators. 
We will return to this in later publications.  

\subsection*{Notation and homological conventions}

\begin{itemize}
\item Vector spaces and cochain complexes are defined over $\CC$. 
All tensor products $\tensor = \tensor_{\CC}$ are defined over $\CC$, unless otherwise specified. 
\item If $V^*$ is a graded vector space, then $V^*[k]$ is the graded vector space which is $V^{i + k}$ in degree $i$. 
\item If $V^*$ is a graded vector space, the notation $v \in V^* [k]$ will signify that $v$ is a homogenous element of degree $k$.

\item All products and commutators $[-,-]$ will be understood as products and commutators in the graded sense, and will hence follow the Koszul rule of signs. 
For instance, if $v$ is degree $i$ and $w$ is degree $j$ then $[v,w] = -(-1)^{ij} [w,v]$ whenever $[v,w]$ and $[w,v]$ are defined.

\end{itemize} 
\subsection{Acknowledgements}

First, and foremost, the author would like to express his gratitude to Si Li for his guidance and advice in all things related to holomorphic methods in quantum field theory. 
His work in \cite{LiFeynman} and the work of Si Li and Kevin Costello on BCOV theory in \cite{bcov} were the main motivations for this work. 
The author learned the methods of renormalization employed here from the work of Costello and Li in \cite{bcov} and by Li in \cite{LiFeynman}.
%\cite{\brian{finish guidance, advice, inspiration, motivation,}.
The author would also like to thank Owen Gwilliam for comments and suggestions he made on a previous version of this paper that appeared in the author's thesis. 
Also, the author thanks Matt Szczesny for discussions related to, and comments made on, a more recent version of this paper. 
The author would also like to thank Northwestern University, where he received support as a graduate student whilst most of this work took place.
In addition, the author enjoyed support as a graduate student research fellow under Award DGE-1324585. 
 
\section{The definition of a holomorphic field theory}

The goal of this section is to define the notion of a holomorphic field theory. 
This is a variant of Costello's definition of a theory in the Batalin-Vilkovisky formalism, which we will recall at a rapid pace in the first part of this section.
In crude summary, to arrive at the definition of a holomorphic field theory we modify the definition of an ordinary BV theory by inserting the word ``holomorphic" in front of most objects (bundles, differential operators, etc..).
By applying the Dolbeault complex in appropriate locations, we will recover Costello's definition of a theory, but with a holomorphic flavor, see Table \ref{table: holtoBV}. 

\subsection{A recollection of the BV-BRST formalism}

In this section we will give an expedient review of the classical Batalin-Vilkovisky formalism.
We will also set up the requisite conventions and notations that we will use throughout this paper. 

\subsubsection{Classical field theory} \label{sec: classical bv}

Classical field theory is a formalism for describing a physical system in terms of objects called {\em fields}. 
Mathematically, the space of fields is a (most often infinite dimensional) vector space $\sE$. 
Classical physics is described by the critical locus of a (usually real or complex valued) linear functional on the space of fields 
\be\label{actionfnl}
S : \sE \to \RR \;\; {\rm or} \;\; \CC,
\ee
called the {\em action functional}. 
The critical locus is the locus of fields that have zero variation
\be
{\rm Crit}(S) := \{\varphi \in \sE \; | \; \d S (\varphi) = 0\} .
\ee
A field $\varphi$ satisfying the equation $\d S (\varphi) = 0$ is said to be a {\em solution to the classical equations of motion}. 

Even in the finite dimensional case, if the functional $S$ is not sufficiently well-behaved the critical locus can be still be highly singular. 
The starting point of the classical Batalin-Vilkovisky formalism is to instead consider the {\em derived} critical locus.
To get a feel for this, we review the finite dimensional situation.
Let $M$ be a manifold, which is our ansatz for $\sE$ at the moment, and suppose $S : M \to \RR$ is a smooth map.
The critical locus is the intersection of the graph of $\d S$ in $T^*M$ with the zero section $0 : M \to T^*M$.
Thus, functions on the critical locus are of the form
\ben
\sO({\rm Crit}(S)) = \sO(\Gamma(\d S)) \tensor_{\sO(T^*M)} \sO(M) .
\een
The derived critical locus is a derived space whose dg ring of functions is 
\ben
\sO({\rm Crit}^{h}(S)) = \sO(\Gamma(\d S)) \tensor^{\LL}_{\sO(T^*M)} \sO(M).
\een
We have replaced the strict tensor product with the derived one.
Using the Koszul resolution of $\sO(M)$ as a $\sO(T^*M)$-module one can write this derived tensor product as a complex of polyvector fields equipped with some differential:
\ben
\sO({\rm Crit}^h(S)) \simeq \left({\rm PV}^{-*}(M), \iota_{\d S}\right) .
\een
In cohomological degree $-i$ we have ${\rm PV}^{-i} (M) = \Gamma(M, \wedge^i TM)$ and $\iota_{\d S}$ denotes contraction with the one-form $\d S$ (which raises cohomological degree with our regrading convention).
With our grading convention we have $\sO(T^*[-1] M) = {\rm PV}^{-*}(M)$. 
The space $\sO(T^*[-1]M)$ has natural shifted Poisson structure, which takes the form of the familiar Schouten-Nijenhuis bracket of polyvector fields.

The takeaway is that the derived critical locus of a functional $S : M \to \RR$ has the structure of a $(-1)$-shifted symplectic space.
This will be the starting point for our definition of a theory in the BV formalism in the general setting.
 
In all non-trivial examples the space of fields $\sE$ is infinite dimensional and we must be careful with what functionals $S$ we allow.
The space of fields we consider will always have a natural topology, and we will choose functionals that are continuous with respect to it. 
We include a discussion of our convention for infinite dimensional vector spaces including duals and spaces of functionals in the Appendix. 

In general, the space of fields of a field theory is equal to the space of smooth sections of a $\ZZ$-graded vector bundle $E \to X$ on a manifold $\sE = \Gamma(X, E)$. 
The $\ZZ$-grading is the cohomological, or BRST \footnote{Named after Becchi, Rouet, Stora, Tyutin, for which our approach to field theory is greatly influenced by their original mathematical approach to quantization.}, grading of the theory.

\subsubsection{Local functionals}

The class of functionals $S : \sE \to \RR$ defining the classical theories we consider are required to be {\em local}, or given by the integral of a Lagrangian density. 
We define this concept now.

Let $D_X$ denote the sheaf of smooth differential operators on $X$. 
If $E$ is any graded vector bundle on $X$ let ${\rm Jet}(E)$ denote its bundle of $\infty$-jets. 
This is a smooth vector bundle, albeit infinite rank, on $X$ whose fiber over $y \in X$ can be identified with
\ben
E_y \times \CC[[x_1, \ldots, x_n]] .
\een
Here, $\{x_i\}$ is a formal coordinate near $y$. 
This object is given the natural structure of a pro object in the category of vector bundles.
We let $J(E)$ denote the associated sheaf of smooth sections.
It is well-known that ${\rm Jet}(E)$ is equipped with a natural flat connection rendering $J(E)$ with the structure of a smooth $D_X$-module.

In the Appendix we define the algebra of functions $\sO(\sE(X))$ on the space of global sections $\sE(X)$.
This is the completed symmetric algebra on the linear dual of $\sE(X)$, where the tensor product and dual are interpreted in the appropriate topological sense. 
Likewise, there is the space of reduced functionals $\sO_{red}(\sE(X)) = \sO(\sE(X)) / \RR$. 
It is the quotient of all functionals by the constant polynomial functions. 

The space $\sO_{red}(J(E))$ inherits a natural $D_X$-module structure from $J(E)$. 
We refer to $\sO_{red}(J(E))$ as the space of {\em Lagrangians} on the vector bundle $E$. 
Every element $F \in \sO_{red}(J(E))$ can be expanded as $F = \sum_n F_n$ where each $F_n$ is an element 
\ben
F_n \in {\rm Hom}_{C^\infty_X} (J(E)^{\tensor n}, C^\infty_X)_{S_n} \cong {\rm PolyDiff}(\sE^{\tensor n}, C^\infty(X))_{S_n}
\een
where the right-hand side is the space of polydifferential operators.
The proof of the isomorphism on the right-hand side can be found in Chapter 5 of \cite{CosRenorm}.
We refer to $\sO_{red}(J(E))$ as the (left) $D_X$-module of {\em Lagrangians} on the vector bundle $E$. 

A local functional is given by a Lagrangian densities modulo total derivatives.
The mathematical definition is the following.

\begin{dfn} \label{dfn: local fnl}
Let $E$ be a graded vector bundle on $X$.
Define the sheaf of {\em local functionals} on $X$ to be
\ben
\oloc(\sE) = {\rm Dens}_X \tensor_{D_X} \sO_{red}(J(E)),
\een
where we use the natural right $D_X$-module structure on densities.
\end{dfn}

Note that we always consider local functionals coming from Lagrangians modulo constants. 
We will not be concerned with local functions associated to constant Lagrangians. 

From the expression for functionals in Lemma \ref{lem: fnls} we see that integration defines an inclusion of sheaves
\be\label{local inclusion}
i : \oloc(\sE) \hookrightarrow \sO_{red}(\sE_c) .
\ee
Often times when we describe a local functional we will write down its value on test compactly supported sections, then check that it is given by integrating a Lagrangian density, which amounts to lifting the functional along $i$. 

\subsubsection{The definition of a classical field theory}

Before giving the definition, we need to recall what the proper notion of a shifted symplectic structure is in the geometric setting that we work in.

\begin{dfn}\label{dfn: symplectic}
Let $E$ be a graded vector bundle on $X$.
A $k$-{\em shifted symplectic structure} is an isomorphism of graded vector spaces
\ben
E \cong_{\omega} E^![k] = \left({\rm Dens}_X \tensor E^\vee\right)[k]
\een
that is graded anti-symmetric.
\end{dfn}

If $\omega^*$ is the formal adjoint of the isomorphism $\omega^* : E \cong E^![k]$, anti-symmetry amounts to the condition $\omega^* = - \omega$. 
In general, $\omega$ does {\em not} induces a Poisson structure on the space of all functionals $\sO(\sE)$. 
This is because, as we have seen above, elements of this space are given by distributional sections and hence we cannot pair elements with overlapping support.
The symplectic structure does, however, induce a Poisson bracket on {\em local} functionals. \footnote{Note that $\oloc(\sE)$ is not a shifted Poisson algebra since there is no natural commutative product.}
We will denote the bracket induced by a shifted symplectic structure by $\{-,-\}$. 

We are now ready to give the precise definition of a classical field theory.

\begin{dfn}[\cite{CG2} Definition 5.4.0.3] \label{dfn: classical}
A {\em classical field theory} in the BV formalism on a smooth manifold $X$ is a $\ZZ$-graded vector bundle $E$ equipped with a $(-1)$-shifted symplectic structure together with a local functional $S \in \oloc(\sE)$ such that:
\begin{enumerate}
\item the functional $S$ satisfies the {\em classical master equation} 
\ben
\{S, S\} = 0;
\een
\item $S$ is at least quadratic, so we can write it (in a unique way) as 
\ben
S(\varphi) = \omega(\varphi, Q \varphi) + I(\varphi)
\een
where $Q$ is a linear differential operator such that $Q^2 = 0$, and  $I \in \oloc(\sE)$ is at least cubic;
\item the complex $(\sE, Q)$ is elliptic.
\end{enumerate}
\end{dfn}

In the physics literature, the operator $Q$ is known as the linearized BRST operator, and $\{S,-\} = Q + \{I,-\}$ is the full BRST operator.
Ellipticity of the complex $(\sE,Q)$ is a technical requirement that will be very important in our approach to the issue of renormalization in perturbative quantum field theory.
The classical master equation is equivalent to
\ben
Q I + \frac{1}{2} \{I,I\} = 0 .
\een

A {\em free theory} is a classical theory with $I = 0$ in the notation above. 
Thus, a free theory is a simply an elliptic complex equipped with a $(-1)$-shifted symplectic pairing where the differential in the elliptic complex is graded skew-self adjoint for the pairing.  

Although the space $\sO(\sE)$ does not have a well-defined shifted Possoin bracket induced from the symplectic pairing, the operator $\{S,-\} : \sO(\sE) \to \sO(\sE)[1]$ {\em is} well-defined since $S$ is local by assumption. 
By assumption, it is also square zero. 
The complex of global classical observables of the theory is defined by
\ben
\Obs^{\cl}_{\sE}(X) = (\sO(\sE(X)), \{S,-\}) .
\een
This complex is the field theoretic replacement for functions on the derived locus of $S$ from the beginning of this section.
Although it does not have a $P_0$-structure, there is a subspace that does. 
This is sometimes referred to as the {\em BRST} complex in the physics literature.

%Suppose that $E$ is the graded vector bundle underlying the data of a free theory in the BV formalism.
%It is shown in \brian{ref} \cite{CG2} that the inverse of the $(-1)$-shifted symplectic form induces a bracket $\{-,-\}$ of cohomological degree $+1$ on local functionals $\oloc(\sE)$. 
%This bracket satisfies a graded version of the Jacobi identity, and is compatible with the BV differential $Q$.
%In summary, the shift of local functionals $\oloc(\sE)[-1]$ inherits the structure of a dg Lie algebra. 

\subsubsection{A description using $L_\infty$ algebras}

There is a completely equivalent way to describe a classical field theory that helps to illuminate the mathematical meaningfulness of the definition given above. 
The requisite concept we need to introduce is that of a {\em local Lie algebra} (or local $L_\infty$ algebra).

First, recall that an $L_\infty$ algebra is a modest generalization of a dg Lie algebra where the Jacobi identity is only required to hold up to homotopy.
The data of an $L_\infty$ algebra is a graded vector space $V$ with, for each $k \geq 1$, a $k$-ary bracket
\ben
\ell_k : V^{\tensor k} \to V[2-k]
\een
of cohomological degree $2-k$. 
These maps are required to satisfy a series of conditions, the first of which says $\ell_1^2 = 0$.
The next says that $\ell_2$ is a bracket satisfying the Jacobi identity up to a homotopy given by $\ell_3$.
For a detailed definition see we refer the reader to \cite{StasheffDG, GetzlerLie}.

We now give the definition of a local $L_\infty$ algebra on a manifold $X$.
This has appeared in Chapter 4 of \cite{CG2}. 

\begin{dfn}
A {\em local $L_\infty$ algebra} on $X$ is the following data:
\begin{itemize}
\item[(i)] a $\ZZ$-graded vector bundle $L$ on $X$, whose sheaf of smooth sections we denote $\sL^{sh}$, and
\item[(ii)] for each positive integer $n$, a polydifferential operator in $n$ inputs
\ben
\ell_n : \underbrace{\sL^{sh} \times \cdots \times \sL^{sh}}_{\text{$n$ times}} \to \sL^{sh}[2-n]
\een
\end{itemize}
such that the collection $\{\ell_n\}_{n \in \NN}$ satisfy the conditions of an $L_\infty$ algebra.
In particular, $\sL$ is a sheaf of $L_\infty$ algebras. 
\end{dfn}

The simplest example of a local Lie algebra starts with the data of an ordinary Lie algebra $\fg$. 
We can then take the constant bundle $\ul{\fg}_X$ with fiber $\fg$. 
The Lie bracket on $\fg$ extends to define the structure of a local Lie algebra.
In this case, the sheaf of Lie algebras is $C^\infty_X \tensor \fg$.  
Another important example of a local Lie algebra is given by the Lie algebra of vector fields ${\rm Vect}(X)$ on a smooth manifold. 
The Lie bracket of vector fields is a bidifferential operator on the tangent bundle and this equips the sheaf of sections with the structure of a sheaf of Lie algebras.
%We will study the holomorphic version of this local Lie algebra in Chapter \ref{chap: symmetries}.

Just as in the case of an ordinary graded vector bundle, we can discuss local functionals on a local Lie algebra $L$. 
In this case, the $L_\infty$ structure maps give this the structure of a sheaf of complexes, providing a local version of the Chevalley-Eilenberg cochain complex. 
Indeed, the $\infty$-jet bundle $J L$ is an $L_\infty$ algebra object in $D_X$-modules and so we can define the $D_X$-module of reduced Chevalley-Eilenberg cochains $\cred^*(J L)$. 
Mimicking the definition above, we arrive at the following local version of Lie algebra cohomology that will come up again and again in this thesis.

\begin{dfn}
Let $L$ be a local Lie algebra. 
The local Chevalley-Eilenberg cochain complex is the sheaf of cochain complexes
\ben
\cloc^*(\sL) = {\rm Dens}_X \tensor_{D_X} \cred^*(L) .
\een
We denote the global sections by $\cloc^*(\sL(X))$. 
\end{dfn}

\begin{rmk}
Concretely, a section $I$ of $\cloc^*(\sL)$ supported on $U \subset X$ is a sum of cochains of the form
\[
\phi \mapsto \int_U (D_{1} \phi_1 \cdots  D_{n} \phi_n)  \; \dvol_X
\]
where $\phi$ is a compactly supported section of $L$ over $U$ and where $D_i$ are differential operators $\sL \to C^\infty_X$
\end{rmk}

The {\em local cohomology} of a local Lie algebra is the cohomology of the local CE complex, which we will denote $H^*_{\rm loc}(\sL(X))$. 

\begin{rmk}
We have already remarked that for a graded vector bundle $E$ there is an embedding $\oloc(\sE) \hookrightarrow \sO_{red}(\sE)$.
This translates to an embedding of sheaves of cochain complexes $\cloc^*(\sL) \hookrightarrow \cred^*(\sL_c)$ for any local Lie algebra $\sL$. 
In the case of vector fields, there is a related cochain complex that has been studied extensively in the context of characteristic classes of foliations \cite{Fuks, Guillemin, LosikDiag, Bernstein}. 
Suppose, for simplicity, that $X$ is a compact smooth manifold.
If ${\rm Vect}(X)$ is the Lie algebra of vector fields on $X$ then the (reduced) {\em diagonal cochain complex} is the subcomplex 
\ben
{\rm C}^*_{\Delta,\rm red}({\rm Vect}(X)) \subset \cred^*({\rm Vect}(X))
\een
consisting of cochains $\varphi : {\rm Vect}(X)^{\tensor k} \to \CC$ satisfying $\varphi(X_1,\ldots,X_k) = 0$ if $\bigcap_{i=1}^k {\rm Supp}(X_i) = \emptyset$. 
That is, the cocycle vanishes unless all of the supports of the inputs overlap nontrivially. 
The inclusion of the local cochain complex $\cloc^*({\rm Vect}(X)) \subset \cred^*({\rm Vect}(X))$ factors through this subcomplex to give a sequence of inclusions
\ben
\cloc^*({\rm Vect}(X)) \hookrightarrow {\rm C}^*_{\Delta,\rm red}({\rm Vect}(X)) \hookrightarrow \cred^*({\rm Vect}(X)) .
\een
This is because the cochain of ${\rm Vect}(X)$ defined from a local cochain involves the integral of local operators applied to the inputs.
\end{rmk}

It turns out that the definition of a classical field theory can be repackaged in terms of certain structures on a local $L_\infty$ algebra.
The first piece of data we need to transport to the $L_\infty$ side is that of a symplectic pairing. 
The underlying data of a local $L_\infty$ algebra $L$ is a graded vector bundle. 
In Definition \ref{dfn: symplectic} we have already defined a $k$-shifted symplectic pairing. 
On the local Lie algebra sign, we ask for $k=-3$ shifted symplectic structures that are also invariant for the $L_\infty$ structure maps. 

Also, an important part of a classical field theory is ellipticity. 
We say a local $L_\infty$ algebra is {\em elliptic} if the complex $(\sL, \d = \ell_1)$ is an elliptic complex.

\begin{prop}[\cite{CG2} Proposition 5.4.0.2]
The following structures are equivalent:
\begin{enumerate}
\item a classical field theory in the BV formalism $(\sE, \omega, S)$;
\item an elliptic local Lie algebra structure on $L = E [1]$ equipped with a $(-3)$-shifted symplectic pairing.
\end{enumerate}
\end{prop}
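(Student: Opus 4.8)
The plan is to establish the proposition as the cyclic, or symplectic, version of the familiar dictionary between $L_\infty$ algebras and action functionals: I would first record the purely algebraic correspondence and then verify that the analytic hypothesis (ellipticity) and the locality hypothesis match on the two sides. The first step is a degree count pinning down the shift. Writing $L = E[-1]$, so that $\sE = \sL[1]$, one checks that an identification $E \cong E^![k]$ transports to $L \cong L^![k-2]$ under this shift, since moving a graded bundle by $[-1]$ lowers the degree of a symplectic pairing by two. Hence a $(-1)$-shifted symplectic structure on $E$ is precisely the data of a $(-3)$-shifted symplectic structure on $L$, and the graded anti-symmetry of $\omega$ corresponds to the graded-cyclic invariance of the induced pairing $\langle -,-\rangle$ on $\sL$.

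Next I would build the correspondence in both directions, working throughout at the level of local functionals, where the bracket $\{-,-\}$ actually lives. Starting from a classical theory, decompose $S = \omega(\varphi, Q \varphi) + I$ as in Definition \ref{dfn: classical}, set $\ell_1 := Q$ (regarded as acting on $\sL = \sE[-1]$), and for $n \geq 2$ define the higher bracket $\ell_n$ by contracting the $(n+1)$-homogeneous component of $I$ against $\omega^{-1}$. Non-degeneracy of $\omega$ guarantees that $\ell_n$ lands back in $\sL$ rather than its dual, and the fact that $I$ is scalar-valued forces each $\ell_n$ to be graded-cyclically symmetric for $\langle-,-\rangle$, which is exactly the invariance condition. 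Conversely, from an invariant $(-3)$-shifted symplectic $L_\infty$ structure I would reassemble
\[
S(\varphi) = \sum_{n \geq 1} \frac{1}{(n+1)!} \, \langle \varphi, \ell_n(\varphi, \ldots, \varphi) \rangle ,
\]
whose quadratic part recovers the kinetic term $\omega(\varphi, Q\varphi)$ and whose remaining terms are at least cubic. The core computation is then the identity
\[
\{S, S\} = 0 \quad \Longleftrightarrow \quad \text{the $L_\infty$ relations hold for } \{\ell_n\} ;
\]
using the equivalent form $Q I + \tfrac12 \{I, I\} = 0$, the term quadratic in $I$ reproduces the $L_\infty$ Jacobiators among the higher brackets while the $Q$-linear term encodes the compatibility of $\ell_1$ with them.

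Finally I would match the residual conditions. Ellipticity transfers immediately, since $(\sL, \ell_1)$ and $(\sE, Q)$ are the same complex up to the shift; and locality of $S$---its expansion through polydifferential operators, as in the discussion preceding Definition \ref{dfn: local fnl}---corresponds exactly to each $\ell_n$ being a polydifferential operator, i.e. to $\sL$ being a \emph{local} $L_\infty$ algebra. This last point is what promotes the construction from an equivalence of global functionals to an isomorphism of sheaves $\oloc(\sE) \cong \cloc^*(\sL)$. I expect the genuine difficulty to be bookkeeping rather than conceptual: one must propagate Koszul signs and the combinatorial factors $\tfrac{1}{(n+1)!}$ carefully enough that graded-cyclic invariance of $\langle-,-\rangle$ reproduces the classical master equation exactly, and---as stressed just after Definition \ref{dfn: symplectic}---one must verify at every stage that the contractions defining the $\ell_n$ and the operator $\{S,-\}$ remain inside the local subspace, since $\omega$ does \emph{not} furnish a bracket on all of $\sO(\sE)$.
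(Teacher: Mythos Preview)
Your outline matches the paper's sketch closely: both transport the symplectic shift, extract the brackets $\ell_n$ from the Taylor expansion of $S$ using the nondegeneracy of $\omega$, identify the classical master equation with the $L_\infty$ relations, and observe that ellipticity and locality transfer directly. The paper packages the extraction step a bit differently, regarding the exterior derivative $\d S$ as a section of $\cloc^*(\sL, \sL^![-1]) \cong \cloc^*(\sL, \sL[2])$ and hence as a Maurer--Cartan element of the deformation complex of $\sL$, rather than contracting homogeneous components of $I$ against $\omega^{-1}$ as you do; these are the same operation, and your version has the advantage of making the reverse direction and the locality check explicit.

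One remark on the shift: you take $L = E[-1]$ (so $\sE = \sL[1]$), while the statement and the paper's proof write $L = E[1]$. Under the paper's convention $(V[k])^i = V^{i+k}$, your choice is the one that actually carries a $(-1)$-shifted pairing on $E$ to a $(-3)$-shifted pairing on $L$; with $L = E[1]$ one would obtain $L \cong L^![+1]$ instead. So the discrepancy is a sign slip in the paper rather than in your argument.
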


\begin{proof} (Sketch) 
The underlying graded vector bundle of the space of fields $\sE$ is $E$ and we obtain the bundle underlying the local $L_\infty$ algebra by shifting this down $L = E[1]$. 
The $(-1)$-shifted symplectic structure on $E$ transports to a $(-3)$-shifted on on $L$. 
The $L_\infty$ structure maps for $L$ come from the Taylor components of the action functional $S$. 
The exterior derivative of $S$ is a section
\ben
\d S \in \cloc^*(\sL, \sL^![-1]),
\een
where on the right-hand side we have zero differential.
The Taylor components are of the form $(\d S)_n : \sL^{\tensor n} \to \sL^![-1]$. 
Using the shifted symplectic pairing we can identify these Taylor components with maps $(\d S)_n : \sL^{\tensor n} \to \sL[2]$. 
Thus, $\d S$ can be viewed as a section of $\cloc^*(\sL, \sL[2])$. 
This is precisely the space controlling deformations of $\sL$ as a local Lie algebra.
One checks immediately that the classical master equation is equivalent to the fact that $\d S$ is a derivation, hence it determines the structure of a local Lie algebra. 
The first Taylor component $\ell_1$ is precisely the operator $Q$ before, so ellipticity of $(\sE, Q)$ is equivalent to ellipticity of $(\sL, \ell_1)$. 
\end{proof}

\subsection{Free holomorphic field theories}

In this section we proceed with the general definition of a classical holomorphic field theory. 
In complex dimension one, this definition has appeared in Section 3.2 of \cite{LiVertex}, where it was referred to as a ``two-dimensional chiral theory". 
The formulation here can be seen as a straightforward generalization of the definition of a chiral theory on a Riemann surface to arbitrary complex manifolds. 

Throughout this section, we fix a complex manifold $X$ of complex dimension $d$. 
We start with the definition of a {\em free} holomorphic field theory on $X$, from there we will go on to describe how to incorporate interactions. 

The essential information that governs a classical field theory are its equations of motion. 
For a free theory, the equations of motion are linear in the space of fields.
At least classically, the setting of free theories can essentially be reduced to the study linear partial differential equations.

First, we must come to terms with the fields of a holomorphic theory. 
Just as in the case of an ordinary field theory, they will arise as sections of some $\ZZ$-graded vector bundle on $X$.
The $\ZZ$-grading plays the same role as in the usual setting, it counts the BRST, or ghost, degree. 
We will also refer to this as the cohomological degree.
For a {\em holomorphic theory} the crucial step is that we impose that this graded vector bundle be holomorphic.  
By a holomorphic $\ZZ$-graded vector bundle we mean a $\ZZ$-graded vector bundle $
V^\bullet = \oplus_i V^i [-i]$ (which we will usually abbreviate simply as $V$) such that each graded piece $V^i$ is a holomorphic vector bundle (here $V^i$ is in cohomological degree $+i$).
Thus, in order to define a holomorphic field theory on a complex manifold $X$ we start with the data:

\begin{itemize}
\item[(1)] a $\ZZ$-graded holomorphic vector bundle $V^\bullet = \oplus_i V^i [-i]$ on $X$, so that the finite dimensional holomorphic vector bundle $V^i$ is in cohomological degree $i$. 
\end{itemize}

\begin{rmk}
For supersymmetric theories it may be desirable to include an additional $\ZZ/2$, or fermionic, grading into the data of the space of fields, but we do not consider that here.
\end{rmk}

A free classical theory is made up of a space of fields as above together with the data of a linearized BRST differential $Q^{BRST}$ and a shifted symplectic pairing of cohomological degree $-1$. 
Ordinarily, the BRST operator is simply a differential operator on the underlying vector bundle defining the fields. 
For the class of theories we are considering, we require this operator be holomorphic. 
For completeness, we briefly recall this notion.

Suppose that $E$ and $F$ are two holomorphic vector bundles on $X$.
Note that the Hom-bundle ${\rm Hom}(E,F)$ inherits a natural holomorphic structure. 
By definition, a {\em holomorphic differential operator of order $m$} is a linear map
\ben
D : \Gamma^{hol}(X ; E) \to \Gamma^{hol}(X ; F)
\een
such that, with respect to a holomorphic coordinate chart $\{z_i\}$ on $X$, $D$ can be written as
\be\label{local holomorphic}
D|_{\{z_i\}} = \sum_{|I| \leq m} a_I (z) \frac{\partial^{|I|}}{\partial z_I}
\ee
where $a_I(z)$ is a local holomorphic section of ${\rm Hom}(E,F)$.
Here, the sum is over all multi-indices $I = (i_1,\ldots, i_d)$ and 
\ben
\frac{\partial^{|I|}}{\partial z_I} := \prod_{k=1}^d \frac{\partial^{i_k}}{\partial z_k^{i_k}} . 
\een 
The length of the multi-index $I$ is defined by $|I| := i_1 + \cdots + i_d$. 

\begin{eg}
The most basic example of a holomorphic differential operator is the holomorphic de Rham operator $\partial$. 
For each $1 \leq \ell \leq d = \dim_\CC(X)$, it is a holomorphic differential operator from $E = \wedge^\ell T^{1,0*}X$ to $F = \wedge^{\ell+1} T^{1,0*}X$ which on sections is
\ben
\partial : \Omega^{\ell, hol}(X) \to \Omega^{\ell+1, hol}(X) .
\een
Locally, of course, it has the form
\ben
\partial = \sum_{i = 1}^{d} (\d z_i \wedge (-)) \frac{\partial}{\partial z_i},
\een
where $\d z_i \wedge (-)$ is the vector bundle homomorphism $\wedge^\ell T^{1,0*}X \to \wedge^{\ell+1} T^{1,0*}X$ sending $\alpha \mapsto \d z_i \wedge \alpha$. 
\end{eg}

The next piece of data we fix is:
\begin{itemize}
\item[(2)] a square-zero holomorphic differential operator 
\ben
Q^{hol} : \sV^{hol} \to \sV^{hol}
\een
of cohomological degree $+1$. 
Here $\sV^{hol}$ denotes the holomorphic sections of $V$. 
\end{itemize}

Finally, to define a free theory we need the data of a shifted symplectic pairing. 
For reasons to become clear in a moment, we must choose this pairing to have a strange cohomological degree. 
The last piece of data we fix is:
\begin{itemize}
\item[(3)] an invertible bundle map
\ben
(-,-)_V : V \tensor V \to K_X[d-1]
\een
Here, $K_X$ is the canonical bundle on $X$. 
\end{itemize}

The definition of the fields of an ordinary field theory are the {\em smooth} sections of the vector bundle $V$. 
In our situation this is a silly thing to do since we lose all of the data of the complex structure we used to define the objects above.
The more natural thing to do is to take the {\em holomorphic} sections of the vector bundle $V$. 
By construction, the operator $Q^{hol}$ and the pairing $(-,-)_V$ are defined on holomorphic sections, so on the surface this seems reasonable.
The technical caveat that the sheaf of holomorphic sections does not satisfy certain conditions necessary to study renormalization and observables in our approach to QFT. 
For more details on this see Remark \ref{rmk: hol sec bad}.
The solution to this problem is to take a natural resolution of holomorphic sections in order to relate to the usual definition of a classical BV theory.

Given any holomorphic vector bundle $V$ we can define its {\em Dolbeault complex} $\Omega^{0,*}(X , V)$ with its Dolbeault operator 
\ben
\dbar : \Omega^{0,p}(X, V) \to \Omega^{0,p+1}(X, V) .
\een
Here, $\Omega^{0,p}(X, V)$ denotes smooth sections of the vector bundle $\Wedge^p (T^{0,1})^{\vee} X \tensor V$. 
For any $U \subset X$ open subset, the complex $\Omega^{0,*}(U,V)$ is defined. 
In this way, we obtain a natural sheaf of complexes on $X$, that we denote by $\Omega^{0,*}_X(V)$. 
The fundamental property of the Dolbeault complex is that by Dolbeault's Theorem it provides a resolution for the sheaf of holomorphic sections: 
\ben
\sV^{hol} \to \Omega^0_X(V) \xto{\dbar} \Omega^{0,1}_X(V) \xto{\dbar} \cdots .
\een 

We now take a graded holomorphic vector bundle $V = V^{\bullet}$ as above, equipped with the differential operator $Q^{hol}$. 
The Dolbeault resolution $\Omega^{0,*}(X, V^\bullet)$ is now equipped with two differentials $Q^{hol}$ and $\dbar$. 
The complex of fields is the totalization of this complex:
\ben
\sE_V = {\rm Tot}\left(\Omega^{0,*}(X, V), \dbar, Q^{hol}\right) = \left(\Omega^{0,*}(X, V), \dbar + Q^{hol}\right) .
\een
The operator $\dbar + Q^{hol}$ will be the linearized BRST operator of our theory.
By assumption, we have $[\dbar, Q^{hol}] = 0$ so that $(\dbar + Q^{hol})^2 = 0$ and hence the fields still define a complex. 

By construction, $\sE_V$ has the natural structure of a sheaf of complexes.
When we want to consider global sections over $X$ we use the notation $\sE_V(X)$. 
There is similarly a cosheaf of compactly supported sections $\sE_{V,c}$ whose underlying graded is the compactly supported Dolbeault forms $\Omega^{0,*}_c(X, V)$. 

The pairing $(-,-)_V$ defines a pairing on $\sE_V$ as follows.
The thing to observe here is that $(-,-)_V$ extends to the Dolbeault complex in a natural way: we simply combine the wedge product of forms with the pairing on $V$.
We obtain the following composition. 
\ben
\xymatrix{
\sE_{V,c} \tensor \sE_{V,c} \ar[r]^-{(-,-)_V} \ar@{.>}[dr]_-{\omega_V} & \Omega^{0,*}_c(X , K_X) [d-1] \ar[d]^-{\int_X} \\
& \CC[-1] .
}
\een
The top Dolbeault forms with values in the canonical bundle $K_X$ are precisely the top forms on the smooth manifold $X$, and we use the integration map $\int_X : \Omega^{d,d}_c(X) \to \CC$. 
We note that integration is of cohomological degree $d$, as exhibited in the diagram. 

We arrive at the following definition. 

\begin{dfn/lem}\label{dfn hol free theory}
A {\em free holomorphic theory} on a complex manifold $X$ is the data $(V, Q^{hol}, (-,-)_V)$ as in (1), (2), (3) above such that $Q^{hol}$ is a square zero holomorphic differential operator that is graded skew self-adjoint for the pairing $(-,-)_V$.
The triple $(\sE_V, Q_V = \dbar + Q^{hol}, \omega_V)$ defines a free BV theory in the usual sense.
\end{dfn/lem}

The usual prescription for writing down the associated action functional holds in this case.
If $\varphi \in \Omega^{0,*}(X , V)$ denotes a field the action is
\ben
S(\varphi) = \int_X \left(\varphi, (\dbar + Q^{hol}) \varphi \right)_V .
\een

We arrive at an example, which is a higher dimensional version of a familiar chiral CFT. 

\begin{eg}\label{eg bg} {\em The free $\beta\gamma$ system}.
Suppose that 
\ben
V = \ul{\CC} \oplus K_X [d - 1] .
\een
Let $(-,-)_V$ be the pairing
\ben
(\ul{\CC} \oplus K_X) \tensor (\ul{\CC} \oplus K_X) \to K_X \oplus K_X \to K_X 
\een 
sending $(\lambda, \mu) \tensor (\lambda',\mu') \mapsto (\lambda \mu', \lambda'\mu) \mapsto \lambda\mu' + \lambda' \mu$.
In this example we set $Q^{hol} = 0$. 
One immediately checks that this is a holomorphic free theory as above.
The space of fields can be written as
\ben
\sE_V = \Omega^{0,*}(X) \oplus \Omega^{d,*}(X)[d - 1] .
\een 
We write $\gamma \in \Omega^{0,*}(X)$ for a field in the first component, and $\beta \in \Omega^{d,*}(X)[d - 1]$ for a field in the second component. 
The action functional reads
\ben
S(\gamma + \beta) = \int_{X} \beta \wedge \dbar \gamma .
\een 
When $d = 1$ this reduces to the ordinary chiral $\beta\gamma$ system from conformal field theory. 
The $\beta\gamma$ system is a bosonic version of the ghost $bc$ system that appears in the quantization of the bosonic string, see Chapter 6 of \cite{Polchinski1}.
For instance, we will see how this theory is the starting block for constructing general holomorphic $\sigma$-models. 
\end{eg}

Of course, there are many variants of the $\beta\gamma$ system that we can consider.

\begin{eg}{\em Coefficients in a bundle}
For instance, if $E$ is {\em any} holomorphic vector bundle on $X$ we can take 
\ben
V = E \oplus K_{\CC^d} \tensor E^\vee [d-1]
\een
where $E^\vee$ is the linear dual bundle. 
The pairing is constructed as in the case above where we also use the evaluation pairing between $E$ and $E^\vee$.
In thise case, the fields are $\gamma \in \Omega^{0,*}(X, E)$ and $\beta \in \Omega^{d,*}(X, E^\vee)[d-1]$. 
The action functional is simply
\ben
S(\gamma + \beta) = \int {\rm ev}_E(\beta \wedge \dbar \gamma) .
\een 
Here, ${\rm ev}_E$ stands for the evaluation pairing between sections of $E$ and sections of the dual $E^\vee$.
When $E$ is a tensor bundle of type $(r,s)$ this theory is a bosonic version of the $bc$ ghost system of spin $(r,s)$. 
For a general bundle $E$ we will refer to it as the $\beta\gamma$ system with coefficients in the bundle $E$. 
\end{eg}
%In \cite{BWhCDO} we study the quantization of the {\em curved} higher dimensional $\beta\gamma$ system and its relationship to complex invariants generalizing that of the elliptic genus. 

%\begin{eg}
%{\em The free chiral scalar}.
%Another basic example is the free chiral scalar. 
%This is a bit outside\brian{finish}
%Let $X$ be a complex manifold with Hermitian metric $g$. 
%Let $V = \ul{\CC}$, the trivial vector bundle. 
%\brian{do this}
%\end{eg}

\begin{rmk} \label{rmk: hol sec bad}
We will only work with a holomorphic theory prescribed by the data $(V, (-,-)_V, Q^{hol})$ through its associated BV theory.
One might propose a definition of a BV theory in the analytic category based off of holomorphic sections of holomorphic vector bundles. 
There are numerous technical reason why this approach fails in our approach to QFT.
In particular, the sheaf of holomorphic sections of a holomorphic bundle is not fine, and there do not exists partitions of unity in general. 
In addition, there is no holomorphic analog of compactly supported smooth functions. 
Compactly supported functions are imoportant when considering locality in field theory. 
For instance, the main result of \cite{CG2} is that the observables of any QFT form a factorization algebra, which is heavily on the existence of sections with compact support. 
\end{rmk}

\subsection{Interacting holomorphic field theories} \label{sec: interacting}

\def\olochol{\sO_{\rm loc}^{hol}}

We proceed to define what an interacting holomorphic theory is.
A general interacting field theory with space of fields $\sE$ is prescribed by a functional
\ben
S : \sE \to \CC
\een
that satisfies the {\em classical master equation}.
The key technical condition is that this functional must, in addition, be {\em local}.

%Recall, the sheaf of local functionals on $\sE = \Gamma(E)$ is defined as the sheaf of Lagrangian densities
%\ben
%\oloc(\sE) = {\rm Dens}_M \tensor_{D_M} \sO_{red}(JE) .
%\een
%In the expression above $JE$ stands for the sheaf of smooth sections of the $\infty$-jet bundle ${\rm Jet}(E)$ which has the structure of a $D_X$-module.

Since $X$ is a complex manifold, it makes sense to consider the sheaf of holomorphic differential operators that we denote by $D_X^{hol}$. 
If $V$ is a holomorphic vector bundle we define the bundle of holomorphic $\infty$-jets ${\rm Jet}^{hol}(V)$ as follows \cite{GriffithsGreen, WongChandler}. 
This is a pro-vector bundle that is holomorphic in a natural way.
The fibers of this infinite rank bundle ${\rm Jet}^{hol}(V)$ are isomorphic to 
\ben
{\rm Jet}^{hol}(V)|_w = V_w \tensor \CC[[z_1,\ldots,z_d]],
\een
where $w \in X$ and where $\{z_i\}$ is the choice of a holomorphic formal coordinate near $w$. 
We denote by $J^{hol} V$ the sheaf of holomorphic sections of this jet bundle.
The sheaf $J^{hol}V$ has the structure of a $D_X^{hol}$-module, that is, it is equipped with a holomorphic flat connection $\nabla^{hol}$.
This situation is completely analogous to the smooth case.
Locally, the holomorphic flat connection on ${\rm Jet}^{hol}(V)$ is of the form
\ben
\nabla^{hol} |_w = \sum_{i=1}^d \d w_i \left(\frac{\partial}{\partial w_i} - \frac{\partial}{\partial z_i}\right),
\een
where $\{w_i\}$ is the local coordinate on $X$ near $w$ and $z_i$ is the fiber coordinate labeling the holomorphic jet expansion.
%Using holomorphic jets we can make a completely analogous definition in our setting.

One natural appearance of the bundle of holomorphic jets is in providing an explicit description of holomorphic differential operators. 
The statement in the smooth category is simply that a differential operator between vector bundles is equivalent to the data of a map of $D$-modules between the associated $\infty$-jet bundles.
In a completely analogous way, holomorphic differential operators are the same as bundle maps between the associated holomorphic jet bundles. 
A similar result holds for {\em poly}differential operators, which we also state.

\begin{lem}
Suppose $V,W$ are holomorphic vector bundles with spaces of holomorphic sections given by $\sV^{hol},\sW^{hol}$ respectively.
There is an isomorphism of sheaves on $X$
\ben
{\rm Diff}^{hol}(\sV^{hol}, \sW^{hol}) \cong {\rm Hom}_{D_X^{hol}} (J^{hol}(V), J^{hol}(W)) .
\een
Similarly, if $V_1,\ldots,V_n,W$ are holomorphic bundles on $X$, there is an isomorphism
\ben
{\rm PolyDiff}^{hol}(\sV_1^{hol} \times \cdots \times \sV_n^{hol}, \sW^{hol}) \cong {\rm Hom}(J^{\rm hol}(V_1) \tensor \ldots \tensor J^{\rm hol}(V_n), W) .
\een
In both cases, the right-hand side denotes the space of homomorphisms of holomorphic $D$-modules that are compatible with the adic topology on jets.
\end{lem}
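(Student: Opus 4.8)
The plan is to mirror the proof of the smooth statement (Chapter 5 of \cite{CosRenorm}), replacing the smooth structure sheaf and $D_X$ throughout by their holomorphic counterparts $\mathcal{O}_X$ and $D_X^{hol}$, and to show that the holomorphic flat connection $\nabla^{hol}$ described above plays exactly the structural role that the canonical flat connection on smooth jets plays in the $C^\infty$ setting. The key object is the universal holomorphic jet prolongation, the Taylor expansion map
\[
j^\infty_V : \sV^{hol} \to J^{hol}(V),
\]
which over a holomorphic coordinate chart $\{z_i\}$ sends a local holomorphic section to the formal power series of its derivatives. I would first record that $j^\infty_V$ is horizontal for $\nabla^{hol}$, i.e.\ it lands in the flat sections, and that it is the universal holomorphic differential operator out of $\sV^{hol}$: every holomorphic differential operator $D : \sV^{hol} \to \sW^{hol}$ factors as $D = \varphi_D \circ j^\infty_V$ for a unique $\mathcal{O}_X$-linear bundle map $\varphi_D : J^{hol}(V) \to W$ that is continuous for the adic topology on jets. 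Unwinding the local form of a holomorphic differential operator given above, the coefficients $a_I(z)$ of $D$ are precisely the components of $\varphi_D$, and adic continuity is exactly the condition that $\varphi_D$ factors through a finite-order truncation $J^{hol}_{\leq m}(V)$, which encodes the finiteness of the order of $D$.

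Second I would establish the coinduction adjunction in the holomorphic category: the functor $J^{hol}(-)$ is right adjoint to the functor forgetting the $D_X^{hol}$-module structure, so that for any $D_X^{hol}$-module $M$ there is a natural isomorphism
\[
{\rm Hom}_{D_X^{hol}}(M, J^{hol}(W)) \cong {\rm Hom}_{\mathcal{O}_X}(M, W),
\]
given by composing a $D$-linear map with the projection to $0$-jets $\pi_W : J^{hol}(W) \to W$, with inverse the flat prolongation determined by $\nabla^{hol}$. Taking $M = J^{hol}(V)$ and combining with the previous step gives the chain
\[
{\rm Hom}_{D_X^{hol}}(J^{hol}(V), J^{hol}(W)) \cong {\rm Hom}^{\rm cont}_{\mathcal{O}_X}(J^{hol}(V), W) \cong {\rm Diff}^{hol}(\sV^{hol}, \sW^{hol}),
\]
which is the asserted isomorphism. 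One checks directly that the composite sends a $D$-module map $\phi$ to the differential operator $\pi_W \circ \phi \circ j^\infty_V$, and conversely prolongs a differential operator to its unique flat, $D$-linear extension.

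For the polydifferential version I would repeat the argument with the source $J^{hol}(V_1) \otimes \cdots \otimes J^{hol}(V_n)$, which carries the tensor-product $D_X^{hol}$-module structure, using the multilinear universal prolongation $j^\infty_{V_1} \otimes \cdots \otimes j^\infty_{V_n}$ together with the same coinduction adjunction for the target; here, as in the single-variable case, the adjunction lets one write the target as $W$ in place of $J^{hol}(W)$, matching the stated form. The adic-continuity condition again cuts the $\mathcal{O}_X$-linear hom down to polydifferential operators of locally finite order in each variable.

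The main obstacle is that, unlike the smooth case, there are no holomorphic partitions of unity and the sheaf $\sV^{hol}$ is not fine (as emphasized in Remark \ref{rmk: hol sec bad}), so none of the standard glue-with-bump-function arguments are available. However, every assertion here is either purely local in a holomorphic chart (the universal property, and the local matching of the coefficients $a_I$ with the components of $\varphi_D$) or is the statement that a locally defined construction is independent of the chart. The latter is precisely what coordinate-independence of $\nabla^{hol}$ guarantees: a change of holomorphic coordinates transforms the Taylor expansion through the horizontal-section identification, so the prolongation, the projection $\pi_W$, and hence the whole correspondence are well defined as maps of sheaves. The remaining care is bookkeeping with the pro-structure and the adic topology on the infinite-rank bundle $J^{hol}(V)$, ensuring that all hom-spaces and the adjunction are taken in the continuous, pro sense; this is where I expect the only genuinely delicate point to lie, though it is identical in form to the smooth treatment in \cite{CosRenorm}.
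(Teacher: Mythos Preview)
The paper does not actually give a proof of this lemma; it is stated as a fact, with the surrounding text simply noting that it is ``completely analogous'' to the smooth statement and that holomorphic jets carry a holomorphic flat connection. Your proposal is therefore not competing with any argument in the paper --- rather, you have supplied the proof the paper omits. The approach you outline (universal prolongation $j^\infty_V$, the coinduction adjunction for $J^{hol}$ as right adjoint to forgetting the $D_X^{hol}$-structure, and the identification of adic continuity with finite order) is the standard one and is correct; it is exactly the holomorphic transcription of the argument in Chapter 5 of \cite{CosRenorm} that the paper implicitly invokes. Your remarks about partitions of unity are apt but, as you note, not actually an obstruction here since every step is local or manifestly coordinate-independent via $\nabla^{hol}$.
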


We will utilize this intepretation of holomorphic jet bundles momentarily.

In ordinary field theory, local functionals are defined as integrals of Lagrangian densities. 
By definition, a Lagrangian density is a density valued functional on the fields that only depends on the fields through its partial derivatives.
In the holomorphic setting we have the following definition.

\begin{dfn}\label{dfn hol lag}
Let $V$ be a vector bundle.
The sheaf of {\em holomorphic Lagrangian densities} on $V$ is
\ben
{\rm Lag}^{hol}(V) = \Omega^{d,hol}_X \tensor_{\sO^{hol}_X} \left(\prod_{n > 0} {\rm Hom}_{\sO^{hol}_X} (J^{hol}(V)^{\tensor n} , \sO^{hol}_X)_{S_n} \right) .
\een
The Hom-space inside the parentheses denotes maps of holomorphic vector bundles respecting the natural filtration on $\infty$-jets.
That is, we require the bundle maps to be continuous with respect to the natural adic topology.
We also take coinvariants for the symmetric group $S_n$.
\end{dfn}

Note that we take the product over $n > 0$ so we do not want to consider Lagrangians that are constant in the fields.

Equivalently, a holomorphic Lagrangian density is of the form $\omega \tensor F$ where $\omega$ is a top holomorphic form and $F$ is a functional $F = \sum_k F_k$ where, for each $k$, the multilinear map
\ben
F_k : \sV^{hol} \times \cdots \times \sV^{hol} \to \sO^{hol}_X
\een
depends only on the holomorphic $\infty$-jet of sections of $V$. 

%\begin{dfn}
%Suppose $V$ is a graded holomorphic vector bundle.
%We define the sheaf of {\em holomorphic} local functionals on $V$ by
%\ben
%\olochol(V) = \Omega^{d,hol}_X \tensor_{D^{hol}_X} \sO_{red}(J^{hol}V) [d]
%\een
%\end{dfn}

The next definition we will need is that of a holomorphic local functional. 
Just as in Definition \ref{dfn: local fnl}, this is given by the sheaf of Lagrangians modulo total derivatives. 
Of course, in this setting we require both the Lagrangians and derivatives to be holomorphic in the appropriate sense. 

\begin{dfn}
\label{dfn: interacting}
The sheaf of {\em holomorphic local functionals} is defined to be the quotient
\be\label{quotient}
\olochol(V) := {\rm Lag}^{hol}(V) / \sT_X^{hol} \cdot {\rm Lag}^{hol}(V),
\ee
where $\sT_X \cdot {\rm Lag}^{hol}(V)$ denotes the subsheaf of holomorphic Lagrangians that are in the image of the Lie derivative by some holomorphic vector field.
Given a holomorphic Lagrangian $I^{hol} \in {\rm Lag}^{hol}(V)$, we denote its class in local functionals by $\int I^{hol} \in \olochol(V)$. 
\end{dfn}

Equivalently, we may express the quotient (\ref{quotient}) using holomorphic $D$-modules in the following way.
The left $D_{X}^{hol}$-module structure on $J^{hol}(V)$ carries over to a left $D_X^{hol}$-module structure on the product 
\ben
\prod_{n > 0} {\rm Hom}_{\sO^{hol}_X} (J^{hol}(V)^{\tensor n}, \sO^{hol}_X) .
\een
Using the natural structure of a right $D_X^{hol}$-module structure on $\Omega^{d,hol}_X$, we obtain an isomorphism
\ben
\olochol(V) = \Omega^{d,hol}_X \tensor_{D_X^{hol}} \left( \prod_{n > 0} {\rm Hom}_{\sO^{hol}_X} (J^{hol}(V)^{\tensor n}, \sO^{hol}_X)_{S_n}\right) .
\een

We use the notion of a holomorphic local functional to formulate the definition of an interacting holomorphic field theory.
Suppose that $V$ is part of the data of a free holomorphic theory $(V, Q^{hol},(-,-)_V)$.
The pairing $(-,-)_V$ endows the space of holomorphic local functionals with a bracket on local functionals. 
Likewise, the operator $Q^{hol}$ determines a differential on local functionals.
These facts are summarized in the following lemma. 

\begin{lem}
Suppose $(V, Q^{hol},(-,-)_V)$ is the data of a free holomorphic theory. 
The pairing $(-,-)_V$ (respectively, operator $Q^{hol}$) defines a bracket $\{-,-\}^{hol}$ (respectively, differential $Q^{hol}$) on $\olochol(V)$ of degree $d-1$ (respectively, degree $+1$).

This yields the structure of a sheaf of dg Lie algebras:
\ben
\left(\olochol(V)[d-1], Q^{hol}, \{-,-\}^{hol}\right)
\een
where $Q^{hol}$ is the differential, and $\{-,-\}^{hol}$ is the Lie bracket. 
\end{lem}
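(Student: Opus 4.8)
The plan is to mirror the construction of the degree $+1$ BV bracket on smooth local functionals, recalled in the first part of this section, and transport it to the holomorphic setting through the holomorphic $D$-module description of $\olochol(V)$ in Definition \ref{dfn: interacting}. The invertibility of the bundle map $(-,-)_V : V \otimes V \to K_X[d-1]$ furnishes a holomorphic identification $V \cong V^\vee \otimes K_X[d-1]$, and dually a holomorphic Poisson kernel obtained by inverting the pairing. First I would define, for holomorphic local functionals $I$ and $J$, the bracket $\{I,J\}^{hol}$ by taking the first variations $\delta I$ and $\delta J$ which, because $I$ and $J$ are built from polydifferential operators on holomorphic jets, take values in $J^{hol}(V)^\vee \otimes \Omega^{d,hol}_X$ and contracting the two resulting slots against this Poisson kernel. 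The bracket then carries cohomological degree $d-1$, reflecting the shift $[d-1]$ in the target of $(-,-)_V$, so that the shifted object $\olochol(V)[d-1]$ is the natural candidate for a dg Lie algebra.

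The key point to verify is that this operation is \emph{well defined}. In contrast to the situation for arbitrary functionals on $\sE_V$, where variational derivatives are distributional and cannot be paired due to overlapping support, the contraction is legitimate here precisely because locality forces the relevant kernels to be holomorphic polydifferential operators, that is, derivatives of the delta distribution supported along the diagonal; contracting two such kernels through the inverse pairing again produces a polydifferential operator, hence a local functional. I would then check that the bracket descends to the quotient defining $\olochol(V)$: since $(-,-)_V$ is a map of holomorphic bundles, its inverse is compatible with the flat connection $\nabla^{hol}$ on holomorphic jets, so the contraction is $D_X^{hol}$-linear and therefore passes to the quotient by $\sT_X^{hol} \cdot {\rm Lag}^{hol}(V)$. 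The same $D$-linearity shows that the output of the contraction is again holomorphic, so that $\olochol(V)$ is closed under $\{-,-\}^{hol}$.

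For the differential, $Q^{hol}$ is a holomorphic differential operator and hence acts on $J^{hol}(V)$ as a $D_X^{hol}$-module endomorphism; precomposition induces an operator on $\olochol(V)$, again written $Q^{hol}$, of degree $+1$, with $(Q^{hol})^2 = 0$ since the operator on $V$ squares to zero. It then remains to assemble the dg Lie axioms. Graded antisymmetry of $\{-,-\}^{hol}$ is inherited from the symmetry properties of $(-,-)_V$ together with the Koszul signs of the shift, and the graded Jacobi identity follows exactly as in the smooth BV case \cite{CosRenorm}, now carried out internally to holomorphic jets: the essential point is that the Poisson kernel is field-independent and $\nabla^{hol}$-flat, so no anomalous terms arise. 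Finally, the identity $Q^{hol}\{I,J\}^{hol} = \{Q^{hol}I, J\}^{hol} \pm \{I, Q^{hol}J\}^{hol}$ is the statement that $Q^{hol}$ is a derivation of the bracket, and here I would use that $Q^{hol}$ is skew self-adjoint for $(-,-)_V$, part of the data of a free holomorphic theory in Definition \ref{dfn hol free theory}: skew self-adjointness means $Q^{hol}$ preserves the pairing, hence preserves the Poisson kernel, hence commutes with contraction. The main obstacle I anticipate is the well-definedness of the bracket together with its descent through the holomorphic $D$-module quotient; once contraction against the flat inverse pairing is seen to be $D_X^{hol}$-linear and to preserve holomorphicity, the remaining axioms are formal consequences of the corresponding identities for the smooth BV bracket.
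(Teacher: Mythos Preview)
Your proposal is correct and essentially equivalent to the paper's proof, though the two are phrased differently. The paper constructs the bracket via \emph{Hamiltonian vector fields}: it observes that every holomorphic local functional $\int I^{hol}$ is Hamiltonian, meaning there is a local derivation $X_I^{hol}$ with $\int I^{hol} = \sum_k X_I^{(k),hol} \vee \omega^{hol}$, and then sets $\{\int I^{hol}, \int J^{hol}\}^{hol} = X_I^{hol}(\int J^{hol})$. Your construction---taking first variations $\delta I, \delta J$ and contracting against the Poisson kernel coming from the inverse of $(-,-)_V$---is exactly how one produces that Hamiltonian derivation, so the two presentations are the same construction viewed from opposite ends.

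Where the paper is terse (it simply asserts that the Jacobi identity and the derivation property for $Q^{hol}$ are ``immediate to verify''), you are more explicit about the mechanisms: the $D_X^{hol}$-linearity of the contraction ensuring descent to the quotient by $\sT_X^{hol}\cdot{\rm Lag}^{hol}(V)$, and the skew self-adjointness of $Q^{hol}$ giving the Leibniz rule. The paper does not spell these out, so your treatment is in fact more complete on those points. Conversely, the paper's Hamiltonian-vector-field language makes the degree count transparent: if $\int I^{hol}$ has degree $\ell$ then $X_I^{hol}$ has degree $\ell + d - 1$, which immediately gives the bracket its shifted degree.
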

\begin{proof}
The operator $Q^{hol}$ extends to an operator on holomorphic Lagrangians in the obvious way. 
Indeed, if a holomorphic Lagrangian is of the form $I^{hol} = \omega \tensor F$ where $F : \sV^{hol} \times \sV^{hol} \to \sO^{hol}$, then we define $Q I^{hol} = \omega \tensor (Q^{hol} F)$.
This descends to an operator on $\olochol(V)$ of cohomological degree $+1$. 

Next, we show how $(-,-)_V$ defines a bracket $\{-,-\}^{hol}$ on the sheaf of holomorphic local functionals. 
Denote by $\omega^{hol} : \sV^{hol} \tensor_{\sO^{hol}} \sV^{hol} \to \Omega^{d,hol}$ the map of $\sO^{hol}$-modules which sends $\varphi_1 \tensor \varphi_2 \mapsto (\varphi_1, \varphi_2)_V$. 
A derivation $X$ of the algebra ${\rm Hom}_{\sO^{hol}} (\sV^{hol}, \sO^{hol})$ can be expanded in components of the form
\[
X^{(k)} : (\sV^{hol})^{\tensor k} \to \sV^{hol} .
\]
Using $\omega^{hol}$, we define the $S_{k+1}$-invariant map of $\sO^{hol}_X$-modules
\[
X^{(k)} \vee \omega^{hol} : (\sV^{hol})^{\tensor (k+1)} \to \Omega^{d,hol}
\]
which sends $\varphi_1 \tensor \cdots \tensor \varphi_{k+1} \mapsto \omega(\varphi_1, X^{(k)} (\varphi_2, \ldots, \varphi_{k+1}))$. 

Any holomorphic local functional $\int I^{hol} \in \olochol(V)$ is Hamiltonian in the sense that there exists a derivation
$X_I^{hol}$ such that $\int I^{hol} = \sum_{k \geq 0} X^{(k),hol}_I \vee \omega^{hol}$. 
Moreover, $X_I^{hol}$ is a local derivation in the sense that it preserves the subspace of local holomorphic functionals. 

Given $\int I^{hol} , \int J^{hol} \in \olochol(V)$, we define 
\[
\left\{\int I^{hol} , \int J^{hol} \right\}^{hol} = X_I^{hol} \left(\int J^{hol} \right) \in \olochol(V) .
\]
Note that if $\int I^{hol}$ is of total cohomological degree $\ell$, then the derivation $X_{I}^{hol}$ is of degree $\ell + d - 1$. 
Thus, the bracket $\{-,-\}^{hol}$ is of degree $-d+1$. 
It is immediate to verify that this bracket satisfies the appropriate graded Jacobi identity and that $Q^{hol}$ acts as a graded derivation. 

%We can assume, without loss of generality, that $\int I^{hol}$ and $\int J^{hol}$ are of polynomial degree $k$ and $\ell$, respectively.
%This means that 
%\begin{align*}
%\int I^{hol} & \in \Omega^{d,hol}_X \tensor_{D_X^{hol}} {\rm Hom}_{\sO^{hol}_X} (J^{hol}(V)^{\tensor k}, \sO^{hol}_X)_{S_n}\\
%\int J^{hol} & \in \Omega^{d,hol}_X \tensor_{D_X^{hol}} {\rm Hom}_{\sO^{hol}_X} (J^{hol}(V)^{\tensor \ell}, \sO^{hol}_X)_{S_n} . 
%\end{align*}
%For simplicity, for the explicit dependence on the fields $\varphi_1,\ldots, \varphi_k \in \sV^{hol}$ we take $\int I^{hol}$ to be
%\[
%\int I^{hol} = \int D_1 (\varphi_1) \cdots D_k (\varphi_k) \omega 
%\]
%where $D_1,\ldots, D_k : \sV^{hol} \to \sO^{hol}_X$ are holomorphic differential operators and $\omega \in \Omega^{d,hol}_X$.
%For the dependence on the fields $\varphi_{k+1}, \ldots, \varphi_{k + \ell}$, we take $\int J^{hol}$ to be
%\[
%\int J^{hol} = \int D_{k+1} (\varphi_{k+1}) \cdots D_{\ell + k} (\varphi_{\ell+k})  \eta
%\]
%where $D_{k+1},\ldots, D_{\ell+k} : \sV^{hol} \to \sO_X^{hol}$ are holomorphic differential operators and $\eta \in \Omega^{d,hol}_X$. 
%
%To 
%\brian{fix all in comments into a proof}

\end{proof}

We can now state the definition of a classical holomorphic theory. 
The definition involves a holomorphic Lagrangian $I^{hol}$ that is at least {\em cubic}.
For brevity, we will make the following definition.

\begin{dfn}\label{dfn: plus}
The subsheaf of cubic holomorphic Lagrangians is
\[
{\rm Lag}^{hol,+}(V) := \Omega^{d,hol}_X \tensor_{\sO^{hol}_X} \prod_{n \geq 3} {\rm Hom} ({\rm Jet}^{hol}(V)^{\tensor n} , K_X)_{S_n}  \subset {\rm Lag}^{hol}(V)
\]
and the corresponding space of local functionals will be denoted $\sO_{\rm loc}^{hol,+}(V)$.
\end{dfn}

%Note that top holomorphic forms have a natural action by the Lie algebra of holomorphic vector fields $\sT_X^{hol}$ via Lie derivative.
%This induces an action of holomorphic vector fields on the space of holomorphic Lagrangians.

% is a homogenous holomorphic Lagrangian then $L_\xi \cdot F$ is the holomorphic Lagrangian defined by
%\ben
%(\varphi_1,\ldots,\varphi_n) \mapsto L_\xi F(\varphi_1,\ldots,\varphi_n) .
%\een

\begin{dfn}
A {\em classical holomorphic theory} on a complex manifold $X$ is the data of a free holomorphic theory $(V, Q^{hol}, (-,-)_V)$ plus a holomorphic Lagrangian
\ben
I^{hol} \in {\rm Lag}^{hol,+}(V)
\een
of cohomological degree $d$, such that the local functional $\int I^{hol} \in \sO_{\rm loc}^{hol,+}(V)$ is a solution to the Maurer-Cartan equation in the dg Lie algebra $(\olochol(V)[d-1], Q^{hol}, \{-,-\}^{hol})$:
\ben
Q^{hol} \int I^{hol} + \frac{1}{2} \left\{\int I^{hol}, \int I^{hol}\right\}^{hol} = 0 .
\een 
%is in the image of a some holomorphic vector field.
%That is, there exists some $\xi \in \sT_X^{hol}$ and $F \in {\rm Lag}^{hol}(V)$ such that $Q^{hol}I^{hol} + \frac{1}{2} \{I^{hol}, I^{hol}\}^{hol} = L_\xi F$.
\end{dfn}

%There is an alternative way to understand the condition on the functional $I^{hol}$ to define a classical holomorphic theory.
%To define it, we introduce the notion of a holomorphic local functional.
%A holomorphic local functional is, by definition, a holomorphic Lagrangian defined up to a total holomorphic derivative.
%Precisely, we have the following definition.
%
%\brian{fix}
%
%Compare this to the definition of ordinary local functionals that we recalled in Definition \ref{dfn: local fnl}.
%
%
%An immediate corollary of this lemma is that the condition for a holomorphic Lagrangian $I^{hol}$ to define a classical theory in Definition \ref{dfn: interacting} is equivalent to the condition that $I^{hol}$ is a Maurer-Cartan element in $\olochol(V)[d-1]$.

%\begin{dfn/lem}
%Let $(V, Q^{hol}, (-,-)_V, I^{hol})$ be the data of an interacting holomorphic theory. 
%Then $Q^{hol} + \{I^{hol},-\}$ equips $\olochol(V)$ with the structure of a sheaf of cochain complexes that we will denote
%\ben
%\Def^{hol}_{V} := \left(\olochol(V), Q^{hol} + \{I^{hol}, -\}^{hol}\right) .
%\een
%\end{dfn/lem}

As in the free case, we proceed to verify that a holomorphic theory defines an interacting classical BV theory in the sense of Definition \ref{dfn: classical}.

The underlying space of fields, as we have already seen in the free case, is $\sE_V = \Omega^{0,*}(X , V)$. 
We show how to extend a holomorphic Lagrangian to a functional on this Dolbeualt complex.

Recall, a holomorphic Lagrangian can be written as $I^{hol} = \sum_k I^{hol}_k$ where $I_k^{hol} = \omega \tensor F_k$ for $\omega \in \Omega^{d,hol}$ and $F_k : \sV^{hol} \times \cdots \times \sV^{hol} \to \sO^{hol}$ is of the form 
\ben
F_k (\varphi_1,\ldots,\varphi_k) = \sum_{i_1,\ldots,i_k} D_{i_1}(\varphi_1)\cdots D_{i_k}(\varphi_k) \in \sO^{hol}_X .
\een
Here, $\varphi_i\in \sV^{hol}$ is a holomorphic section, and each $D_{i_j}$ is a holomorphic differential operator $D_{i_j} : \sV^{hol} \to \sO^{hol}$.

In general, suppose $V,W$ are holomorphic vector bundles.
Every holomorphic differential operator $D : \sV \to \sW$ extends to a {\em smooth} differential operator on the associated Dolbeualt complexes with the property that it is compatible with the $\dbar$-operator on both sides.

To see how this works, suppose $D : \sV^{hol} \to \sW^{hol}$ is locally of the form $$D = \sum_{m_1,\ldots,m_d} a_{m_1\cdots m_d}(z)\frac{\partial^{m_1}}{\partial z_1^{m_1}} \cdots \frac{\partial^{m_d}}{\partial z^{m_d}}$$ where $a_{m_1\cdots m_d}(z)$ denotes a local holomorphic section of ${\rm Hom}(V,W)$. 
Then, if $\alpha = s_I (z,\zbar) \d \zbar_I \in \Omega^{0,*}(X, V)$, where $s_I$ is a local {\em smooth} section of $V$, we define
\ben
D^{\Omega^{0,*}} \alpha = \sum_{m_1,\ldots,m_d} a_{m_1\cdots m_d} (z)\left(\frac{\partial^{m_1}}{\partial z_1^{m_1}} \cdots \frac{\partial^{m_d}}{\partial z^{m_d}} s_I(z,\zbar)\right) \d \zbar_I \in \Omega^{0,*}(X, W) .
\een
In this way, $D$ extends to a differential operator 
\ben
D^{\Omega^{0,*}} : \Omega^{0,*}(X, V) \to \Omega^{0,*}(X,W) . 
\een
Since $D$ is holomorphic, it is immediate that $D^{\Omega^{0,*}} \dbar_V = \dbar_W D^{\Omega^{0,*}}$ where $\dbar_V,\dbar_W$ are the $(0,1)$-connections on $V,W$ respectively.
Thus, $D^{\Omega^{0,*}}$ is a map of sheaves of cochain complexes.

Via this construction, we extend $F_k$ to a $\Omega^{0,*}(X)$-valued functional on $\Omega^{0,*}(X, V)$ by the formula
\ben
F^{\Omega^{0,*}}_k : (\alpha_1,\ldots,\alpha_k) \mapsto \sum_{i_1,\ldots,i_k} D^{\Omega^{0,*}}_{i_1}(\alpha_1) \wedge \cdots \wedge D^{\Omega^{0,*}}_{i_k} (\alpha_k) \in  \Omega^{0,*}(X) .
\een
Here, as above, the $\alpha_i$'s denote sections in $\Omega^{0,*}(X, V)$. 

We have thus produced a linear map
\ben
(-)^{\Omega^{0,*}} : {\rm Hom}_{\sO^{hol}} ((J^{hol} V)^{\tensor k} , \sO^{hol}) \to {\rm Hom}_{C^\infty} ((J \Omega^{0,*}(X, V))^{\tensor k} , \Omega^{0,*}(X))
\een
where $J \Omega^{0,*}(X, V)$ denotes the sheaf of smooth jets of the graded vector bundle underlying the Dolbeault complex.
This map clearly restricts to the symmetric coinvariants on both sides.
Taking direct products and tensoring with $\Omega^{d,hol}_X$ we have a map 
\ben
{\rm Lag}^{hol}(V) \to \Omega^{d,hol} \tensor_{C^\infty} \prod_{k > 0} {\rm Hom}_{C^\infty}(J \Omega^{0,*}(X,V)^{\tensor k} , \Omega^{0,*}(X)) \cong \Omega^{d,*} \tensor_{C^\infty} \prod_{k > 0} {\rm Hom}(J \Omega^{0,*}(X, V)^{\tensor k} , C^\infty) .
\een
We have already mentioned that this map is compatible with the $\dbar$-operator on the right-hand side.
Moreover, the holomorphic differential operator $Q^{hol}$ also extends to a differential operator on the right-hand side in a way compatible with $\dbar$.
Thus, $(-)^{\Omega^{0,*}}$ is a map of cochain complexes, where ${\rm Lag}^{hol}(X)$ is equipped with the differential $Q^{hol}$ and the right-hand side has differential $\dbar + Q^{hol}$. 

The right-hand side admits a map of degree $-d$ to $\Omega^{d,d} \tensor_{C^\infty} \prod_{k > 0} {\rm Hom}(J \Omega^{0,*}(X, V)^{\tensor k} , C^\infty)$ by projecting onto the $(d,d)$-component of $\Omega^{d,*}$. 
Note that this map is only graded linear, it does not preserve the $\dbar$-differential. 
However, once we quotient by the action of vector fields we do get a well-defined map
\ben
\olochol(V) \to \Omega^{d,d} \tensor_{D} \prod_{k > 0} {\rm Hom}(J \Omega^{0,*}(X, V)^{\tensor k} , C^\infty)_{S_k} [-d] .
\een
Note that we have accounted for the shift of $d$ coming from $\Omega^{d,*} \to \Omega^{d,d}[-d]$. 
The right-hand side is precisely the (shifted) space of ordinary local functionals for the sheaf $\sE_V = \Omega^{0,*}(X,V)$ defined in Definition \ref{dfn: local fnl}.

In conclusion, we have obtained the following map of sheaves of cochain complexes
\be\label{int eqn}
\int (-)^{\Omega^{0,*}} : \olochol(V) \to \oloc(\Omega^{0,*}(X,V))[-d] .
\ee
In fact, we have the following stronger result, that this map is compatible with the brackets on both sides.

\begin{lem}
The map $\int (-)^{\Omega^{0,*}}$ defines an map of sheaves of dg Lie algebras
\ben
\int(-)^{\Omega^{0,*}} : \olochol(V)[d-1] \to \oloc(\sE_V)[-1]
\een
\end{lem}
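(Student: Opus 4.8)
The plan is to bootstrap off the preceding discussion, which already produces $\int(-)^{\Omega^{0,*}}$ as a map of sheaves of cochain complexes $\olochol(V) \to \oloc(\sE_V)[-d]$ intertwining the source differential $Q^{hol}$ with the target differential $\dbar + Q^{hol}$. Shifting both sides by $[d-1]$ turns this into the degree-zero cochain map $\olochol(V)[d-1] \to \oloc(\sE_V)[-1]$ appearing in the statement, and since a degree-preserving shift leaves the differentials untouched, the compatibility with differentials is inherited for free. A short bookkeeping check confirms that the two brackets are indeed degree zero after these shifts: the holomorphic bracket of degree $-d+1$ on $\olochol(V)$ and the BV bracket of degree $+1$ on $\oloc(\sE_V)$ each become degree-zero Lie brackets on $\olochol(V)[d-1]$ and $\oloc(\sE_V)[-1]$ respectively. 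Hence the sole remaining content of the lemma is the bracket identity
\ben
\int\left(\{I^{hol},J^{hol}\}^{hol}\right)^{\Omega^{0,*}} = \left\{\int(I^{hol})^{\Omega^{0,*}},\int(J^{hol})^{\Omega^{0,*}}\right\}
\een
up to the sign dictated by the shift conventions.

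To establish this identity I would first unwind both brackets in the language of Hamiltonian derivations. On the holomorphic side, $\int I^{hol}$ is Hamiltonian for a local derivation $X_I^{hol}$ built from the pairing $\omega^{hol} : \sV^{hol}\tensor_{\sO^{hol}}\sV^{hol} \to \Omega^{d,hol}_X$, and by definition $\{\int I^{hol},\int J^{hol}\}^{hol} = X_I^{hol}(\int J^{hol})$. On the Dolbeault side, the $(-1)$-shifted symplectic pairing $\omega_V$ is, by its very construction, the composite of the wedge product on the $\Omega^{0,*}$-legs with $(-,-)_V$ on the $V$-legs followed by $\int_X$; the induced BV bracket $\{-,-\}$ on $\oloc(\sE_V)$ is the single contraction of one functional derivative of each argument against the inverse of $\omega_V$, a contraction that is well defined on \emph{local} functionals precisely because it is supported on the diagonal. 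Thus both brackets are a single contraction through their respective pairing, and the problem reduces to showing that the extension operation $(-)^{\Omega^{0,*}}$ carries the contraction through $\omega^{hol}$ to the contraction through $\omega_V$.

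The observation that makes this go through is that $\omega_V$ literally \emph{is} the Dolbeault extension of $\omega^{hol}$: pairing fields $\alpha_1,\alpha_2 \in \Omega^{0,*}(X,V)$ under $\omega_V$ amounts to applying $\omega^{hol}$ on the bundle $V$, wedging the antiholomorphic form legs, and integrating the resulting density over $X$. Concretely, I would write out $X_I^{hol}$ and its Dolbeault extension and verify that the latter is exactly the Hamiltonian vector field of $\int(I^{hol})^{\Omega^{0,*}}$ with respect to $\omega_V$; this uses that $(-)^{\Omega^{0,*}}$ is multiplicative — it sends a product of holomorphic differential operators applied to jets to the product of their Dolbeault extensions — and that it intertwines $\omega^{hol}$ with $\omega_V$. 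Granting this, applying $\int(-)^{\Omega^{0,*}}$ to $X_I^{hol}(\int J^{hol})$ yields precisely the Hamiltonian derivation of $\int(I^{hol})^{\Omega^{0,*}}$ applied to $\int(J^{hol})^{\Omega^{0,*}}$, where the passage from the holomorphic top-form $\Omega^{d,hol}_X$ to the $(d,d)$-form computing the BV bracket is exactly the projection-and-integrate step already built into the definition of $\int(-)^{\Omega^{0,*}}$.

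The hard part will be bookkeeping rather than geometry. One must simultaneously track the three competing shifts — the $[d-1]$ governing the two Lie structures, the $+1$ of $Q^{hol}$, and the degree-$d$ integration map hidden inside $\omega_V$ — so that the Koszul signs on the two sides genuinely agree, and one must check that the bracket descends to the quotient by total derivatives on the source and by the $D_X$-action on the target. For the descent I would argue as in the cochain-map case: $X_I^{hol}$ preserves local functionals and its extension commutes with $\dbar$, so the contraction is compatible with the $D_X$-module quotient defining $\oloc(\sE_V)$, and the projection onto $(d,d)$-forms is only well defined after that quotient. The one genuinely analytic point requiring care is the well-definedness of the BV contraction on the target; because $\omega^{hol}$ is an undifferentiated bundle map, its Dolbeault extension contracts along the diagonal without creating distributional singularities, so the bracket of two \emph{local} functionals is again local and the displayed identity is an honest equality in $\oloc(\sE_V)$.
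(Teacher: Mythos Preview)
Your argument is correct and essentially complete: you take the cochain-map statement established in the discussion preceding the lemma and then verify the bracket identity directly by showing that the Dolbeault extension of the Hamiltonian derivation $X_I^{hol}$ is the Hamiltonian derivation of $\int(I^{hol})^{\Omega^{0,*}}$ with respect to $\omega_V$. This is a more elementary and hands-on route than the paper's. The paper instead reformulates both sides via derived tensor products over $D_X^{hol}$ and $D_X$, invoking a general comparison $\Omega^{d,hol}_X\otimes^{\LL}_{D_X^{hol}} M[d]\simeq \Omega^{d,d}_X\otimes^{\LL}_{D_X} M^{C^\infty}$ for holomorphic $D$-modules, and then asserts the bracket compatibility at the end with essentially the same one-line check you spell out. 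Your approach has the advantage of making the bracket verification explicit and avoiding the $D$-module detour; the paper's approach has the advantage of packaging the degree shift and the passage from holomorphic to smooth densities into a single conceptual quasi-isomorphism, which clarifies why the shift by $d$ appears. One small notational wrinkle: the bracket $\{-,-\}^{hol}$ lives on $\olochol(V)$, not on Lagrangians, so your displayed identity should read $\int\bigl(\{\int I^{hol},\int J^{hol}\}^{hol}\bigr)$ on the left, or simply be phrased as the paper does; this does not affect the substance of your argument.
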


\begin{proof}
By definition, the sheaf of local functionals on $\sE_V$ is equal to 
\ben
{\rm Dens}_X \tensor_{D_X} \sO_{red}(J \sE_V) .
\een
Since $\sO_{red}(J \sE_V)$ is flat as a $D_X$-module \cite{CosRenorm}, we can replace the tensor product $\tensor_{D_X}$ with the derived tensor product $\tensor^{\LL}_{D_X}$.

We now use the following observation about $D$-modules.
If $M$ is a holomorphic $D_{X}^{hol}$-module which is given by the sections of a holomorphic vector bundle, then it forgets down to an ordinary smooth $D_X$-module (with the same underlying $C^\infty_X$-module structure) that we denote $M^{C^\infty}$. 
Moreover, there is a quasi-isomorphism of $D$-modules
\ben
\Omega^{d,hol}_X \tensor^{\LL}_{D_X^{hol}} M [d] \simeq \Omega^{d,d}_{X} \tensor^{\LL}_{D_X} M^{C^\infty} .
\een

We apply this to the case $M = \sO_{red}(J^{hol} V)$, where $V$ is a holomorphic vector bundle
This says that there is a quasi-isomorphism
\be\label{def eqn1}
\Omega^{d,hol}_X \tensor^{\LL}_{D_X^{hol}} \sO_{red}(J^{hol} V) [d] \simeq \Omega^{d,d}_{X} \tensor^{\LL}_{D_X}  \sO_{red}(J^{hol} V) .
\ee
This quasi-isomorphism is compatible with the $Q^{hol}$ differential and the bracket $\{-,-\}^{hol}$ on both sides.
Note that the left-hand side is simply the space of shifted holomorphic local functionals $\olochol(V)[d]$. 

Next, observe that the map $(-)^{\Omega^{0,*}}$ determines a map of sheaves of cochain complexes 
\be\label{def eqn2}
(-)^{\Omega^{0,*}} : \Omega^{d,d}_{X} \tensor^{\LL}_{D_X}  \sO_{red}(J^{hol} V) \to {\rm Dens}_X \tensor^{\LL}_{D_X} \sO_{red}(J \sE_V) .
\ee
The right-hand side is quasi-isomorphic to $\oloc(\sE_V)$. 
The composition of (\ref{def eqn1}) and (\ref{def eqn2}) is simply the map (\ref{int eqn})
\ben
\int (-)^{\Omega^{0,*}} : \olochol(V)[d] \to \oloc(\sE_V) .
\een
One checks immediately that this map is compatible with the brackets, namely 
\ben
\{\int I^{hol}, \int J^{hol}\}^{hol} = \{\int I^{\Omega^{0,*}}, \int J^{\Omega^{0,*}}\} .
\een

%$D_X$-modules between $\sO_{red}(J \Omega^{0,*}(X , V))$ and $\sO_{red}(J^{hol}V)$. 
%For this, it suffices to show that the space of linear functionals are quasi-isomorphic. 
%For any vector bundle $E$ there always exists a (non-canonical) splitting $J E \cong \sE \tensor_{C^\infty_X} J_X$, where $\sE$ is the sheaf of sections and $J_X$ is the sheaf of $\infty$-jets of the trivial bundle.
%Thus, we can assume that $V$ is the trivial vector bundle, where the claim is now $(J\Omega^{0,*}(X))^\vee \simeq (J^{hol}_X)^\vee$. 
%Both sides are quasi-isomorphic to the smooth sections of the bundle of holomorphic differential operators $D^{hol}$, so we are done. 
\end{proof}

As a result of the equivalence between solutions to the classical master equation and Maurer-Cartan elements in the dg Lie algebras of shifted local functionals, we have the following. 

\begin{prop} Every classical holomorphic theory $(V, Q^{hol},(-,-)_V, I^{hol})$ determines the structure of a classical BV theory.
The underlying free BV theory is given in Definition/Lemma \ref{dfn hol free theory} $(\sE_V, Q, \omega_V)$ and the interaction is $I = \int I^{\Omega^{0,*}}$. 
\end{prop}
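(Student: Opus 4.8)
The plan is to reduce the statement to two inputs that are already in hand: the free BV theory produced in Definition/Lemma \ref{dfn hol free theory}, and the morphism of sheaves of dg Lie algebras $\int(-)^{\Omega^{0,*}} \colon \olochol(V)[d-1] \to \oloc(\sE_V)[-1]$ from the preceding Lemma. The free data $(V, Q^{hol}, (-,-)_V)$ already yields the triple $(\sE_V, Q_V = \dbar + Q^{hol}, \omega_V)$, which by Definition/Lemma \ref{dfn hol free theory} is a free BV theory: $(\sE_V, Q_V)$ is elliptic, $\omega_V$ is a $(-1)$-shifted symplectic pairing, and $Q_V$ is square-zero and graded skew-self-adjoint for $\omega_V$. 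This already supplies condition (3) of Definition \ref{dfn: classical} together with the free part of condition (2), so it remains only to produce the interaction term and to verify the classical master equation.

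First I would set $I = \int I^{\Omega^{0,*}}$, the image of $\int I^{hol}$ under $\int(-)^{\Omega^{0,*}}$; by the preceding Lemma this lands in $\oloc(\sE_V)$, so $I$ is local. Next I would record two bookkeeping facts. The map $(-)^{\Omega^{0,*}}$ is built term-by-term on the polynomial components $F_k$ and therefore preserves polynomial order; since $I^{hol}$ lies in the cubic-and-higher subsheaf ${\rm Lag}^{hol,+}(V)$ of Definition \ref{dfn: plus}, the functional $I$ is at least cubic. A short check of the grading conventions shows that an element of cohomological degree $d$ in $\olochol(V)$ sits in degree $1$ of the shifted complex $\olochol(V)[d-1]$, hence maps to degree $1$ of $\oloc(\sE_V)[-1]$, that is, to a functional $I$ of cohomological degree $0$. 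Writing $S_{free}(\varphi) = \omega_V(\varphi, Q_V \varphi)$, the full action $S = S_{free} + I$ is then a genuine degree-$0$ local functional of the form required in condition (2).

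The heart of the argument is the classical master equation. Because $\int(-)^{\Omega^{0,*}}$ is a morphism of dg Lie algebras, it carries Maurer-Cartan elements to Maurer-Cartan elements. The hypothesis that $\int I^{hol}$ solves the holomorphic master equation is precisely the statement that it is a Maurer-Cartan element of $(\olochol(V)[d-1], Q^{hol}, \{-,-\}^{hol})$; applying the Lemma, $I$ is a Maurer-Cartan element of $(\oloc(\sE_V)[-1], Q_V, \{-,-\})$, which unwinds to $Q_V I + \frac{1}{2}\{I, I\} = 0$. Combined with $\{S_{free}, S_{free}\} = 0$ (which is equivalent to $Q_V^2 = 0$ together with skew-self-adjointness, already known from the free theory), this is exactly $\{S, S\} = 0$. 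Invoking the equivalence between solutions of the classical master equation and Maurer-Cartan elements in shifted local functionals then concludes that $(\sE_V, \omega_V, S)$ satisfies all three conditions of Definition \ref{dfn: classical}.

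The step I expect to be most delicate — and the reason the preceding Lemma does the real work — is the compatibility of differentials: the source dg Lie algebra carries only $Q^{hol}$, whereas the target is differentiated by the full linearized operator $Q_V = \dbar + Q^{hol}$. The reconciliation is that the $\dbar$-part of $Q_V$ acts trivially on the image of $\int(-)^{\Omega^{0,*}}$. Indeed, each $D^{\Omega^{0,*}}_i$ commutes with $\dbar$, so $\sum_i F_k^{\Omega^{0,*}}(\ldots, \dbar\varphi_i, \ldots)$ is a total $\dbar$-derivative, and its $(d,d)$-projection integrates to zero over $X$ by Stokes' theorem. This is exactly what makes $\int(-)^{\Omega^{0,*}}$ a chain map intertwining $Q^{hol}$ on the source with $\dbar + Q^{hol}$ on the target, and hence what allows the holomorphic master equation to imply the full classical master equation. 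Once this compatibility is granted, the Maurer-Cartan transport above finishes the proof.
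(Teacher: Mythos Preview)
Your proposal is correct and follows essentially the same route as the paper. The paper's own proof is a single sentence: it simply observes that the preceding Lemma exhibits $\int(-)^{\Omega^{0,*}}$ as a map of dg Lie algebras, and invokes the equivalence between Maurer--Cartan elements and solutions of the classical master equation. You have written out in detail the bookkeeping (cubic order, cohomological degree, the $\dbar$-compatibility underlying the Lemma) that the paper leaves implicit, but the core mechanism is identical.
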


%\begin{proof}
%We must show that $Q^{hol}I^{hol} + \frac{1}{2} \{I^{hol},I^{hol}\}^{hol}$ exact implies the ordinary classical master equation for $I$:
%\ben
%\dbar I + Q^{hol}I + \frac{1}{2} \{I,I\} = 0 .
%\een
%Since $I^{\Omega^{0,*}}$ is defined using holomorphic differential operators, we see that $(\dbar I^{\Omega^{0,*}} ) (\alpha) = \dbar (I^{\Omega^{0,*}} (\alpha))$. 
%Thus, upon integration, we see that the first term is identically zero.
%
%Now, since $Q^{hol}I^{hol} + \frac{1}{2} \{I^{hol},I^{hol}\}^{hol}$ is $\partial$-exact, $Q^{hol}I^{\Omega^{0,*}} + \frac{1}{2} \{I^{\Omega^{0,*}},I^{\Omega^{0,*}}\}$ is also $\partial$-exact.
%By integration we obtain the result.
%\end{proof}

%\brian{This is now repetitive
%Furthermore, via integration we can define the $\CC$-valued functional on compactly supported sections
%\ben
%I_k = \int_X I^{0,*}_k : \Om ega_c^{0,*}(X, V)^{\tensor k} \to \CC .
%\een
%Performing this for each homogenous piece, we obtain the functional $I = \sum_k \int I^{\Omega^{0,*}}_k$. 
%}
%\brian{This too
%The symbol $\int_X$ reminds us that we are working modulo total derivatives.
%Moreover, since we began with a functional involving differential operators, we see that the above expression defines an element of $\oloc(\sE_V)$. 
%In conclusion, our construction thus outlined determines a linear map $\olochol(V) \to \oloc(\sE_V)$ obtained via the composition $I^{hol} \mapsto I^{\Omega^{0,*}} \mapsto I = \int I^{\Omega^{0,*}}$. 
%Note that since $I^{hol}$ is cohomological degree $d$, the local functional $I^{\Omega^{0,*}}$ is degree zero since integration is degree $-d$. 
%}

Table \ref{table: holtoBV} is a useful summary showing how we are producing a BV theory from a holomorphic theory.

\begin{table}
\begin{center}
\begin{tabular}{ |c|c|c| } 
 \hline
 Holomorphic theory & BV theory \\
 \hline \hline
Holomorphic bundle $V$ & Space of fields $\sE_V = \Omega^{0,*}(X, V)$  \\ 
Holomorphic differential operator $Q^{hol}$ & Linear BRST operator $\dbar + Q^{hol}$ \\ 
Non-degenerate pairing $(-,-)_V$ & $(-1)$-symplectic structure $\omega_{V}$ \\ 
Holomorphic Lagrangian $I^{hol}$ & Local functional $I = \int I^{\Omega^{0,*}} \in \oloc(\sE_V)$ \\ 
 \hline
\end{tabular}
\caption{From holomorphic to BV}
\label{table: holtoBV}
\end{center}
\end{table}

\begin{eg} {\em Holomorphic $BF$-theory} \label{eg: bf}
Let $\fg$ be a Lie algebra and $X$ any complex manifold.
Consider the following holomorphic vector bundle on $X$:
\ben
V = \ul{\fg}_X [1] \oplus K_X \tensor \ul{\fg}_X^\vee [d-2] .
\een
The notation $\ul{\fg}_X$ denotes the trivial bundle with fiber $\fg$. 
The pairing $V \tensor V \to K_X[d-1]$ is similar to the pairing for the $\beta\gamma$ system, except we use the evaluation pairing $\<-.-\>_\fg$ between $\fg$ and its dual $\fg^\vee$. 
In this example, $Q^{hol} = 0$.

We describe the holomorphic Lagrangian.
If $f_i : X \to \CC, i=1,2$ are holomorphic functions and $\beta \in K_X$, consider the trilinear functional
\ben
I^{hol} (f \tensor X + \beta \tensor X^\vee) = f^2 \beta \<X^\vee, [X,X]\>_\fg .
\een
This defines an element $I^{hol} \in \sO_{\rm loc}^{hol,+}(V)$ of degree $d$ and the Jacobi identity for $\fg$ guarantees $\{I^{hol}, I^{hol}\}^{hol} = 0$. 
The fields of the corresponding BV theory are
\ben
\sE_V = \Omega^{0,*}(X, \fg)[1] \oplus \Omega^{d,*}(X, \fg^*) [d-2] .
\een
The induced local functional $I^{\Omega^{0,*}}$ on $\sE_V$ is
\ben
I^{\Omega^{0,*}} (\alpha, \beta) = \int_X \<\beta, [\alpha,\alpha]\>_\fg .
\een
The total action is $S(\alpha,\beta) = \int \<\beta, \dbar \alpha\> + \<\beta,[\alpha,\alpha]\>_\fg$.
This is formally similar to $BF$ theory (see below) and for that reason we refer to it as {\em holomorphic} BF theory. 
The moduli problem this describes is the cotangent theory to the moduli space of holomorphic connections on the trivial $G$-bundle near the trivial bundle.
There is an obvious enhancement that works near any holomorphic principal bundle.
When $d = 2$, in \cite{johansen1}, or for a more mathematical treatment see \cite{CostelloYangian}, it is shown that this theory is a twist of $\cN=1$ supersymmetric pure Yang-Mills on $\RR^4$.
\end{eg}

\begin{eg} {\em Topological $BF$-theory}\label{eg: bftop}
This is a deformation of the previous example that has appeared throughout the physics literature.
Suppose we take as our graded holomorphic vector bundle 
\ben
V = \left(\ul{\fg}_X \tensor \left(\oplus_{k = 0}^d \wedge^k T^{*1,0}X [1-k]\right)\right) \oplus \left(\ul{\fg^*}_X \tensor \left(\oplus_{k = 0}^d \wedge^k T^{*1,0}X [2(d-1)-k]\right)\right) .
\een
Here $\wedge^0 T^{*1,0}X$ is understood as the trivial bundle $\ul{\CC}_X$. 
The pairing is given by combining the evaluation pairing between $\fg$ and $\fg^*$ and taking the wedge product and projecting onto the components isomorphic to $K_X$.
Explicitly, the pairing is equal to the sum of bundle maps of the form
\ben
\ev_{\fg} \tensor \wedge : \left(\ul{\fg}_X \tensor \wedge^k T^{*1,0}X [1-k]\right) \tensor \left(\ul{\fg^*}_X \tensor \wedge^{d-k} T^{*1,0}X [d-1+k]\right) \to K_X [d-1] .
\een
The holomorphic differential is of the form 
\ben
Q^{hol} = {\rm id}_\fg \tensor \partial + {\rm id}_{\fg^*} \tensor \partial,
\een
where $\partial$ is the holomorphic de Rham differential.
The holomorphic interaction is given by combining the Lie algebra structure on $\fg$ with the wedge product of the holomorphic bundles $\wedge^k T^{*1,0}X$. 
We observe that the associated BV theory has classical space of fields given by
\ben
(A,B) \in \sE_V = \Omega^*(X, \fg[1] \oplus \fg^*[2d-2]) 
\een
where $\Omega^*$ is now the {\em full} de Rham complex.
The action functional is
\ben
S = \int_X \<B, \d A\>_\fg + \frac{1}{3} \<B, [A,A]\>_\fg .
\een
As above, $\<-,-\>_\fg$ denotes the pairing between $\fg$ and its dual.
This is the well-known topological BF theory on the even dimensional {\em real} manifold $X$ (of real dimension $2d$). 
It might seem silly that we have used the formalism of holomorphic field theory to describe a very simple topological theory.
We will discuss advantages of this approach at the send of the next section.
In particular, the theory of regularization for holomorphic theories we will employ has peculiar consequences for renormalizing certain classes of topological theories such as topological BF theory.
\end{eg}

%When constructing a BV theory from a holomorphic theory $V \rightsquigarrow \sE_V$ it is natural to study deformations of the theory by holomorphic local functionals. 
%When such a local holomorphic functional $I^{hol}$ is fixed, we denote by ${\rm Def}^{hol}_{V,I^{hol}}$ the cochain complex $\left(\olochol(V), Q^{hol} + \{I^{hol}, -\}\right)$
%We will see in Lemma \ref{dfn: holdef} that this cochain complex controls holomorphic deformations of the classical theory defined by $I = \int I^{\Omega^{0,*}}$. 
%
%\begin{lem}\label{lem: holdef}
%Suppose $(V, Q^{hol}, (-,-)_V, I^{hol})$ is the data of a holomorphic theory, and let $(\sE_V, Q = \dbar, \omega_V, I)$ be the corresponding BV theory.
%Then, there is an inclusion of sheaves of cochain complexes
%\ben
%{\rm Def}^{hol}_{V} [d] \hookrightarrow {\rm Def}_{\sE_V} .
%\een
%that is compatible with the brackets and $\{-,-\}^{hol}$ and $\{-,-\}$ on both sides.
%%\ben
%%\Def_{\sE_V}  \simeq \Omega^{d,hol}_X \tensor^{\LL}_{D_X^{hol}} \sO_{red}(J^{hol}V) [d]
%%\een
%\end{lem}

\begin{rmk}
We have mentioned an alternative formulation of classical field theory in terms of sheaves of $L_\infty$ algebras.
Just as in the ordinary case we can formulate the data of a classical holomorphic theory in terms of sheaves of $L_\infty$ algebras. 
We will not do that here, but hope the idea of how to do so is clear.
\end{rmk}

\begin{rmk}
Our definition of a holomorphic theory is compatible with the definition of a two-dimensional chiral conformal field theory given in \cite{LiVertex} when the complex dimension is $d=1$.
\end{rmk}

\subsection{Holomorphically translation invariant theories} \label{sec: hol trans}

When working on affine space one can ask for a theory to be invariant with respect to translations. 
In this section, we take a break from holomorphic theories defined on general complex manifolds to consider the affine manifold $\CC^d = \RR^{2d}$.
We recall what a {\em holomorphically translation invariant} theory is, and state a general result about deformations for such theories. 
This particular class of theories has been discussed in Chapter 10 of \cite{CG2}, and it is a special case of a general holomorphic theory as defined above.  

%The definition we give of a holomorphically translation invariant %theory is slightly different than the structure of the last %section. 

Let $V$ be a holomorphic vector bundle on $\CC^n$ and suppose we fix an identification of bundles 
\ben
V \cong \CC^d \times V_0
\een
where $V_0$ is the fiber of $V$ at $0 \in \CC^d$. 
We want to consider a classical theory with space of fields given by $\Omega^{0,*}(\CC^d, V) \cong \Omega^{0,*}(\CC^d) \tensor_\CC V_0$. 
Moreover, we want this theory to be invariant with respect to the group of translations on $\CC^d$. 
Per usual, it is best to work with the corresponding Lie algebra of translations. 
Using the complex structure, we choose a presentation for the complex Lie algebra of translations given by
\ben
\CC^{2d} \cong {\rm span}_\CC \left\{\frac{\partial}{\partial z_i}, \frac{\partial}{\partial \zbar_i}\right\}_{1 \leq i \leq d}.
\een

To define a theory, we need to fix a non-degenerate pairing on $V$.
Moreover, we want this to be translation invariant. 
So, suppose
\be\label{pairing 1}
(-,-)_V : V \tensor V \to K_{\CC^d} [d-1]
\ee
is a skew-symmetric bundle map that is equivariant for the Lie algebra of translations. 
The shift is so that the resulting pairing on the Dolbeault complex is of the appropriate degree.
Here, equivariance means that for sections $v,v'$ we have
\ben
\left(\frac{\partial}{\partial z_i} v, v'\right)_V + \left(v, \frac{\partial}{\partial z_i} v'\right)_V  = L_{\partial_{z_i}} (v,v')_V
\een
where the right-hand side denotes the Lie derivative applied to $(v,v')_V \in \Omega^{d,hol}_{\CC^d}$. 
There is a similar relation for the anti-holomorphic derivatives. 
We obtain a $\CC$-valued pairing on $\Omega^{0,*}_c(\CC^d , V)$ via integration:
\be\label{trans pairing}
\int_{\CC^d} \circ (-,-)_V : \Omega^{0,*}_c (\CC^d , V) \tensor \Omega^{0,*}_c(\CC^d , V) \xto{\wedge \cdot (-,-)_V} \Omega^{d,*}(\CC^d) \xto{\int} \CC .
\ee
The first arrow is the wedge product of forms combined with the pairing on $V$. 
The second arrow is only nonzero on forms of type $\Omega^{d,d}$. 
Clearly, integration is translation invariant, so that the composition is as well. 

The pairing (\ref{trans pairing}) together with the differential $\dbar$ are enough to define a free theory. 
However, it is convenient to consider a slightly generalized version of this situation. 
We want to allow deformations of the differential $\dbar$ on Dolbeault forms of the form
\ben
Q = \dbar + Q^{hol}
\een
where $Q^{hol}$ is a holomorphic differential operator of the form
\be\label{hol operator}
Q^{hol} = \sum_I \frac{\partial}{\partial z^I} \mu_I
\ee
where $I$ is some multi-index and $\mu_I : V \to V$ is a linear map of cohomological degree $+1$. 
Note that we have automatically written $Q^{hol}$ in a way that it is translation invariant.
Of course, for this differential to define a free theory there needs to be some compatibility with the pairing on $V$. 

We can summarize this in the following definition, which should be viewed as a slight modification of a free theory to this translation invariant holomorphic setting. 

\begin{dfn} A {\em holomorphically translation invariant free BV theory} is the data of a holomorphic vector bundle $V$ together with
\begin{enumerate}
\item an identification $V \cong \CC^d \times V_0$;
\item a translation invariant skew-symmetric pairing  $(-,-)_V$ as in (\ref{pairing 1});
\item a holomorphic differential operator $Q^{hol}$ as in (\ref{hol operator});
\end{enumerate}
such that the following conditions hold
\begin{enumerate}
\item the induced $\CC$-valued pairing $\int \circ (-,-)_V$ is non-degenerate;
\item the operator $Q^{hol}$ satisfies $(\dbar + Q^{hol})^2 = 0$ and is skew self-adjoint for the pairing:
\ben
\int (Q^{hol} v, v')_V = \pm \int (v, Q^{hol} v').
\een
\end{enumerate}
\end{dfn}

The first condition is required so that we obtain an actual $(-1)$-shifted symplectic structure on $\Omega^{0,*}(\CC^d, V)$. 
The second condition implies that the derivation $Q = \dbar + Q^{hol}$ defines a cochain complex
\ben
\sE_V = \left(\Omega^{0,*}(\CC^d, V), \dbar + Q^{hol}\right),
\een
and that $Q$ is skew self-adjoint for the symplectic structure. 
Thus, in particular, $\sE_V$ together with the pairing define a free BV theory in the ordinary sense. 
In the usual way, we obtain the action functional via
\ben
S(\varphi) = \int (\varphi, (\dbar + Q^{hol}) \varphi)_V .
\een 

Before going further, we will give a familiar example from the last section.

\begin{eg}\label{eg bg affine} {\em The free $\beta\gamma$ system on $\CC^d$}.
Consider the $\beta\gamma$ system with coefficients in any holomorphic vector bundle from Example \ref{eg bg} (and the remarks after it) specialized to the manifold $X = \CC^d$.
One immediately checks that this is a holomorphically translation invariant free theory.
%where the operators $\eta_i = \frac{\partial}{\partial (\d z_i)}$ act in the natural way.
\end{eg}

\subsubsection{Translation invariant interactions}

Let's fix a general free holomorphically translation invariant theory $(V, (-,-)_V, Q^{hol})$ as above.
We now define what a holomorphically translation invariant interacting theory is.
Recall, translations span a $2d$-dimensional abelian Lie algebra $\CC^{2d} = \CC\left\{\frac{\partial}{\partial z_i}, \frac{\partial}{\partial \zbar_i}\right\}$. 
The first condition that an interaction be holomorphically translation invariant is that it be translation invariant, so invariant for this Lie algebra.
The additional condition is a bit more involved.

Let $\Bar{\eta}_i$ denote the operator on Dolbeault forms given by contraction with the antiholomorphic vector field $\frac{\partial}{\partial \zbar_i}$. 
Note that $\Bar{\eta}_i$ acts on the Dolbeault complex on $\CC^d$ with values in any vector bundle.
In particular it acts on the fields of a free holomorphically translation invariant theory as above, in addition to functionals on fields.

\begin{dfn}
A {\em holomorphically translation invariant} local functional is a translation invariant local functional $I \in \oloc(\sE_V)^{\CC^{2d}}$ such that $\Bar{\eta}_i I = 0$ for all $1 \leq i \leq d$. 
\end{dfn}

There is a succinct way of expressing holomorphic translation invariance as the Lie algebra invariants of a certain {\em dg Lie algebra}.
Denote by $\CC^d[1]$ the abelian $d$-dimensional graded Lie algebra in concentrated in degree $-1$ by the elements $\{\Bar{\eta}_i\}$.
We want to consider deformations that are invariant for the action by the total {\em dg} Lie algebra $\CC^{2d|d} = \CC^{2d} \oplus \CC^d[1]$.
The differential sends $\Bar{\eta}_i \mapsto \frac{\partial}{\partial \zbar_i}$.
The space of holomorphically translation invariant local functionals are denoted by $\oloc(\sE_V)^{\CC^{2d|d}}$.
The enveloping algebra of $\CC^{2d|d}$ is of the form
\ben
U(\CC^{2d|d}) = \CC \left[\frac{\partial}{\partial z_i},  \frac{\partial}{\partial \zbar_i}, \Bar{\eta}_i \right]
\een
with differential induced from that in $\CC^{2d|d}$. 
Note that this algebra is quasi-isomorphic to the algebra of constant coefficient polynomial holomorphic differential operators $\CC[\partial / \partial z_i] \xto{\simeq} U(\CC^{2d|d})$. 

A way of recasting the condition that a local function $I$ be both translation invariant and $\Bar{\eta}_i I = 0$ is to require it lie in the subspace of invariants for the dg Lie algebra $\CC^{2d|d}$. 
In other words, the space of holomorphically translation invariant local functionals is $\oloc(\sE_V)^{\CC^{2d|d}}$. 

From the definitions, we see that any translation invariant local functional is a sum of functionals of the form
\be\label{phifnl}
\varphi \mapsto \int_{\CC^d} F(D_1\varphi, \ldots, D_k \varphi) \d^d z
\ee
where $D_\alpha$ is an operator in the space 
\ben
\CC \left[\d \zbar_i, \frac{\partial}{\partial z_i}, \frac{\partial}{\partial \zbar_i}, \Bar{\eta}_i \right],
\een
and $F : \Omega^{0,*}(\CC^d, V)^{\tensor k} \to \Omega^{0,*}(\CC^d)$ is a linear map of the form
\ben
\Omega^{0,*}(\CC^d, V)^{\tensor k} \cong (\Omega^{0,*}(\CC^d) \tensor V_0)^{\tensor k} = \Omega^{0,*}(\CC^d)^{\tensor k} \tensor V_0^{\tensor k} \xto{\wedge \tensor F_0} \Omega^{0,*}(\CC^d),
\een
where $F_0 : V_0^{\tensor k} \to \CC$ is a linear map and $\wedge$ denotes the wedge product of forms.

The condition $\Bar{\eta}_i I = 0$ means that none of the $D_i$'s have any $\d \zbar_j$-dependence. 
Using this description we can exhibit the space of holomorphically translation functionals in a more efficient way. 
To state the result, we introduce a new class of local functionals (\ref{phifnl}) which only depend on differential operators $D_i$ built from $\frac{\partial}{\partial z}$, which we denote by $\sO^{hol, trans}_{\rm loc}(\sE_V)$. 
Like $\oloc(\sE_V)^{\CC^{2d|d}}$, such operators form a subspace
\[
\sO^{hol, trans}_{\rm loc}(\sE_V) \subset \oloc(\sE_V) .
\]
In fact, there is a natural inclusion $\sO^{hol, trans}_{\rm loc}(\sE_V) \subset  \oloc(\sE_V)^{\CC^{2d|d}}$ which arises from the fact that in the second space we allow for local functionals built from differential operators in the collection $\left\{\frac{\partial}{\partial z_i}, \frac{\partial}{\partial \zbar_i}, \Bar{\eta}_i \right\}$, whereas in the first space we only allow those built from $\frac{\partial}{\partial z_i}$. 

\begin{lem}\label{lem: hol trans local}
Let $(V, (-,-)_V, Q^{hol})$ be a free holomorphically translation invariant theory on $\CC^d$ and denote $\sE_V = \Omega^{0,*}(X, V)$. 
Then, the natural inclusion
\[
\left(\sO^{hol, trans}_{\rm loc}(\sE_V), \dbar + Q^{hol}\right) \hookrightarrow \left(\oloc(\sE_V)^{\CC^{2d|d}}, \dbar + Q^{hol}\right)
\]
is a quasi-isomorphism. 
%\ben
%H^*(\oloc(\sE_V)^{\CC^{2d|d}}, \dbar + Q^{hol}) \cong \left(\CC \cdot \d^d z \tensor^{\mathbb{L}}_{U(\CC^{2d|d})} \sO_{red} (J_0 E_V), Q^{hol} \right)
%\een
%where $E_V$ is the vector bundle on $\CC^d$ such that $\sE_V = \Gamma(E_V)$.
\end{lem}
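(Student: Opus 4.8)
The plan is to deduce the statement from the quasi-isomorphism $\CC[\partial/\partial z_i] \xrightarrow{\simeq} U(\CC^{2d|d})$ recorded just above, transported from the level of enveloping algebras to the level of local functionals. The geometric input that realizes this contraction on fields is the Cartan homotopy formula: on $\Omega^{0,*}(\CC^d, V)$ one has $[\dbar, \Bar{\eta}_i] = L_{\partial/\partial \zbar_i} = \partial/\partial\zbar_i$, since $\partial$ anticommutes with contraction along an antiholomorphic vector and the Lie derivative along a constant vector field acts by differentiating Dolbeault coefficients. Thus each pair $(\Bar{\eta}_i, \partial/\partial\zbar_i)$ spans an acyclic two-term subcomplex of $\CC^{2d|d}$, and the antiholomorphic translations become exact up to the homotopy $\Bar{\eta}_i$, exactly as in the acyclicity computation giving the quasi-isomorphism of enveloping algebras.

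First I would use the jet description of translation-invariant local functionals. By the definition of $\oloc$ and its holomorphically translation invariant refinement, every such functional is determined by a symmetric multilinear functional on the $\infty$-jet of $\sE_V$ at the origin, and the operators $\{\partial/\partial z_i, \partial/\partial\zbar_i, \Bar{\eta}_i\}$ act on these jets. Concretely, after imposing $\Bar{\eta}_i I = 0$ the building-block operators $D_\alpha$ lie in $\CC[\partial/\partial z_i, \partial/\partial\zbar_i]$, while the subspace $\sO^{hol,trans}_{\rm loc}(\sE_V)$ is cut out by further demanding no antiholomorphic derivatives, i.e.\ $D_\alpha \in \CC[\partial/\partial z_i]$. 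In this presentation the inclusion of the lemma is precisely the inclusion of $\CC[\partial/\partial z_i]$-generated functionals into $U(\CC^{2d|d})$-generated ones, with the $\dbar$-part of the differential implementing the dg Lie differential $\Bar{\eta}_i \mapsto \partial/\partial\zbar_i$ through the Cartan formula above.

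The next step is to transport the algebra-level quasi-isomorphism to functionals. Since the $\infty$-jets of $\sE_V$ are free (hence flat) over the enveloping algebra of the antiholomorphic translations, tensoring the contraction $\CC[\partial/\partial z_i] \xrightarrow{\simeq} U(\CC^{2d|d})$ against them preserves quasi-isomorphisms; equivalently, one exhibits a contracting homotopy $h$ on the complex of invariants by inserting $\Bar{\eta}_i$ into a single slot in a symmetrized fashion, and checks via $[\dbar, \Bar{\eta}_i] = \partial/\partial\zbar_i$ that $[\dbar, h]$ strips off one antiholomorphic derivative while preserving the conditions $\Bar{\eta}_j(-) = 0$. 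This handles the differential $\dbar$; to include $Q^{hol}$ I would filter by the total order in the antiholomorphic derivatives $\partial/\partial\zbar_i$, a filtration that $Q^{hol}$ respects since it is holomorphic and constant-coefficient, and run the associated spectral sequence. Its first page is computed by the $\dbar$-homotopy and is concentrated on the $\partial/\partial\zbar$-free functionals, after which the induced differential is exactly $Q^{hol}$ on $\sO^{hol,trans}_{\rm loc}(\sE_V)$, giving the claim.

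The main obstacle I anticipate is analytic rather than formal: one must carry out the homotopy and the spectral sequence compatibly with the completed adic topology on infinite jets and with the passage to $S_k$-coinvariants, and one must ensure convergence of the spectral sequence. This is controlled by the observation that in each fixed symmetric (polynomial) degree a translation-invariant local functional involves only finitely many derivatives, so the antiholomorphic-order filtration is bounded in each degree and exhaustive; the spectral sequence therefore degenerates and converges degree by degree. A secondary point to verify is that the strict invariants $\oloc(\sE_V)^{\CC^{2d|d}}$ already compute the homotopy invariants, which holds because the antiholomorphic directions act freely in the sense made precise by the freeness of jets over $U(\CC^{2d|d})$, so that no higher derived corrections appear.
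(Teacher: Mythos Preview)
Your proposal is correct and follows essentially the same approach as the paper: the paper's proof is a two-sentence sketch that invokes a spectral sequence taking $\dbar$-cohomology first and then appeals to the quasi-isomorphism $\CC^d \hookrightarrow (\CC^{2d|d},\dbar)$, which is exactly the enveloping-algebra contraction you spell out via the Cartan formula $[\dbar,\Bar{\eta}_i]=\partial/\partial\zbar_i$. Your version is more detailed about the explicit homotopy, the jet description, and the convergence of the spectral sequence, but the underlying mechanism is identical.
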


\begin{proof}
We consider a spectral sequence in which we first take the cohomology with respect to $\dbar$. 
At the $E_1$-page, the isomorphism in cohomology follows from the quasi-isomorphism of dg Lie algebras $\CC^d \hookrightarrow (\CC^{2d|d}, \dbar)$. 
\end{proof}

This description of holomorphically translation invariant local functionals allows us to give a convenient description of deformations of holomorphically translation invariant theories. 
Suppose $(V,Q^{hol},(-,-)_V, I)$ be the data of an interacting holomorphically translation invariant theory on $\CC^d$.
We have already encountered the space of local functionals $\oloc(\sE_V)$ and the deformation complex of the interacting BV theory is
\ben
\Def_{\sE_V} = \left(\oloc(\sE_V), \dbar + Q^{hol} + \{I,-\}\right) .
\een
We'd like to characterize deformations that preserve holomorphically translation invariance. 

%Note that if $E$ is any vector bundle on $\CC^d$ we can consider the fiber at zero of its jet bundle that we denote $J_0 E$. 
Recall that in the holomorphic case there is the holomorphic jet bundle $J^{hol}V$.
The fiber at zero of this jet bundle may be identified as $J^{hol}_0 V = V_0 [[z_1,\ldots,z_d]]$ where the $z_i$'s denote the formal jet coordinate. 

\begin{cor}\label{cor: hol trans invt def}
Suppose that $Q^{hol} = 0$.
Then, there is a quasi-isomorphism
\ben
\left(\Def_{\sE_V}\right)^{\CC^{2d|d}} \simeq \CC \cdot \d^d z \tensor^{\LL}_{\CC[\partial_{z_1}, \ldots, \partial_{z_d}]} \sO_{red}(V_0[[z_1,\ldots,z_d]])[d].
\een
Equipped with differential $\{I^{hol},-\}$ where $I^{hol}$ only depends on holomorphic differential operators.
Here, $\partial_{z_i} = \frac{\partial}{\partial z_i}$ and $\CC \cdot \d^d z$ denotes the trivial right $\CC[\partial_{z_i}]$-module. 
\end{cor}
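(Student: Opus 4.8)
The plan is to peel the computation into three moves. First, the hypothesis $Q^{hol}=0$ lets me write the invariant deformation complex as $\left(\oloc(\sE_V)^{\CC^{2d|d}}, \dbar + \{I,-\}\right)$. Second, I would replace the large space $\oloc(\sE_V)^{\CC^{2d|d}}$ by the small model $\sO^{hol, trans}_{\rm loc}(\sE_V)$ furnished by Lemma \ref{lem: hol trans local}. Third, I would identify that model, together with its induced differential, with the derived tensor product on the right-hand side. The first two moves dispose of the $\dbar$-direction and of the holomorphic-translation-invariance constraint, while the last move is a purely jet-theoretic $D$-module identification.

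The subtlety in the second move is that Lemma \ref{lem: hol trans local} is a quasi-isomorphism only for the differential $\dbar + Q^{hol} = \dbar$, whereas the deformation complex carries the extra interaction term $\{I,-\}$. I would remedy this with a filtration argument: filter both sides by the number of inputs of a functional (equivalently, by polynomial degree in the fields). Since $I$ is at least cubic, the Hamiltonian derivation $\{I,-\}$ strictly raises this filtration degree, while $\dbar$ preserves it; hence on the associated graded the only surviving differential is $\dbar$, where Lemma \ref{lem: hol trans local} already supplies a quasi-isomorphism. A comparison of the two associated spectral sequences then promotes the inclusion to a quasi-isomorphism for the full differential $\dbar + \{I,-\}$. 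Under this identification the differential induced on $\sO^{hol,trans}_{\rm loc}(\sE_V)$ is the holomorphic bracket $\{I^{hol},-\}$: this is exactly the bracket compatibility recorded for the map $\int(-)^{\Omega^{0,*}}$ of \eqref{int eqn}, so the restriction of $\{I,-\}$ to the holomorphic model agrees with $\{I^{hol},-\}$, and $I^{hol}$ depends only on holomorphic differential operators by construction.

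For the third move I would use that $\sO^{hol,trans}_{\rm loc}(\sE_V)$ is identified, via $\int(-)^{\Omega^{0,*}}$, with the translation-invariant part of the holomorphic local functionals $\olochol(V) = \Omega^{d,hol}_X \tensor_{D_X^{hol}} \sO_{red}(J^{hol}(V))$, and then specialize the jet description at the origin of $\CC^d$. Translation invariance turns the sheaf $D_{\CC^d}^{hol}$ into its constant-coefficient subalgebra $U(\CC^d) = \CC[\partial_{z_1},\ldots,\partial_{z_d}]$, turns $\Omega^{d,hol}$ into the trivial rank-one right module $\CC\cdot\d^d z$, and identifies the jet fiber with $J^{hol}_0 V = V_0[[z_1,\ldots,z_d]]$. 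Since $\sO_{red}(J^{hol}V)$ is flat as a $D$-module, the strict tensor product may be replaced by the derived one, yielding $\CC\cdot\d^d z \tensor^{\LL}_{\CC[\partial_{z_i}]} \sO_{red}(V_0[[z_1,\ldots,z_d]])$. The shift $[d]$ is precisely the one in the quasi-isomorphism $\olochol(V)[d]\simeq \oloc(\sE_V)$ underlying \eqref{int eqn} (coming from the projection $\Omega^{d,*}\to\Omega^{d,d}[-d]$), and it carries over verbatim to the invariant subcomplexes.

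I expect the main obstacle to lie at the interface of the two differentials in the second move: one must check that the input-number filtration is exhaustive and complete in each fixed cohomological degree — recall $\olochol(V)$ is an infinite product over $n$ — so that the spectral-sequence comparison genuinely converges, and that passing to $\CC^{2d|d}$-invariants commutes with the filtration. The analytic input behind everything is the formal $\dbar$-Poincar\'e lemma, namely that the smooth jets of the Dolbeault complex resolve the holomorphic jets; this is what makes both Lemma \ref{lem: hol trans local} and the $D$-module comparison $\Omega^{d,hol}\tensor^{\LL}_{D^{hol}}(-)[d]\simeq \Omega^{d,d}\tensor^{\LL}_{D}(-)$ hold, and I would invoke it rather than reprove it.
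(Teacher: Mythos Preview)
Your proposal is correct and follows essentially the same route as the paper: both arguments filter by symmetric (polynomial) degree so that on the associated graded only $\dbar$ survives, invoke the formal $\dbar$-Poincar\'e lemma (equivalently, Lemma \ref{lem: hol trans local} or the quasi-isomorphism $U(\CC^{2d|d})\simeq \CC[\partial_{z_i}]$) to pass to holomorphic jets, and use flatness to replace the strict tensor product by the derived one. The only organizational difference is that the paper first writes the invariant complex as $\CC\cdot\d^d z\tensor_{U(\CC^{2d|d})}\sO_{red}(J_0 E_V)[d]$ and then runs the spectral sequence inside $\sO_{red}(J_0 E_V)$, whereas you first reduce to the holomorphic model $\sO^{hol,trans}_{\rm loc}(\sE_V)$ and then identify it jet-theoretically; the ingredients and the logic are the same.
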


The local functional $I$ defining the classical holomorphic theory endows $J^{hol}V[-1]$ the structure of a $L_\infty$ algebra in $D_{\CC^d}$-modules. 
Repackaging the statement using Lie algebraic data we can rewrite the equivalence in the lemma as
\ben
\left(\Def_{\sE_V}\right)^{\CC^{2d|d}}\simeq \CC \cdot \d^d z \tensor^{\LL}_{\CC[\partial_{z_1}, \ldots, \partial_{z_d}]} \cred^*\left(V_0[[z]][-1])\right) [d].
\een

\begin{proof}

By Lemma \ref{lem: hol trans local} we have an expression for the holomorphically translation local functionals
\ben
\left(\Def_{\sE_V}\right)^{\CC^{2d|d}} = \left(\CC \cdot \d^d z \tensor_{U(\CC^{2d|d})} \sO_{red} (J_0 E_V)[d] , \dbar + \{I,-\}\right) .
\een
Since $\sO_{red}(J_0 E_V)$ is flat as a $U(\CC^{2d|d})$-module, it follows that we can replace the tensor product by the derived tensor product $\tensor^{\LL}$ up to quasi-isomorphism so that
\ben
\left(\Def_{\sE_V}\right)^{\CC^{2d|d}} \simeq \left(\CC \cdot \d^d z \tensor^{\LL}_{U(\CC^{2d|d})} \sO_{red} (J_0 E_V) [d] , \dbar + \{I,-\}\right) .
\een
Consider the complex $\left(\sO_{red}(J_0 E_V) , \dbar + \{I,-\}\right)$.
This complex is graded by symmetric degree, and the associated spectral sequence has first page the associated graded of $\sO_{red}(J_0 E_V)$ equipped with the $\dbar$ differential.
Moreover, at the $E_1$-page, we have the quasi-isomorphism
\ben
\left(\sO(J_0 E_V), \dbar\right)= \left(\sO_{red}(V_0 [[z_i, \zbar_i]][\d \zbar_i]), \dbar\right) \simeq \sO_{red}(V_0[[z_i]]) .
\een

Finally, we have already remarked that there is a quasi-isomorphism of algebras $U(\CC^{2d|d}) \simeq U(\CC^d)$ where the right-hand site is generated by the constant holomorphic vector fields. 
The proof of the claim follows. 

\end{proof}

%We end with an example.
%
%\begin{eg}
%The BV theory $\sE_V$ of holomorphic BF theory associated to any Lie algebra $\fg$ was discussed in Example \brian{ref}.
%This theory exists on any complex manifold, but when placed on $\CC^d$ it is naturally holomorphically translation invariant.
%The classical theory induces the structure of a local $L_\infty$ algebra on the shifted complex of fields$\sE_V[-1]$, and the deformation complex can be written as the local Lie algebra cohomology $\cloc^*(\sE_V[-1])$.
%
%The dg Lie algebra $\sL = \Omega^{0,*}(X, \fg)$ has the natural structure of a local Lie algebra on any complex manifold $X$.
%Its $!$-dual $\sL^! = \Omega^{d,*}(X, \fg^*)[d]$ has the natural structure of a $\sL$-module where $\fg$ acts on $\fg^*$ by the coadjoint and the Dolbeualt forms act on themselves via wedge product.
%Note that as a local $L_\infty$ algebra $\sE_V[-1]$ can be written as an extension
%\ben
%0 \to \sL^![-3] \to \sE_V[-1] \to \sL \to 0 .
%\een
%\end{eg}

The holomorphic BF system, as in Example \ref{eg: bf}, on $X = \CC^d$ is an example of a holomorphically translation invariant theory. 
So is the topological BF system, as in Example \ref{eg: bftop}. 

\begin{eg} {\em Holomorphic superpotential.} 
This is a different flavor of a holomorphically translation invariant theory involving the $\beta\gamma$ system and is largely motivated by physics. 
Consider the $\beta\gamma$ system on $\CC^d$ with values in $V$. 
In addition, let $W \in \CC[V]$ be a polynomial on the vector space $V$. 
Then, $W$ extends in a natural way to a Dolbeault valued functional on $\Omega^{0,*}(\CC^d) \tensor V$. 
One defines the local functional
\[
I_W (\beta, \gamma) = \int_{\CC^d} \d^d z \; W(\gamma) .
\] 
It is immediate to see that $I_W$ is holomorpically translation invariant. 
On the other hand, it is not, in general, a degree {\em zero} functional. 
Hence, it does not define a classical theory in the usual sense. 
It does, however, define a slightly weaker classical theory that is only $\ZZ/2$ graded rather than the usual $\ZZ$ grading we are accustomed to.
 
We don't develop the formal definition here, but $I_W$ defines a holomorpically translation invariant $\ZZ/2$-graded BV theory. 
When $d=2$, the $\beta\gamma$ system arises as the minimal twist of the free $\cN = 1$ chiral supermultiplet on $\RR^4$. 
In the presence of the interaction $I_W$, the theory is equal to the minimal twist of the $\cN=1$ chiral multiplet with {\em holomorphic superpotential} given by $W$. \footnote{In super language, the superpotential term is usually written as $\int d^2 \theta \int \d^4 x W(\Phi)$, where $\Phi$ is the chiral superfield.}
\end{eg}

\section{One-loop regularization for theories on $\CC^d$} \label{sec: hol renorm}

In Wilsonian's approach to quantum field theory, constructing the path integral involves exhibiting a family of theories parametrized by some scale $L > 0$, that we take for illustration to be in units of length. 
The main idea is that the theory at scale $L$ describes all interactions happening at length scales smaller than or equal to $L$. 
To obtain the full perturbative QFT, one takes the limit $L \to \infty$, where all quantum interactions are included. 
In practice, one has a good handle on the theory between some finite scales $\epsilon < L$, and to obtain the theory at scale $L$ one must make sense of the $\epsilon \to 0$ limit. 
Generally speaking, the naive limit is ill-defined; this is the part of the strategy for constructing a QFT where renormalization comes in. 

In this section we consider the renormalization of holomorphic field theories on $\CC^d$, for general $d \geq 1$. 
We start with a classical holomorphic theory on $\CC^d$ and study its one-loop homotopy renormalization group flow from some finite scale $\epsilon$ to scale $L$.
This is where the theory is completely well-defined. 
Explicitly, this flow manifests as a sum over weights of graphs; that is, {\em Feynman diagrams}.
In terms of diagrams, we consider the sum over graphs of genus at most one where at each vertex we place the holomorphic interaction defining the classical theory.
The edges of the graphs are labeled by the propagator, which, for us, is an effective replacement for the Green's function of the $\dbar$ operator defining the kinetic piece of the holomorphic field theory.

To obtain a quantization of a classical theory one must make sense of the $\epsilon \to 0$ limit of this construction. 
In general, this involves introducing a family of {\em counterterms}.
The presence of counterterms can be an often undesirable, but necessary part of constructing a quantum field theory.
On one hand, logarithmic counterterms encode the $\beta$-function of an interacting field theory, which is a sensitive invariant and is important quantity to experimentally measure quantities in QFT.
Roughly, this quantity measures how couplings run with renormalization group flow. 
Counterterms can also be extremely unwieldy. 
For instance, some theories of gravity require the introduction of infinitely many such counterterms \cite{THooftVelt}.
In this paper, we show how holomorphic theories on flat space are as well-behaved as possible when it comes to renormalization. 

Our main result in this section is the following (which we state more carefully in Theorem \ref{thm: holrenorm3} below):

\begin{thm} \label{thm: holrenorm2}
For a holomorphic theory on $\CC^d$, there exists a one-loop (pre)quantization where the naive $\epsilon \to 0$ limit exists and no counterterms are required.
\end{thm}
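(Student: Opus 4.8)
The plan is to reduce the claim to a concrete analytic estimate on the Feynman weights attached to holomorphically translation invariant theories on $\CC^d$. The renormalization group flow $W(P_{\epsilon<L},-)$ is a sum over connected graphs $\gamma$ of the weights $w_\gamma(P_{\epsilon<L}, I)$, where each internal edge carries the propagator $P_{\epsilon<L}$ for the $\dbar$-operator and each vertex carries the classical interaction $I = \int I^{\Omega^{0,*}}$. Since we only demand a one-loop quantization, it suffices to control graphs of genus zero and one; the genus-zero (tree-level) weights are manifestly finite as $\epsilon \to 0$ because no regularization is needed for trees, so the entire content lies in the one-loop graphs, which are wheels with some number of external legs. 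Thus the theorem will follow once I show that for every wheel $\gamma$ the limit $\lim_{\epsilon\to 0} w_\gamma(P_{\epsilon<L}, I)$ exists as a well-defined element of $\oloc(\sE_V)$, with no counterterm subtraction required.

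First I would write down the propagator explicitly. Using the heat-kernel regularization for the $\dbar$-Laplacian on $\CC^d$, the propagator $P_{\epsilon<L} = \int_\epsilon^L \dbar^* K_t \, \d t$ is built from the standard Gaussian heat kernel, and its essential feature is that it pairs a holomorphic derivative $\partial/\partial z_i$ against the antiholomorphic volume form, carrying the factors $\d\zbar_i$ that saturate the Dolbeault form-degree. The key structural input from the earlier part of the paper is that the interaction $I^{hol}$ is \emph{purely holomorphic}: by Lemma \ref{lem: hol trans local} and Corollary \ref{cor: hol trans invt def}, after passing to cohomology the vertices only involve the operators $\partial/\partial z_i$ and carry no $\dbar\zbar_i$ legs of their own. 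Consequently, when one contracts the analytic expression for a wheel, the only source of antiholomorphic one-forms $\d\zbar_i$ on each internal line is the propagator itself.

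The heart of the argument is then a counting of antiholomorphic form-degree around the loop of a wheel with $n$ internal edges. Each propagator contributes a bounded number of $\d\zbar$'s, and integration over $\CC^d$ against $\d^d z$ can only be nonzero on the $(d,d)$-component. I would show that the number of $\d\zbar_i$ factors available from $n$ propagators is exactly insufficient to produce the full form-type needed to generate a divergent integral: the wedge of the antiholomorphic forms forces enough of the would-be-divergent configuration-space integrand to vanish by antisymmetry, leaving an integral over the loop parameter $\int_\epsilon^L \cdots$ whose $\epsilon\to 0$ limit converges. Concretely, after performing the Gaussian integrations over the internal vertices (a routine calculation once the form-type bookkeeping is fixed), the residual integral over the proper-time parameters of the edges is of the form $\int_{[\epsilon,L]^n} f(t_1,\dots,t_n)\,\d t_1\cdots\d t_n$ with $f$ integrable near the origin precisely because the holomorphic structure has eaten the leading singular behavior. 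This is where I would invoke (or reprove, in the flat affine setting) the analytic vanishing lemma that is the holomorphic analog of Li's finiteness argument in \cite{LiFeynman} and the Costello--Li estimates in \cite{bcov}.

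The main obstacle I anticipate is the form-degree bookkeeping on higher-valence wheels: I must verify that for every $n$ and every distribution of external legs the antisymmetrization genuinely kills the divergence, rather than merely reducing its order, and that this holds uniformly so that the naive $\epsilon\to 0$ limit (not just a renormalized one) exists. Carrying this out cleanly will require organizing the Feynman weight as an integral over configuration space $(\CC^d)^n$ and tracking how the holomorphic derivatives from the vertices act on the Gaussian kernels; the payoff is that the would-be counterterms, which live in $\oloc(\sE_V)$, are forced to vanish identically, giving the sharper statement made precise in Theorem \ref{thm: holrenorm3}.
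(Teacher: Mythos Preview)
Your outline has the right skeleton (reduce to wheels, write the heat-kernel propagator, exploit that vertices carry only $\partial/\partial z_i$), but it blurs together two mechanisms that the paper keeps separate, and in doing so it hides the genuine analytic step.

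The paper's proof splits wheels by the number $k$ of vertices. For $k \leq d$ the weight vanishes \emph{identically} by your form-degree count: the product of propagators around the loop is a $(0,q)$-form on $\CC^{d(k-1)}$ with $q > d(k-1)$, so it is zero before any integration. For $k > d$, however, the form-degree count does \emph{not} kill anything; the integrand is a perfectly good top form and the weight is nonzero. Your proposal suggests antisymmetrization handles all $n$ uniformly (``the wedge of the antiholomorphic forms forces enough of the would-be-divergent configuration-space integrand to vanish''), but for large wheels this is simply false, and the convergence must come from somewhere else.

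That somewhere else is what you call a ``routine Gaussian integration,'' and it is not routine. After the change of variables $w^\alpha = z^{\alpha+1}-z^\alpha$, the integrand carries polynomial factors $\bar w^\alpha_i/t_\alpha$ (and higher powers from the holomorphic derivatives at the vertices) multiplying the Gaussian $E(w,t)$. The paper eliminates these via a specific integration-by-parts trick: one constructs $t$-dependent holomorphic differential operators $D_{\alpha,i_\alpha}(t)$ with the property $D_{\alpha,i_\alpha} E = (\bar w^\alpha_{i_\alpha}/4t_\alpha)\cdot(\cdots)\cdot E$, then moves all the $D$'s onto the compactly supported test function. Only after this step is the $w$-integral a clean Gaussian, giving the determinant factor $\bigl(\tfrac{t_1\cdots t_k}{t_1+\cdots+t_k}\bigr)^d$. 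The remaining $t$-integral is then $\int_{[\epsilon,L]^k} (t_1+\cdots+t_k)^{-d}\,\d t$, which converges as $\epsilon\to 0$ precisely when $k>d$ (by AM--GM). So the finiteness for large wheels is an honest power-counting estimate in the Schwinger parameters, not a consequence of form-type, and the dichotomy at $k=d$ is what makes the two halves fit together. Your write-up should make this split explicit and supply the integration-by-parts operators; without them the ``residual integral over the proper-time parameters'' is not the one you want.
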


\begin{rmk}
Already, in \cite{LiVertex} Li has proved a stronger version of Theorem \ref{thm: holrenorm2} when the complex dimension is $d =1$.
His result holds to all orders in $\hbar$, and applies it to give an elegant interpretation of the quantum master equation for chiral conformal field theories on (flat) Riemann surfaces using vertex algebras.
Although we do not make any statements in this thesis past one-loop
quantizations, the higher loop behavior remains a rich and subtle problem that we hope to return to.
\end{rmk}

As a peculiar corollary of our main result, and our work in developing the one-loop $\beta$-function for QFT in the BV formalism \cite{EWY}, we have the following. 

\begin{cor}
The one-loop $\beta$-function of a holomorphic theory on $\CC^d$ is identically zero. 
\end{cor}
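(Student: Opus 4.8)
The plan is to deduce the vanishing of the $\beta$-function directly from the absence of counterterms in Theorem \ref{thm: holrenorm2}, using the identification of the one-loop $\beta$-function with a logarithmic counterterm established in \cite{EWY}. First I would recall the relevant setup: on $\CC^d = \RR^{2d}$ the rescaling action $z \mapsto \lambda z$, for $\lambda \in \RR_{>0}$, acts on the space of fields $\sE_V = \Omega^{0,*}(\CC^d, V)$ and hence on all the structures of the theory, and a classical holomorphic theory is invariant under this action up to the definite scaling weights built into the propagator and the interaction. The one-loop $\beta$-function is then, following \cite{EWY}, the infinitesimal generator of the combined local renormalization group flow — the homotopy RG flow $W(P_{\epsilon<L},-)$ together with the rescaling action — acting on the one-loop effective interaction, and it is naturally measured by a class in the cohomology of the deformation complex $\Def_{\sE_V}$.

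The key step is to extract from \cite{EWY} the following identification: the one-loop $\beta$-function is computed as the coefficient of the logarithmically divergent term appearing in the $\epsilon \to 0$ asymptotic expansion of the one-loop Feynman weights. Concretely, the counterterm $I^{\rm CT}(\epsilon)$ needed to define the $\epsilon \to 0$ limit decomposes into power-law divergent pieces together with a logarithmically divergent piece of the form $\log(\epsilon) \cdot \beta$, and it is precisely this coefficient $\beta$ that represents the one-loop $\beta$-function. The power-law divergences are scheme-dependent and can be absorbed by local redefinitions, whereas the logarithmic coefficient is the genuine, scheme-independent invariant.

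With this in hand the proof is immediate. Theorem \ref{thm: holrenorm2} produces a one-loop quantization for which the naive $\epsilon \to 0$ limit of the RG flow already exists and for which no counterterms whatsoever are required. In particular there is no logarithmically divergent counterterm, so its coefficient $\beta$ vanishes identically, and by the identification above the one-loop $\beta$-function is zero.

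The main obstacle I anticipate lies not in the final deduction but in the bookkeeping: I must verify that the ``no counterterms'' statement of Theorem \ref{thm: holrenorm2} is compatible with, and genuinely tracks, the rescaling action, so that the absence of counterterms forces the logarithmic piece — where the $\beta$-function lives — to vanish rather than merely reshuffling it into the finite part of the effective interaction. This amounts to checking that the holomorphic gauge-fixing and propagator used in the construction of the quantization can be chosen with definite scaling weight, so that the finiteness of the $\epsilon \to 0$ limit and the scale-covariance of the quantization are controlled simultaneously; granting this, the vanishing of the logarithmic counterterm is exactly the vanishing of the one-loop $\beta$-function.
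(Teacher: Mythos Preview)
Your proposal is correct and matches the paper's approach: the paper does not give a separate proof of this corollary but simply presents it as an immediate consequence of Theorem~\ref{thm: holrenorm2} combined with the identification of the one-loop $\beta$-function with the logarithmic counterterm from \cite{EWY}. Your write-up in fact supplies more detail than the paper itself, including the caution about scale-covariance of the gauge-fixing and propagator, which the paper leaves implicit.
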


This corollary has ``no-go" style consequences for twists of supersymmetric field theories. 
As we have already mentioned, often times a supersymmetric field theory on $\RR^{2d}$ admits a holomorphic twist where half of the translations are left $Q$-exact.
This result implies that the $\beta$-function is not protected under such holomorphic twists. 
For instance, $\cN=1$ supersymmetric Yang-Mills on $\RR^4$ admits a holomorphic twist to holomorphic BF theory. 
While Yang-Mills has a non-trivial $\beta$-function, our results show that the $\beta$-function for holomorphic BF theory is zero. 

The proof of the main result will be involve explicit evaluations and estimates of weights of Feynman diagrams. 
Before proceeding with the core analysis, we set up the problem using our notation and conventions used above. 

Suppose $(V, Q^{hol}, (-,-)_V)$ prescribes the data of a free holomorphic theory on $\CC^d$.
This means that $V$ is a holomorphic bundle on $\CC^d$, $Q^{hol} : \sV^{hol} \to \sV^{hol}$ is a holomorphic differential operator, and $(-,-)_V$ is a (shifted) $K_{\CC^d}$-valued pairing on $V$. 
We assume, in addition, that $Q^{hol}$ is translation invariant.
Concretely, this means that
\ben
Q^{hol} \in \CC \left[\frac{\partial}{\partial z_1} , \ldots, \frac{\partial}{\partial z_d}\right].
\een

The complex of fields, in the BV formalism, are given by the following deformed Dolbeault complex
\ben
\sE_V = \left(\Omega^{0,*}(\CC^d, V), \dbar + Q^{hol}\right) .
\een
We will fix a trivialization for the holomorphic vector bundle $V = \CC^d \times V_0$, where $V_0$ is the fiber over $0 \in \CC^d$.
This leads to an identification $\Omega^{0,*}(\CC^d , V) = \Omega^{0,*}(\CC^d) \tensor_\CC V_0$.
Further, we write the $(-1)$-shifted symplectic structure defining the classical BV theory in the form
\ben
\omega_V(\alpha \tensor v, \beta \tensor w) = (v,w)_{V_0} \int \d^d z (\alpha \wedge \beta)
\een
where $(-,-)_{V_0}$ is a degree $(d-1)$-shifted pairing on the finite dimensional vector space $V_0$. 

A holomorphic interacting theory is prescribed by a holomorphic Lagrangian $I^{hol} \in \sO_{\rm loc}^{hol,+}(V)$, see Definition \ref{dfn: plus}. 
As we have seen in Section \ref{sec: interacting} any holomorphic Lagrangian determines a local functional on its Dolbeualt complex via integration $I = \int_X I^{\Omega^{0,*}}$. 
Here, as above, the notation $I^{\Omega^{0,*}}$ denotes the canonical extension of $I^{hol}$ to the Dolbeualt complex for $V$. 
Using the trivialization $V = \CC^d \times V_0$ and for the translation invariant holomorphic top forms by $\CC \cdot \d^d z$, we can express the local functional as
\ben
I_k (\alpha) = \int I^{hol}_k(\alpha) = \int D_{k,1}(\phi_{k,1} (\alpha))\cdots D_{k,k} ( \phi_{k,k}(\alpha)) \d^d z 
\een
where each $D_{i,j}$ is a holomorphic differential operator $D_{i,j} \in \CC\left[\frac{\partial}{\partial z_i}\right]$, and $\phi_{i,j} \in V_0^\vee$.

\subsection{Homotopy RG flow}

As we've already mentioned, the main goal of this section is to show that for holomorphic theories on $\CC^d$ the one-loop renormalization group flow produces a prequantization modulo $\hbar^2$. 
We follow the terminology of \cite{CosRenorm} and use {\em prequantization} to refer to an effective family of functionals satisfying renormalization group flow but not necessarily the quantum master equation. 
We will see consequences of our result for solving the quantum master equation modulo $\hbar^2$ in the next section.  

The building block in Costello's approach to renormalization is an effective family of functionals $\{I[L]\}$ parametrized by a {\em length scale} $L > 0$. 
For each $L > 0$ the functional $I[L] \in \sO(\sE)[[\hbar]]$ must satisfy various conditions, which are carefully stated in Definition 8.2.9.1 of \cite{CG2}. 
We will recall some key aspects that will be useful for our purposes. 
The main condition is a compatibility between the functionals $I[L]$ as one changes the length scale; this is referred to as {\em homotopy renormalization group (RG) flow}.
The flow from scale $L>0$ to $L'>0$ is encoded by an invertible linear map
\ben
W(P_{L < L'} , -) : \sO^+(\sE)[[\hbar]] \to \sO^+(\sE)[[\hbar]]
\een
defined as a sum over weights of graphs $W (P_{L<L'}, I) = \sum_{\Gamma} W_{\Gamma}(P_{L<L'}, I)$. 
Here, $\Gamma$ denotes a graph, and the weight $W_\Gamma$ is defined as follows.
One labels the vertices of valence $k$ by the $k$th homogenous component of the functional $I$. 
The edges of the graph are labeled by the propagator $P_{L<L'}$.
The total weight is given by iterative contractions of the homogenous components of the interaction with the propagator. 
For a more precise definition see Chapter 2 of \cite{CosRenorm}.

The family of functionals $\{I[L]\}$ defining a quantization must satisfy the {\em RG flow equation}
\ben
I[L'] = W(P_{L<L'}, I[L])
\een
for all $L < L'$. 
Given a classical interaction $I \in \oloc(\sE)$, there is a natural way to attempt construct an effective family of functionals satisfying the RG flow equations.
Indeed, it follows from elementary properties of the homotopy RG flow operator $W(P_{L < L'}, -)$ that {\em if} the functional
\ben
I[L] \;\; ``=" \;\; W(P_{0<L}, I) 
\een
were to be well-defined for each $L >0$, then the RG flow equations would automatically be satisfied for the collection $\{I[L]\}$. 
The problem is that this naive guess is ill-defined due to the distributional nature of the propagator $P_{0<L}$. 
The approach of Costello is to introduce a small parameter $\epsilon > 0$ and to consider the limit of the functionals $W(P_{\epsilon < L}, I)$ as $\epsilon \to 0$. 
For most theories, this $\epsilon \to 0$ limit is ill-defined, but there always exist $\epsilon$-dependent {\em counterterms} $I^{CT}(\epsilon)$ rendering the existence of the $\epsilon \to 0$ limit of $W(P_{\epsilon < L}, I - I^{CT}(\epsilon))$. 

Our main goal in this section amounts to showing that the naive $\epsilon \to 0$ limit exists without the necessity to introduce counterterms. 
This is a salient feature of holomorphic theories on $\CC^d$ that we will take advantage of to characterize anomalies, for instance. 

We will only consider quantizations defined modulo $\hbar^2$.
In this case, the homotopy RG flow takes the explicit form:
\ben
W(P_{\epsilon<L}^V , I) = \sum_{\Gamma} \frac{\hbar^{g(\Gamma)}}{|{\rm Aut}(\Gamma)|} W_\Gamma (P_{\epsilon<L}^V, I) .
\een
The sum is over graphs of genus $\leq 1$ and $W_\Gamma$ is the weight associated to the graph $\Gamma$. 

We can now state the main result of this section.

\begin{thm}\label{thm: holrenorm3}
Let $\sE$ be a holomorphic theory on $\CC^d$ with classical interaction $I^{cl}$.  
Then, there exists a one-loop prequantization $\{I[L] \; | \; L > 0\}$ of $I^{cl}$ involving no counterterms. 
That is, the $\epsilon \to 0$ limit of
\ben
W(P_{\epsilon<L} , I) \mod \hbar^2 \in \sO(\sE) [[\hbar]] / \hbar^2
\een
exisits.
Moreover, if $I$ is holomorphically translation invariant we can pick the family $\{I[L]\}$ to be holomorphically translation invariant as well.
\end{thm}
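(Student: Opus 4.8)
The plan is to produce the family by the naive formula $I[L] := \lim_{\epsilon \to 0} W(P_{\epsilon<L}, I)$ and to prove that this limit exists modulo $\hbar^2$. Once existence is established, no separate verification of the renormalization group equation is needed: the homotopy RG flow operator obeys the semigroup law
\[
W(P_{\epsilon<L'},-) = W(P_{L<L'},-)\circ W(P_{\epsilon<L},-),
\]
and $W(P_{L<L'},-)$ is already well-defined for $0<L<L'$, so the collection $\{I[L]\}$ automatically satisfies $I[L'] = W(P_{L<L'}, I[L])$. Working modulo $\hbar^2$ restricts the graph sum to genus $\leq 1$. Tree ($g=0$) weights are integrals of products of the smooth propagator $P_{\epsilon<L}$ against the fixed interaction and smooth external inputs with no internal loop, so their $\epsilon \to 0$ limits exist with no difficulty. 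Hence everything reduces to the genus-one graphs, whose core is a single cycle (a wheel) with trees attached; the attached trees contribute finite factors and do not affect existence of the limit, so it suffices to control the weight of a wheel with $n \geq 1$ vertices.

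Next I would write the wheel weight explicitly using the heat-kernel propagator for $\dbar$ on $\CC^d$. Recall that $P_{\epsilon<L} = \int_\epsilon^L \d t\,(\dbar^* \otimes 1)K_t$, where $K_t$ is the Dolbeault heat kernel; concretely $P_{\epsilon<L}(z,w)$ is a sum of terms of the form $\int_\epsilon^L \frac{\d t}{t^{d+1}}\,(\zbar-\bar{w})\, e^{-|z-w|^2/4t}$ wedged with a $(0,d-1)$-form in the antiholomorphic coordinates, tensored with the finite-dimensional data on $V_0$. The wheel weight is then an integral over the $n$ vertex coordinates in $(\CC^d)^n$ and the $n$ Schwinger parameters $t_e \in [\epsilon,L]$ of the product of these propagators and the holomorphic derivatives placed at each vertex, paired against the external inputs and the holomorphic volume forms $\d^d z_v$. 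The features to isolate are: (i) each propagator supplies exactly $d-1$ antiholomorphic one-forms $\d\zbar$ and a single antiholomorphic position factor $(\zbar-\bar{w})$ coming from $\dbar^*$; and (ii) every factor coming from $Q^{hol}$ and from the interaction $I$ is purely holomorphic, i.e. involves only $\partial_z$ and carries no $\d\zbar$.

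The heart of the argument — and the hard part — is the estimate that the Schwinger-parameter integrand of the wheel weight extends integrably (indeed continuously) to the boundary $t_e \to 0$, uniformly in $\epsilon$. The mechanism I expect to exploit is a form-degree saturation peculiar to the holomorphic setting: at each vertex the holomorphic volume form $\d^d z_v$ can only be completed to a top form by exactly $d$ antiholomorphic one-forms $\d\zbar_v$, and these must be drawn from the two adjacent propagators and the external legs, while $\d\zbar_{v,a}\wedge\d\zbar_{v,a}=0$ caps the antiholomorphic content available there. Carrying out the Gaussian integrals over vertex positions first, the decisive point is that a purely antiholomorphic Gaussian moment vanishes, so a nonzero contribution forces the holomorphic momentum factors supplied by the vertices to Wick-contract against the antiholomorphic factors supplied by the propagators. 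Since the number of holomorphic derivatives is fixed by the (cubic-or-higher, hence bounded) classical interaction and cannot grow with the loop, the would-be divergent region of the $t_e$-integral is exactly where no such saturating contraction exists, and there the integrand vanishes identically; what survives has a strictly integrable $t \to 0$ singularity. Making this vanishing-plus-power-counting argument precise across all wheel sizes $n$ and all admissible distributions of the $\partial_z$'s, uniformly in $\epsilon$, is the genuinely analytic step, and it is where the flat metric and explicit Gaussian kernel on $\CC^d$ are indispensable.

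Finally, to obtain the holomorphically translation invariant refinement, I would observe that when $I \in \oloc(\sE_V)^{\CC^{2d|d}}$ the whole construction stays inside this subspace. The heat kernel, and hence the standard choice of propagator $P_{\epsilon<L}$ on $\CC^d$, is translation invariant and $\bar{\eta}_i$-closed, so that $W(P_{\epsilon<L},-)$ preserves the (closed) subspace of holomorphically translation invariant functionals: each weight $W_\Gamma(P_{\epsilon<L}, I)$ is again translation invariant and annihilated by every $\bar{\eta}_i$. Because this subspace is closed under the $\epsilon \to 0$ limit established above, the family $\{I[L]\}$ may be taken holomorphically translation invariant, which completes the argument.
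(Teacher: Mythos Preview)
Your overall architecture is right and matches the paper: genus zero is harmless, every genus-one graph factors through a wheel, so the problem reduces to the ``analytic weight'' of a $k$-vertex wheel. The form-degree idea you sketch is also on target for small wheels: after passing to difference coordinates $w^\alpha = z^{\alpha+1}-z^\alpha$, the product of propagators depends only on $w^1,\ldots,w^{k-1}$ and has antiholomorphic degree $(d-1)k$, which exceeds the available $d(k-1)$ whenever $k<d$; the paper handles $k=d$ by a short contraction argument. So the wheel weight is \emph{identically zero} for $k\le d$.

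The gap is in your treatment of $k>d$. There is no vanishing here, and the mechanism you describe---holomorphic vertex derivatives Wick-contracting against the antiholomorphic $\bar w$ factors from the propagators, with the ``divergent region'' of the $t$-integral corresponding to places where no such contraction exists---is not what happens. The integrand does not vanish on any region of the $t$-simplex; one needs a genuine estimate. The paper's device (``Si's trick'') is to observe that each factor $\frac{\bar w^\alpha_{i}}{t_\alpha}$ (and its higher powers) multiplying the Gaussian $E(w,t)$ is obtained by applying a first-order \emph{holomorphic} differential operator $D_{\alpha,i}(t)$, built from $\partial/\partial w^\beta_j$ with coefficients uniformly bounded in $t$, to $E(w,t)$. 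Integrating by parts throws all of these onto the compactly supported input $\Phi$, leaving a pure Gaussian integral over $w^1,\ldots,w^{k-1}$. That Gaussian has covariance matrix whose inverse determinant is $\frac{t_1\cdots t_k}{t_1+\cdots+t_k}$, and the resulting $t$-integral is
\[
\int_{[\epsilon,L]^k} \frac{\d t_1\cdots \d t_k}{(t_1+\cdots+t_k)^d},
\]
which by AM/GM is bounded by $\prod_\alpha \int_\epsilon^L t_\alpha^{-d/k}\,\d t_\alpha$ and hence has a finite $\epsilon\to 0$ limit precisely when $k>d$. Your remark that ``the number of holomorphic derivatives is fixed by the interaction and cannot grow with the loop'' plays no role; the point is rather that integration by parts trades the dangerous $\bar w/t$ factors for harmless $t$-bounded operators on $\Phi$. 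Your final paragraph on holomorphic translation invariance is fine.
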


\subsection{Holomorphic gauge fixing}
 
The next component of a prequantization is the choice of a gauge fixing condition.
From a physics point of view the choice of a gauge fixing condition is common place when computing quantities 
in QFT. 
Mathematically, it is equivalent to choosing an isotropic subspace of the space of fields which is necessary to define the path integral in the BV formalism.
In our philosophy of QFT, all theories are really defined over the space (or simplicial set) of gauge fixing conditions. 
The theory does not depend on a gauge fixing condition in the sense that a path in the space of gauge fixing conditions leads to a homotopy between the associated theories. 
See Chapter 5 of \cite{CosRenorm} for a thorough formulation of this. 

In our approach, a gauge fixing condition appears through the choice  gauge fixing operator is a square-zero operator on fields
\ben
Q^{GF} : \sE_V \to \sE_V[-1],
\een
of cohomological degree $-1$ such that $[Q, Q^{GF}]$ is a generalized Laplacian on $\sE$ where $Q$ is the linearized BRST operator. 
For a complete definition see Section 8.2.1 of \cite{CG2}.

For holomorphic theories there is a convenient choice for a gauge fixing operator. 
To construct it we fix the standard flat metric on $\CC^d$. 
Doing this, we let $\dbar^*$ be the adjoint of the operator $\dbar$.
Using the coordinates on $(z_1,\ldots, z_d) \in \CC^d$ we can write this operator as
\ben
\dbar^* = \sum_{i=1}^d \frac{\partial}{\partial (\d \zbar_i)} \frac{\partial}{\partial z_i} .
\een
The operator $\frac{\partial}{\partial (\d \zbar_i)}$ is the contraction with the anti-holomorphic vector field $\frac{\partial}{\partial \zbar_i}$. 
The operator $\dbar^*$ extends to the complex of fields via the formula
\ben
Q^{GF} = \dbar^* \tensor {\rm id}_V : \Omega^{0,*}(X , V) \to \Omega^{0,*-1}(X, V),
\een

\begin{lem}
The operator $Q^{GF} = \dbar^* \tensor {\rm id}_V$ is a gauge fixing operator for the free theory $(\sE_V, \dbar + Q, \omega_V)$.
\end{lem}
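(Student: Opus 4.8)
The plan is to verify the two defining properties of a gauge fixing operator directly from the definition recalled just before the lemma: namely, that $Q^{GF}$ is square-zero and of cohomological degree $-1$, and that the commutator $[Q, Q^{GF}]$ is a generalized Laplacian on $\sE_V$. Here the linearized BRST operator is $Q = \dbar + Q^{hol}$, so the bulk of the work is to unpack this bracket and identify its leading symbol.

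First I would observe that $Q^{GF} = \dbar^* \tensor \mathrm{id}_V$ has cohomological degree $-1$ because $\dbar^*$ lowers Dolbeault degree by one, and that it is square-zero since $(\dbar^*)^2 = 0$ (the formula $\dbar^* = \sum_i \frac{\partial}{\partial(\d\zbar_i)}\frac{\partial}{\partial z_i}$ makes this manifest, as the contraction operators $\frac{\partial}{\partial(\d\zbar_i)}$ anticommute and the holomorphic derivatives commute). These are immediate and require no analysis. Next I would expand
\ben
[Q, Q^{GF}] = [\dbar + Q^{hol}, \dbar^* \tensor \mathrm{id}_V] = [\dbar, \dbar^*] + [Q^{hol}, \dbar^* \tensor \mathrm{id}_V].
\een
The first term $[\dbar, \dbar^*] = \dbar\dbar^* + \dbar^*\dbar$ is, by standard Hodge theory on $\CC^d$ with the flat metric, the Dolbeault Laplacian $\Delta_{\dbar}$, which is (up to a harmless scalar factor of $1/2$ or so depending on conventions) the standard scalar Laplacian on $\RR^{2d}$ tensored with $\mathrm{id}_V$. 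This is the generalized Laplacian we want.

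The key point to address is then the second term $[Q^{hol}, \dbar^*\tensor\mathrm{id}_V]$, and the argument I expect to be the main obstacle, though it is ultimately a symbol computation. Since $Q^{hol}$ is a translation-invariant holomorphic differential operator, it lies in $\CC[\partial/\partial z_1,\ldots,\partial/\partial z_d]\tensor \mathrm{End}(V_0)$ and involves only the holomorphic derivatives $\partial/\partial z_i$ together with endomorphisms of $V_0$; in particular it commutes with the contraction operators $\frac{\partial}{\partial(\d\zbar_i)}$ and with all $\partial/\partial z_j$. One checks that $[Q^{hol}, \dbar^*\tensor\mathrm{id}_V]$ is therefore a differential operator whose order is strictly lower than that of $\Delta_{\dbar}$ in the relevant symbol sense, or which at worst contributes only holomorphic derivatives; the decisive observation is that the principal symbol of $[Q, Q^{GF}]$ is entirely controlled by $[\dbar,\dbar^*]$, since the flat scalar Laplacian has a positive-definite principal symbol in all $2d$ real directions while the correction term involves only the holomorphic momenta. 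Hence $[Q, Q^{GF}]$ remains a generalized Laplacian: its principal symbol is a positive-definite metric on $T^*\CC^d$ tensored with $\mathrm{id}_V$. I would conclude by invoking the definition of a generalized Laplacian (an operator whose principal symbol is given by the metric) to finish, noting that the lower-order holomorphic correction does not disturb this condition.
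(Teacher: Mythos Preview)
Your overall strategy matches the paper's, but there are two issues worth flagging.

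First, your treatment of the cross-term $[Q^{hol}, \dbar^* \tensor \mathrm{id}_V]$ is weaker than it needs to be and, as stated, not quite sufficient. You correctly observe that $Q^{hol}$ commutes with the contractions $\partial/\partial(\d\zbar_i)$ and with all $\partial/\partial z_j$; but from this it follows immediately that the graded commutator $[Q^{hol}, \dbar^* \tensor \mathrm{id}_V]$ vanishes \emph{identically}, not merely that it is a lower-order correction. The paper records exactly this: $[\dbar + Q^{hol}, Q^{GF}] = [\dbar,\dbar^*]\tensor \id_V$. Your hedged version (``strictly lower order \ldots or at worst contributes only holomorphic derivatives'') is dangerous, because $Q^{hol}$ is allowed to have arbitrary holomorphic order; if the cross-term were genuinely nonzero it could contribute a differential operator of order larger than two, and then $[Q,Q^{GF}]$ would fail to be second-order and hence fail to be a generalized Laplacian. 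So you should sharpen your own observation to conclude the cross-term is zero.

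Second, you omit a condition that the paper does verify: the gauge fixing operator must be (graded) self-adjoint for the shifted symplectic pairing $\omega_V$. The paper notes this is part of the complete definition (with a reference to \cite{CG2}) and checks it by the elementary integration-by-parts identity $\int (\dbar^*\alpha)\wedge\beta\wedge \d^d z = \pm \int \alpha \wedge (\dbar^*\beta)\wedge \d^d z$ for compactly supported Dolbeault forms. Without this, the argument is incomplete.
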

\begin{proof}
Clearly, $Q^{GF}$ is square zero since $(\dbar^*)^2 = 0$.
Since $Q^{hol}$ is a translation invariant holomorphic differential operator we have
\ben
[\dbar + Q^{hol}, Q^{GF}] = [\dbar,\dbar^*] \tensor \id_{V} .
\een
The operator $[\dbar,\dbar^*]$ is the Dolbeault Laplacian $\Delta_{\dbar}$ on $\CC^d$, which 
in coordinates is
\ben
\Delta_{\dbar} = -\sum_{i=1}^d \frac{\partial}{\partial \zbar_i}\frac{\partial}{\partial z_i} .
\een
In particular, the operator $[\dbar,\dbar^*] \tensor \id_{V}$ is a generalized Laplacian. 

Finally, we must show that $Q^{GF}$ is (graded) self-adjoint for the shifted symplectic pairing $\omega_V$. 
This follows from the fact about Dolbeualt forms on $\CC^d$.
If $\alpha,\beta \in \Omega^{0,*}_c(\CC^d)$ then
\ben 
\int_{\CC^d} (\dbar^* \alpha) \wedge \beta \wedge \d^d z = \pm \int_{\CC^d} \alpha \wedge (\dbar^* \beta) \wedge \d^d z .
\een
\end{proof}

\begin{rmk}
One may ask what happens if we choose a different metric on $\CC^d$ to define the gauge fixing operator. 
For every choice of a Hermitian metric $h$ on $\CC^d$ we obtain an operator $\dbar^*_h$ and hence a gauge fixing condition. 
In fact, this defines a {\em family} of theories defined over the space of all Hermitian metrics. 
Since this space is affine, hence connected, we can always choose a path to the standard metric to any other one, thus resulting in a homotopy equivalence between quantizations defined by the standard metric and the fixed one. 
The subtlety here is that the quantization provided by an arbitrary Hermitian metric may not be as simple as the one for the flat metric. 
In fact, there may be one-loop divergences. 
Nevertheless, the homotopical framework for QFT developed in \cite{CosRenorm} implies that the quantizations associated to two different Hermitian metrics will be equivalent. 
\end{rmk}

\subsection{The propagator on $\CC^d$}

The gauge fixing operator determines a generalized Laplacian, which for us is essentially the ordinary Dolbeault Laplacian on $\CC^d$. 
Our regularization scheme utilizes the heat kernel associated to the Laplacian, for which we recall the explicit form below.
By definition, the scale $L>0$ heat kernel is a symmetric element $K_L^V \in \sE_V(\CC^d) \tensor \sE_V(\CC^d)$ that satisfies
\ben
\omega_V(K_L, \varphi) = e^{-L[Q,Q^{GF}] } \varphi
\een
for any field $\varphi \in \sE_V$.
Thus, it is an integral kernel for the operator $e^{-L[Q,Q^{GF}]}$. 
For a more detailed definition of how heat kernels are used to defined a quantum field theory in the BV formalism, see Section 8.2.3 in \cite{CG2}. 
In this section we deduce the explicit form of the heat kernel for our holomorphic theory on $\CC^d$. 

The tensor square of $\sE_V(\CC^d)$ decomposes as 
\be\label{splitting}
\sE_V(\CC^d) \tensor \sE_V(\CC^d) = \left(\Omega^{0,*}(\CC^d) \tensor \Omega^{0,*}(\CC^d)\right) \tensor (V_0 \tensor V_0) .
\ee
We will decompose the heat kernel accordingly. 

Pick a basis $\{e_i\}$ of $V_0$ and let 
\ben
{\bf C}_{V_0} = \sum_{i,j} \omega_{ij} (e_i \tensor e_j) \in V_0 \tensor V_0
\een
be the quadratic Casimir.
Here, $(\omega_{ij})$ is the inverse matrix to the pairing $(-,-)_{V_0}$. 

Due to the nature of our symplectic pairing, we see that the heat kernel splits with respect to the decomposition in Equation (\ref{splitting}) as
\ben
K_{L}^V (z,w) = K^{an}_L(z,w) \cdot {\bf C}_{V_0} .
\een
The analytic part $K^{an}_L$ is independent of $V$ and equal to the heat kernel for Dolbeault Laplacian $\Delta_{\dbar}$ acting on Dolbeault forms on $\CC^d$. 

We can further split this analytic heat kernel as the heat kernel for the ordinary Laplacian acting on functions. 
Indeed, for $L>0$ the analytic heat kernel $K_L^{an}$ is equal to
\ben
K_L^{an} (z,w) = k_L^{an}(z,w) \prod_{i=1}^d (\d \zbar_i - \d \wbar_i)  \in \Omega^{0,*} (\CC^d) \tensor \Omega^{0,*} (\CC^d) \cong \Omega^{0,*} (\CC^d \times \CC^d) \cong C^\infty(\CC^d \times \CC^d)[\d z, \d w]
\een
where $k_L^{an}(z,w) \in C^\infty(\CC^d \times \CC^d)$ is the heat kernel for the Laplacian acting on functions. 
It is normalized by the rule
\[
(e^{- L \Delta_{\dbar}} f)(z) = \int_{w \in \CC^d} \d^{2d} w \; k_L^{an}(z,w) f(w)
\]
where $f \in C^\infty(\CC^d)$. 
Explicitly, $k_L^{an}$ is given by
\[
k^{an}_L(z,w) = \frac{1}{(2\pi i L)^d} e^{-|z-w|^2/4L}  .
\]

The propagator for the holomorphic theory $\sE_V$ is defined using the heat kernels above by the equation
\ben
P_{\epsilon < L}^V(z,w) = \int_{t=\epsilon}^L \d t (Q^{GF} \tensor 1) K_{L}^V(z,w) .
\een
Since the element ${\bf C}_{V_0}$ is independent of the coordinate on $\CC^d$, the propagator also decomposes as 
\[
P_{\epsilon < L}^V(z,w) = P_{\epsilon < L}^{an}(z,w) \cdot {\bf C}_{V_0}
\]
where
\begin{align*}
P_{\epsilon < L}^{an}(z,w) & = \int_{t=\epsilon}^L \d t (\dbar^* \tensor 1) K_{L}^V(z,w) \\
& = \int_{t=\epsilon}^L \d t \frac{1}{(2 \pi i t)^d} \sum_{j=1}^d (-1)^{j-1}  \left(\frac{\zbar_j - \wbar_j}{4 t} \right)  e^{-|z-w|^2 / 4t}  \prod_{i \ne j}^d (\d \zbar_i - \d \wbar_i) .
\end{align*}

The propagator $P_{\epsilon <L}$ is an effective replacement for the Green's function for $\dbar$ on $\CC^d$. 
In the limit as $\epsilon \to 0$ and $L \to \infty$, this propagator reduces to the Green's function for the Dobleault operator on $\CC^d$.
We can see this simplification explicitly. 
%We see that the differential form part above is proportional to the Bochner-Martinelli kernel $\omega_{BM} \in \Omega^*(\CC^d \times \CC^d \setminus \Delta)$ 
%\ben
%\omega_{BM}(z,w) = C_d \frac{1}{|z-w|^{2d}} \sum_{j} (-1)^{j-1} (\zbar_j - \wbar_j) (\d^d z - \d^d w) \prod_{i \ne j} (\d \zbar_i - \d \wbar_i) .
%\een
%where $C_d = (d-1)! / (2 \pi i)^d$ is a constant depending only on the dimension $d$.
%A simple corollary of the above calculation is the following fact that we will use later on in Section \ref{sec: local obs}.

First, we recall the form of the Green's function.
Introduce the $\delta$-distribution $\delta_\Delta$ along the diagonal in $\CC^d \times \CC^d$.
In formulas
\[
\delta_{\Delta} : \Omega^{0,*}_c(\CC^d) \times \Omega^{0,*}_c \to \CC \;\; , \;\; (\alpha, \beta) \mapsto \int_{\Delta \subset \CC^d \times \CC^d} \d^d z \wedge \alpha \wedge \beta.
\]
The Green's function for $\dbar$ is given in terms of the {\em Bochner-Martinelli kernel}.
To define it, first consider the smooth form on $\CC^d \times \CC^d$ away from the diagonal
$\omega_{BM} \in \Bar{\Omega}^{0,*}(\CC_z^d \times \CC_w^d \setminus \Delta)$ given by
\[
\omega_{BM}(z,w) = \frac{(d-1)!}{(2\pi i)^d} \frac{1}{|z-w|^{2d}} \sum_{i=1}^d (-1)^{i-1} (\zbar_i - \wbar_i) \prod_{j \ne i} (\d \zbar_j - \d \wbar_j) .
\]
Since $\dbar \omega_{BM}(z,w) = 0$ away from the diagonal, we see that $\dbar \omega_{BM}$ extends to a distribution form on $\CC^d \times \CC^d$. 
Indeed, this distribution solves Green's equation for the $\dbar$-operator
\[
\dbar \omega_{BM} \wedge \d^d z = \delta_{\Delta} .
\] 

For more details on the above kernel and its relation to higher residues, see Chapter 3 of \cite{GriffithsHarris} for instance.
For now, we have the immediate calculation. 

\begin{lem} \label{lem: bm}
The $\epsilon \to 0, L\to \infty$ distributional limit of the propagator $P_{\epsilon<L}(z,w)$ exists.
Moreover, as distributions
\ben
\lim_{\epsilon \to 0} \lim_{L \to \infty} P_{\epsilon<L}(z,w) = \omega_{BM}(z,w) .
\een
\end{lem}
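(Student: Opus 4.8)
The plan is to compute the $\epsilon \to 0$ and $L \to \infty$ limit of the analytic propagator $P^{an}_{\epsilon<L}(z,w)$ directly and show it converges to $\omega_{BM}(z,w)$; since the Casimir factor ${\bf C}_{V_0}$ is independent of spacetime coordinates, it suffices to analyze the analytic part. Starting from the explicit integral expression for $P^{an}_{\epsilon<L}$ derived above, I would first perform the change of variables $t \mapsto t$ in the $t$-integral, observing that the coordinate-dependent prefactor $\frac{\zbar_j - \wbar_j}{4t}$ and the Gaussian $e^{-|z-w|^2/4t}$ combine so that the $t$-integrand, up to signs and the antiholomorphic wedge factors $\prod_{i \neq j}(\d\zbar_i - \d\wbar_i)$, is
\[
\frac{1}{(2\pi i t)^d} \frac{\zbar_j - \wbar_j}{4t} e^{-|z-w|^2/4t}.
\]
The key step is then to evaluate $\int_0^\infty \frac{\d t}{t^{d+1}} e^{-|z-w|^2/4t}$, which is a standard Gamma-function integral: substituting $u = |z-w|^2/4t$ turns it into $\Gamma(d)$ times an appropriate power of $|z-w|^{-2d}$.

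\emph{Carrying this out}, the substitution $u = |z-w|^2/(4t)$ gives $\d t = -\frac{|z-w|^2}{4u^2}\d u$, and after collecting the powers of $u$ and $|z-w|$ one finds that the $t$-integral over $(0,\infty)$ produces exactly the factor $\frac{(d-1)!}{|z-w|^{2d}}$ (since $\Gamma(d) = (d-1)!$), together with the normalizing constant $\frac{1}{(2\pi i)^d}$ coming from the heat-kernel prefactor and the factor of $4$ from the exponent. Matching signs across the sum over $j$, the resulting expression is
\[
\frac{(d-1)!}{(2\pi i)^d}\frac{1}{|z-w|^{2d}}\sum_{j=1}^d (-1)^{j-1}(\zbar_j - \wbar_j)\prod_{i\neq j}(\d\zbar_i - \d\wbar_i),
\]
which is precisely $\omega_{BM}(z,w)$ as defined in the excerpt. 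Thus the pointwise limit away from the diagonal is immediate.

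\emph{The main obstacle} is justifying that this convergence holds \emph{as distributions} rather than merely pointwise away from the diagonal: the integrand has a non-integrable singularity as $z \to w$, so one must argue that the $\epsilon \to 0$ limit of $P^{an}_{\epsilon<L}$ is controlled uniformly when paired against a compactly supported test form. I would handle this by showing the family $\{P^{an}_{\epsilon<L}\}$ is uniformly bounded in the relevant space of distributional Dolbeault forms, exploiting the fact that the $\frac{\zbar_j - \wbar_j}{4t}$ prefactor supplies one power of $|z-w|$ that tames the singularity, so that the kernel is locally integrable and the dominated convergence theorem applies after pairing with a test form. The order of limits matters: one should take $L \to \infty$ first (which removes the upper cutoff in the $t$-integral and is unproblematic since the Gaussian decays), and only then $\epsilon \to 0$, as reflected in the nested limit in the statement. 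Finally, the normalization claim $\dbar\omega_{BM}\wedge \d^d z = \delta_\Delta$, which identifies $\omega_{BM}$ as the genuine Green's function, follows from the standard Bochner--Martinelli formula and can be cited from Chapter 3 of \cite{GriffithsHarris}.
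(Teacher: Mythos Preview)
Your proposal is correct and takes essentially the same approach as the paper: both perform the substitution $u = |z-w|^2/4t$ to convert the $t$-integral into the Gamma integral $\int_0^\infty u^{d-1}e^{-u}\,\d u = (d-1)!$, and then match the result with the explicit form of $\omega_{BM}$. Your treatment is in fact more thorough than the paper's, which simply writes out the substituted integral and says ``integration over $u$ produces the desired result'' without discussing the distributional convergence or the order of limits; your remarks on local integrability and dominated convergence fill a gap the paper leaves implicit.
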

\begin{proof}
Note that
\bestar
P_{\epsilon < L}(z,w) & = & \int_{t = \epsilon}^L \d t e^{- |z-w|^2 / 4t} \frac{1}{(2 \pi i t)^d} \sum_{j = 1}^d (-1)^{j-1} \frac{\zbar_j - \wbar_j}{4t} \prod_{i \ne j} (\d \zbar_i - \d \wbar_i) \\ & = & \frac{1}{(2 \pi i)^d} \frac{1}{|z-w|^{2d}} \sum_{j} (-1)^{j-1} (\zbar_j - \wbar_j) \\ & & \times \prod_{i \ne j} (\d \zbar_i - \d \wbar_i) \int_{u = |z-w|^2/L}^{|z-w|^2/\epsilon} \d u u^{d-1} e^{-u} .
\eestar
In the second line we have made the substitution $u = |z-w|^2 / 4t$.
Integration over $u$ produces the desired result. 
\end{proof}

\subsection{Trees}

We now turn to studying the one-loop effective action for the holomorphic theory on $\CC^d$. 
For the genus zero graphs, or trees, we do not have any analytic difficulties to worry about. 
The propagator $P_{\epsilon<L}^V$ is smooth so long as $\epsilon,L > 0$ but when $\epsilon \to 0$ it inherits a singularity along the diagonal $z = w$.
This is what contributes to the divergences in the naive definition of RG flow $W(P_{0<L}, -)$.
But, if $\Gamma$ is a tree the weight $W_\Gamma(P_{0<L}^V, I)$ only involves multiplication of distributions with transverse singular support, so is well-defined.
Thus we have observed the following.

\begin{lem} 
If $\Gamma$ is a tree then $\lim_{\epsilon \to 0} W_{\Gamma}(P_{\epsilon < L}, I)$ exists.
\end{lem}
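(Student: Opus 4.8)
The plan is to reduce the statement to the well-definedness of a product of distributions with transverse singular supports, the transversality being a direct consequence of $\Gamma$ having no cycles. Since $P_{\epsilon<L}$ is a smooth kernel for every $\epsilon, L > 0$, the weight $W_\Gamma(P_{\epsilon<L}, I)$ is manifestly a well-defined functional of the external fields for each $\epsilon > 0$, so the only issue is the existence of the $\epsilon \to 0$ limit. I would first record, exactly as in Lemma \ref{lem: bm} but keeping $L$ finite, that $\lim_{\epsilon \to 0} P_{\epsilon<L}(z,w)$ exists as a distribution $P_{0<L}$ on $\CC^d \times \CC^d$: after the substitution $u = |z-w|^2/4t$ the $t$-integral becomes $\int_{|z-w|^2/4L}^{\infty} u^{d-1} e^{-u} \, \d u$, which converges, and away from the diagonal $P_{0<L}$ is smooth and agrees with the Bochner--Martinelli kernel $\omega_{BM}$ up to the smooth factor coming from the incomplete Gamma integral. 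In particular the singular support of $P_{0<L}$ is the diagonal $\Delta = \{z = w\}$ and its wavefront set is contained in the conormal bundle $N^*\Delta$.

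Next I would write the tree weight explicitly. Labelling the internal vertices of $\Gamma$ by points $z_v \in \CC^d$, the weight is an integral over the product of vertex-position spaces of a product $\prod_{e} P_{0<L}(z_{a(e)}, z_{b(e)})$ of edge propagators, contracted against the local interaction kernels placed at the vertices (which apply the holomorphic differential operators $D_{i,j}$ of $I$ to the ends of the propagators) and against the fixed external fields on the external legs. The claim to be established is that this product of edge propagators is a well-defined distribution on the configuration space of vertices.

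The heart of the argument is the \emph{transversality} of the singular supports, which I would phrase through Hörmander's criterion on wavefront sets. Each edge $e = (a,b)$ contributes a factor singular only along the partial diagonal $\Delta_e = \{z_a = z_b\}$, with wavefront set in $N^*\Delta_e$; a covector there is supported on the pair $(a,b)$, equal to $+\eta_e$ at $z_a$ and $-\eta_e$ at $z_b$. Hörmander's theorem guarantees the product is defined provided no nontrivial sum $\sum_{e} \xi_e$ with $\xi_e \in N^*\Delta_e$ vanishes. Collecting contributions at each vertex, the vanishing of such a sum is precisely the condition that $\{\eta_e\}$ form a $1$-cycle on $\Gamma$; since $\Gamma$ is a tree it has no nontrivial cycles, so every $\eta_e = 0$ and the criterion holds. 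Crucially, this computation is insensitive to the holomorphic differential operators coming from $I$: applying $\partial_{z_i}$ only worsens the order of the singularity along $\Delta_e$ but keeps the wavefront set inside $N^*\Delta_e$ (away from derivatives one even sees directly that the kernel is locally integrable, with singularity of order $|z-w|^{-(2d-1)}$ on $\RR^{2d}$).

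Finally, the product being a genuine distribution, I would pair it with the compactly supported external fields and push it forward along the projection integrating out the internal vertices; this pushforward is well-defined because compact support of the external inputs makes the relevant projection proper on supports. To conclude that $\lim_{\epsilon \to 0} W_\Gamma(P_{\epsilon<L}, I)$ equals this distribution, I would use that $P_{\epsilon<L} \to P_{0<L}$ in the topology on distributions with wavefront set contained in the fixed closed cone $\Lambda = \bigcup_e N^*\Delta_e$, the natural topology in which multiplication of the transverse factors and the pushforward are continuous. I expect the main obstacle to be exactly this last point: justifying that the weight, assembled as product-then-pushforward, commutes with the $\epsilon \to 0$ limit. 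The transversality (cycle) computation is the conceptual content and is clean, whereas the analytic bookkeeping needed to interchange limit, product, and integration — and to verify that the convergence holds in the wavefront topology rather than merely weakly — is where the real care is required.
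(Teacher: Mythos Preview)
Your approach is essentially the same as the paper's: both rest on the observation that for a tree the product of edge propagators $P_{0<L}$ involves only distributions with transverse singular support, so the product and hence the weight is well-defined. The paper states this in a single sentence without justification, whereas you have supplied the natural detailed argument --- the H\"ormander wavefront criterion together with the combinatorial identification of a bad covector configuration with a nontrivial $1$-cycle on $\Gamma$ --- and you have also flagged the one genuinely delicate point (continuity of the product-then-pushforward in the wavefront topology as $\epsilon \to 0$) that the paper simply elides.
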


The only possible divergences in the $\epsilon \to 0$ limit, then, must come from graphs of genus one, which we now direct our attention to.

\subsection{A simplification for one-loop weights}

Every graph of genus one is a wheel with some trees protruding from the external edges of the tree.
Thus, we can write the weight of a genus one graph as a product of weights associated to trees times the weight associated to a wheel.
We have just observed that the weights associated to trees are automatically convergent in the $\epsilon \to 0$ limit, thus it suffices to focus on genus one graphs that are purely wheels with some number of external edges.

The definition of the weight of the wheel involves placing the propagator at each internal edge and the interaction $I$ at each vertex. 
The weights are evaluated by placing compactly supported fields $\varphi \in \sE_{V,c} = \Omega^{0,*}_c(\CC^d, V)$ at each of the external edges.
We will make two simplifications:
\begin{enumerate}
\item the only $\epsilon$ dependence appears in the analytic part of the propagator $P_{\epsilon<L}^{an}$, so we can forget about the combinatorial factor ${\bf C}_{V_0}$ and assume all external edges are labeled by compactly supported Dolbeault forms in $\Omega^{0,*}_c(\CC^d)$;
\item each vertex labeled by $I$ is a sum of interactions of the form
\ben
\int_{\CC^d} D_1(\varphi) \cdots D_k(\varphi) \d^d z
\een
where $D_i$ is a holomorphic differential operator (only involves $\frac{\partial}{\partial z_i}$-derivatives). 
Some of the differential operators will hit the compactly supported Dolbeault forms placed on the external edges of the graph.
The remaining operators will hit the internal edges labeled by the propagators.
Since a holomorphic differential operator preserves the space of compactly supported Dolbeault forms that is independent of $\epsilon$, we replace each input by an arbitrary compactly supported Dolbeault form.
\end{enumerate}

Thus, for the $\epsilon \to 0$ behavior it suffices to look at weights of wheels with arbitrary compactly supported functions as inputs where each of the internal edges are labeled by some translation invariant holomorphic differential operator 
\ben
D = \sum_{n_1,\ldots n_d} \frac{\partial^{n_1}}{\partial z_{1}^{n_1}}\cdots \frac{\partial^{n_d}}{\partial z_{d}^{n_d}}
\een
applied to the propagator $P_{\epsilon<L}^{an}$.
This motivates the following definition. 

\begin{dfn}\label{dfn: analytic weight}
Let $\epsilon , L > 0$. 
In addition, fix the following data.
\begin{enumerate}[(a)]
\item An integer $k \geq 1$ that will be the number of vertices of the graph.
\item For each $\alpha = 1, \ldots, k$ a sequence of integers
\ben
\vec{n}^\alpha = (n_1^\alpha, \ldots, n_d^{\alpha}) .
\een
We denote by $(\vec{n}) = (n_{i}^j)$ the corresponding $d \times k$ matrix of integers. 
\end{enumerate}
The analytic weight associated to the pair $(k, (\vec{n}))$ is the smooth distribution
\ben
W_{\epsilon < L}^{k, (n)} : C_c^\infty((\CC^d)^k) \to \CC,
\een
that sends a smooth compactly supported function $\Phi \in C_c^\infty((\CC^d)^k) = C_c^\infty(\CC^{dk})$ to
\be\label{weight1}
W_{\epsilon < L}^{k, (n)} (\Phi) = \int_{(z^1,\ldots, z^k) \in (\CC^d)^k} \prod_{\alpha=1}^k \d^d z^\alpha \Phi(z^1,\ldots,z^k) \prod_{\alpha = 1}^k \left(\frac{\partial}{\partial z^\alpha}\right)^{\vec{n}^\alpha} P_{\epsilon < L}^{an}(z^\alpha, z^{\alpha+1}) .
\ee
In the above expression, we use the convention that $z^{k+1} = z^1$. 
\end{dfn}

The coordinate on $(\CC^{d})^k$ is given by $\{z_i^\alpha\}$ where $\alpha = 1,\ldots,k$ and $i = 1, \ldots, d$. 
For each $\alpha$, $\{z_1^\alpha, \ldots, z_d^\alpha\}$ is the coordinate for the space $\CC^d$ sitting at the vertex labeled by $\alpha$. 
We have also used the shorthand notation
\ben
\left(\frac{\partial}{\partial z^\alpha}\right)^{\vec{n}^\alpha} = \frac{\partial^{n^\alpha_1}}{\partial z^\alpha_1} \cdots  \frac{\partial^{n^\alpha_d}}{\partial z^\alpha_d}.
\een

We will refer to the collection of data $(k, (\vec{n}))$ in the definition as {\em wheel data}.
The motivation for this is that the weight $W_{\epsilon < L}^{k, (n)}$ is the analytic part of the full weight $W_{\Gamma}(P^V_{\epsilon<L}, I)$ where $\Gamma$ is a wheel with $k$ vertices. 

We have reduced the proof of Proposition \ref{thm: holrenorm3} to showing that the $\epsilon \to 0$ limit of the analytic weight $W_{\epsilon < L}^{k, (\vec{n})}(\Phi)$ exists for any choice of wheel data $(k, (\vec{n}))$.
To do this, there are two steps. 
First, we show a vanishing result that says when $k \leq d$ the  weights vanish for purely algebraic reasons. 
The second part is the most technical aspect of the chapter where we show that for $k > d$ the weights have nice asymptotic behavior as a function of $\epsilon$.

\begin{lem} Let $(k, (\vec{n}))$ be a pair of wheel data.
If the number of vertices $k$ satisfies $k \leq d$ then
\ben
W_{\epsilon < L}^{k, (n)}  = 0
\een
as a distribution on $\CC^{dk}$ for any $\epsilon,L > 0$. 
\end{lem}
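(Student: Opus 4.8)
The plan is to establish the vanishing by a pure \emph{type} (bidegree) count: I will show that when $k \le d$ the differential form integrated in (\ref{weight1}) cannot contain the antiholomorphic volume form, so the distribution it defines is identically zero. First I would record, from the explicit expression for the propagator, that the entire antiholomorphic form content of each factor $P^{an}_{\epsilon<L}(z^\alpha,z^{\alpha+1})$ is the $(0,d-1)$-form $\prod_{i\ne j}(\d\zbar_i^\alpha-\d\zbar_i^{\alpha+1})$, built solely from the \emph{difference} one-forms $\eta_i^\alpha:=\d\zbar_i^\alpha-\d\zbar_i^{\alpha+1}$. Since the holomorphic derivatives $(\partial/\partial z^\alpha)^{\vec{n}^\alpha}$ and the holomorphic volume forms $\prod_\alpha\d^d z^\alpha$ do not alter the antiholomorphic type, the antiholomorphic part of the full integrand is a wedge of $k$ such $(0,d-1)$-forms, hence lies in $\wedge^{k(d-1)}$ of the span of the $\eta_i^\alpha$.

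The heart of the argument is the linear algebra of the $\eta_i^\alpha$ imposed by the cyclic structure of the wheel. For each fixed coordinate $i$, the $k$ one-forms $\eta_i^1,\dots,\eta_i^k$ satisfy the single telescoping relation $\sum_{\alpha=1}^k\eta_i^\alpha=\sum_\alpha(\d\zbar_i^\alpha-\d\zbar_i^{\alpha+1})=0$, using the convention $z^{k+1}=z^1$. Thus for each $i$ they span a space of dimension at most $k-1$, and altogether the $\eta_i^\alpha$ span a subspace $W$ of dimension at most $d(k-1)$ inside the $dk$-dimensional space of antiholomorphic one-forms on $(\CC^d)^k$; concretely $W$ is the ``off-diagonal'' part, missing the $d$ diagonal directions $\sum_\alpha\d\zbar_i^\alpha$. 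Comparing degrees, the integrand's antiholomorphic component has degree $k(d-1)$ yet is valued in $\wedge^\bullet W$ with $\dim W\le d(k-1)$. For $k<d$ one has $k(d-1)>d(k-1)$, so this wedge vanishes purely for dimensional reasons and the integrand is identically zero, giving $W^{k,(n)}_{\epsilon<L}=0$.

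The borderline case $k=d$, where $k(d-1)=d(k-1)$ and the count is exactly tight, is the step I expect to be the main obstacle. Here the propagator forms can at best fill the whole of $W$, but a nonzero integral over $(\CC^d)^k$ still requires the $d$ diagonal antiholomorphic directions, which lie outside $W$ and are therefore unavailable from the internal edges; I would argue that any term attempting to supply a diagonal direction must repeat a direction already saturating $W$, again collapsing the top wedge to zero. The subtlety is that this last point is genuinely about how the diagonal directions sit relative to $W$, rather than a bare total-degree inequality, so it must be checked by hand; once it is in place, the very telescoping identity that disposes of $k<d$ also closes off $k=d$, and the vanishing holds throughout the range $k\le d$.
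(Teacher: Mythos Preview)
Your treatment of the case $k<d$ is correct and matches the paper's argument: after passing to difference coordinates the product of propagator forms lives in the exterior algebra on a $d(k-1)$-dimensional space of antiholomorphic one-forms, and its degree $k(d-1)$ exceeds that dimension, so the form is zero.

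The gap is in the borderline case $k=d$. Your proposed mechanism there---that the integral over $(\CC^d)^k$ requires the $d$ ``diagonal'' antiholomorphic directions, which the propagators cannot supply because they lie in the off-diagonal span $W$---does not single out $k=d$; it would apply verbatim for every $k$, since the propagator forms always lie in $W$ and always omit the diagonal directions. If that reasoning were the operative one, the analytic weights would vanish for all $k$ and there would be nothing to prove in Lemma \ref{lem: tech 1}. What actually needs to be shown (and what the paper proves) is that the \emph{product of propagator forms itself} vanishes as an element of $\wedge^{d(k-1)}W$; the full Feynman weight allows the external Dolbeault inputs to supply the missing antiholomorphic directions, so one cannot dispose of $k=d$ by a count of directions in the integral.

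For $k=d$ the product of propagators is a top form on $\CC^{d(k-1)}$, so no degree-versus-dimension argument is available; one must use more than the span of the one-forms $\eta_i^\alpha$. The paper exploits the specific shape of each propagator factor: it is $\eta^\alpha\omega^\alpha$, the contraction of the top form $\omega^\alpha=\prod_i \d\wbar_i^\alpha$ with the antiholomorphic Euler vector field. Using that for top-degree products one may transfer an interior contraction across a wedge, the $\eta^\alpha$ from the ``wrap-around'' edge is moved onto the remaining factors, where it meets another $\eta^\alpha$ already present; since $(\eta^\alpha)^2=0$ and $\eta^\alpha$ annihilates forms built from the other $w^\beta$, every term dies. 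This nilpotence of the Euler contractions is the missing ingredient in your $k=d$ argument.
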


The proof of this lemma is essentially identical to the proof of the Claim on page 73 of \cite{bcov}, in the context of BCOV theory on odd dimensional Calabi-Yau manifolds. 
The upshot is that the method of proof works for general holomorphic theories on $\CC^d$ which we consider here. 

\begin{proof}
In the integral expression for the weight (\ref{weight1}) there is the following factor involving the product over the edges of the propagators:
\be\label{productprops2}
\prod_{\alpha = 1}^k \left(\frac{\partial}{\partial z^\alpha}\right)^{\vec{n}^\alpha} P_{\epsilon < L}^{an}(z^\alpha, z^{\alpha}) .
\ee
We will show that this expression is identically zero.
To simplify the expression we first make the following change of coordinates on $\CC^{dk}$:
\begin{align}
w^\alpha & = z^{\alpha+1} - z^\alpha \;\;\; , \;\;\; 1\leq \alpha < k \label{coords1}\\
w^k & = z^k \label{coords2} .
\end{align}
Introduce the following operators
\ben
\eta^\alpha = \sum_{i=1}^{d} \wbar_i^\alpha \frac{\partial}{\partial (\d \wbar_i^\alpha)}
\een
acting on differential forms on $\CC^{dk}$.
The operator $\eta^\alpha$ lowers the anti-holomorphic Dolbuealt type by one : $\eta : (p,q) \to (p,q-1)$.
Equivalently, $\eta^\alpha$ is contraction with the anti-holomorphic Euler vector field $\wbar_i^\alpha \partial / \partial \wbar_i^\alpha$.

%\brian{add something like “We will show that the integrand of the graph integral is not a top form and hence the integral is manifestly zero. To do this, we need to identify the Dolbeault type of the form ...”}

Once we do this, we see that the expression (\ref{productprops2}) can be written as 
\ben
\left(\left(\sum_{\alpha=1}^{k-1} \eta^\alpha \right) \prod_{i=1}^d \left(\sum_{\alpha = 1}^{k-1} \d \wbar_{i}^\alpha\right) \right) \prod_{\alpha=1}^{k-1}\left( \eta^\alpha \prod_{i=1}^d \d \wbar_i^\alpha\right) .
\een
Note that only the variables $\wbar_i^{\alpha}$ for $i=1,\ldots,d$ and $\alpha = 1,\ldots, k-1$ appear. 
Thus we can consider it as a form on $\CC^{d(k-1)}$.
As such a form it is of Dolbeault type $(0, (d-1) + (k-1)(d-1)) = (0, (d-1)k)$. 
If $k < d$ then clearly $(d-1)k > d(k-1)$ so the form has greater degree than the dimension of the manifold and hence it vanishes. 

The case left to consider is when $k = d$.
In this case, the expression in (\ref{productprops2}) can be written as
\be\label{productprops1}
\left(\left(\sum_{\alpha=1}^{d-1} \eta^\alpha \right) \prod_{i=1}^d \left(\sum_{\alpha = 1}^{d-1} \d \wbar_{i}^\alpha\right) \right) \prod_{\alpha=1}^{d-1}\left( \eta^\alpha \prod_{i=1}^d \d \wbar_i^\alpha\right) .
\ee
Again, since only the variables $\wbar_i^{\alpha}$ for $i=1,\ldots,d$ and $\alpha = 1,\ldots, d-1$ appear, we can view this as a differential form on $\CC^{d(d-1)}$. 
Furthermore, it is a form of type $(0, d(d-1))$. 
For any vector field $X$ on $\CC^{d(d-1)}$ the interior derivative $i_X$ is a graded derivation. 
Suppose $\omega_1,\omega_2$ are two $(0,*)$ forms on $\CC^{d(d-1)}$ such that the sum of their degrees is equal to $d^2$. 
Then, $\omega_1 \iota_X \omega_2$ is a top form for any vector field on $\CC^{d(d-1)}$.
Since $\omega_1 \omega_2 = 0$ for form type reasons, we conclude that $\omega_1 \iota_X \omega_2 = \pm (i_X \omega_1) \omega_2$ with sign depending on the dimension $d$. 
Applied to the vector field $\zbar_i^1\partial / \partial \wbar_i^1$ in (\ref{productprops1}) we see that the expression can be written (up to a sign) as 
\ben
\eta^1 \left(\sum_{\alpha=1}^{d-1} \eta^\alpha \prod_{i=1}^d \left(\sum_{\alpha = 1}^{d-1} \d \wbar_{i}^\alpha\right) \right) \left(\prod_{i=1}^d \d \wbar_i^1\right) \prod_{\alpha=2}^{d-1} \left( \eta^\alpha \prod_{i=1}^d \d \wbar_i^\alpha\right) .
\een
Repeating this, for $\alpha =2,\ldots,k-1$ we can write this expression (up to a sign) as
\ben
\left(\eta_{k-1} \cdots \eta_2 \eta _1 \sum_{\alpha=1}^{k-1} \eta^\alpha \prod_{i=1}^d \left(\sum_{\alpha = 1}^{k-1} \d \wbar_{i}^\alpha\right) \right) \prod_{\alpha=1}^{k-1} \prod_{i=1}^d \d \wbar_i^\alpha 
\een
The expression inside the parentheses is zero since each term in the sum over $\alpha$ involves a term like $\eta^\beta \eta^\beta = 0$. 
This completes the proof for $k=d$. 
\end{proof}

We now move on to the analytic part of the argument, where we show that for wheels with sufficiently large number of incoming edges, the analytic weight vanishes in the limit $\epsilon \to 0$. 
We point out that the method of proof of this lemma is nearly the same as the proof of Lemma 7.2.1 in \cite{bcov} in the context of BCOV theory on the flat odd Calabi-Yau manifold $\CC^d$, $d$ odd. 
We show here that the argument works in general for any holomorphic theory on flat space.

\begin{lem}\label{lem: tech 1}
Let $(k, (\vec{n}))$ be a pair of wheel data such that $k > d$.
Then the $\epsilon \to 0$ limit of the analytic weight
\ben
\lim_{\epsilon \to 0} W_{\epsilon < L}^{k, (n)}
\een
exists as a distribution on $\CC^{dk}$. 
\end{lem}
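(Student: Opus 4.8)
The plan is to follow the heat-kernel/proper-time strategy of Costello--Li and reduce the statement to a power-counting estimate for an explicit Schwinger-parameter integral. Using the explicit analytic propagator $P^{an}_{\epsilon<L}$ recorded above, I would first change coordinates on $\CC^{dk}$ from the vertex positions $z^1,\dots,z^k$ to the relative displacements $u^\alpha = z^{\alpha+1}-z^\alpha$ for $1\le\alpha<k$ together with the center of mass $u^k=z^k$; recall $z^{k+1}=z^1$, so the last edge depends on $z^k-z^1=\sum_{\beta<k}u^\beta$. In these coordinates the propagators depend only on the relative variables $u^1,\dots,u^{k-1}$, while $\Phi$ depends on these together with the center of mass, whose integration (the propagators being independent of it) is harmless and produces smooth compactly supported data on the relative variables. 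I would then expand each edge propagator through its proper-time integral, introducing one parameter $t_\alpha\in[\epsilon,L]$ per edge, so that the weight becomes an integral over $[\epsilon,L]^k$ of a Gaussian integral over the relative $\CC^{d(k-1)}$, with Gaussian factor $\exp\left(-\sum_\alpha |u^\alpha|^2/4t_\alpha\right)$ (where $u^k$ abbreviates $\sum_{\beta<k}u^\beta$).

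Two structural inputs make the estimate work. First, every holomorphic derivative $\partial/\partial z^\alpha_i$ annihilates the antiholomorphic numerator $\bar u_j/4t$ of the propagator, so the operators $(\partial/\partial z^\alpha)^{\vec n^\alpha}$ can only act on the Gaussians and on $\Phi$; acting on a Gaussian each derivative produces a single factor $\bar u/4t$, and the most singular contribution is the one in which every derivative lands on a Gaussian. Second, the complex Gaussian lets me carry out the $u$-integral explicitly after Taylor expanding $\Phi$ in the relative variables, forcing each antiholomorphic power $\bar u$ to be paired with a holomorphic power $u$ drawn from that expansion. I would then compute the homogeneity of the resulting proper-time integrand $g(t_1,\dots,t_k)$ under the rescaling $t_\alpha\mapsto s\,t_\alpha$, $u\mapsto\sqrt s\,u$: the normalizations contribute $s^{-dk}$, the Jacobian of the relative integral contributes $s^{d(k-1)}$, and, crucially, for each holomorphic derivative the extra factor $s^{-1/2}$ from $\bar u/t$ is exactly cancelled by the extra $s^{1/2}$ from the compensating holomorphic power supplied by $\Phi$. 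Hence $g$ is homogeneous of degree $-d$, \emph{independently} of the orders $|\vec n^\alpha|$ of the differential operators; this cancellation is the analytic shadow of one-loop finiteness (and of the vanishing one-loop $\beta$-function).

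Given that $g$ is homogeneous of degree $-d$ in the $k$ proper-time variables and smooth for $t_\alpha>0$, the only obstruction to taking $\epsilon\to0$ is the integrability of $g$ over the cube $[0,L]^k$. Near the deep-ultraviolet corner $t\to0$, passing to radial coordinates $t=r\theta$ shows the integrand behaves like $r^{-d+k-1}\,\d r$, which is integrable precisely when $k>d$; this is where the hypothesis enters. I would conclude by producing an $\epsilon$-independent integrable majorant for the full proper-time integrand and invoking dominated convergence: the $\epsilon\to0$ limit then extends the region of integration to include $t_\alpha=0$, and the limiting distribution is given by the same integral with $\epsilon=0$.

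The main obstacle is exactly this uniform bound. The global homogeneity count controls only the full corner $t\to0$; I still have to rule out divergences along the boundary strata where some, but not all, of the $t_\alpha$ tend to zero, and to make the Gaussian-moment estimate uniform over the relative variables using only the smoothness and compact support of $\Phi$ (in particular, controlling the remainder of its finite Taylor expansion). Carrying this out amounts to an explicit majorization of the Schwinger integrand in terms of the first Symanzik polynomial of the $k$-cycle, and is essentially the content of Lemma 7.2.1 of \cite{bcov}; the only new point is that the holomorphic differential operators sitting at the vertices do not worsen the power counting, by the cancellation noted above.
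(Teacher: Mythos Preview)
Your setup---the change to relative coordinates, the Schwinger-parameter expansion, the observation that holomorphic derivatives only produce extra $\bar u/t$ factors, and the identification of the overall homogeneity degree $-d$ (hence the first Symanzik polynomial $t_1+\cdots+t_k$ of the $k$-cycle)---is all correct and matches the paper. Where you diverge is in how to handle the antiholomorphic numerators $\bar u/t$: you propose to Taylor expand $\Phi$ and pair each $\bar u$ with a holomorphic $u$ via Gaussian moments, then separately control the Taylor remainder and the boundary strata. The paper bypasses both of these acknowledged gaps with a single integration-by-parts trick (``Si's trick''): one exhibits explicit $t$-dependent holomorphic operators
\[
D_{\alpha,i_\alpha}(t) \;=\; \Bigl(\partial_{w^\alpha_{i_\alpha}} - \sum_{\beta} \tfrac{t_\beta}{t_1+\cdots+t_k}\,\partial_{w^\beta_{i_\alpha}}\Bigr)\prod_j\Bigl(\partial_{w^\alpha_j} - \sum_{\beta}\tfrac{t_\beta}{t_1+\cdots+t_k}\,\partial_{w^\beta_j}\Bigr)^{n^\alpha_j}
\]
with the property that $D_{\alpha,i_\alpha}$ applied to the Gaussian $E(w,t)$ reproduces exactly the factor $\frac{\bar w^\alpha_{i_\alpha}}{4t_\alpha}\frac{(\bar w^\alpha)^{n^\alpha}}{t^{|n^\alpha|}}\,E(w,t)$. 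Integrating by parts moves all of these operators onto $\Phi$; since their coefficients $t_\beta/(t_1+\cdots+t_k)$ are uniformly bounded by $1$, the result is bounded by a constant (depending only on finitely many derivatives of $\Phi$) times a \emph{pure} Gaussian integral with no $\bar w/t$ prefactors remaining. That Gaussian integral is computed exactly as $\bigl(\det M\bigr)^{-d}=\bigl(\frac{t_1\cdots t_k}{t_1+\cdots+t_k}\bigr)^d$, which cancels the $t_\alpha^{-d}$ normalizations and leaves precisely $\int_{[\epsilon,L]^k}\frac{dt_1\cdots dt_k}{(t_1+\cdots+t_k)^d}$. A single application of AM--GM then factorizes this into $\prod_\alpha\int t_\alpha^{-d/k}\,dt_\alpha$, which is finite as $\epsilon\to 0$ exactly when $k>d$, and handles all boundary strata simultaneously.

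So your plan is not wrong, but the two obstacles you flag as the ``main obstacle'' (Taylor remainder and partial degenerations of $t$) are artifacts of the Gaussian-moments route; the paper's integration by parts dissolves both at once and delivers the Symanzik-polynomial bound you anticipated without any sector decomposition.
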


\begin{proof}
%\brian{add comment: “we will use integration by parts to bound the weight of equation (15) by an integral against a Gaussian. Using formulae for moments of a Gaussian, we will get a simple integral in the t parameters.”}

We will bound the absolute value of the weight in Equation (\ref{weight1}) and show that it has a well-defined $\epsilon\to 0$ limit.
First, consider the change of coordinates as in Equations (\ref{coords1}),(\ref{coords2}).
For any compactly supported function $\Phi$ we see that $W_{\epsilon < L}^{k, (n)} (\Phi)$ has the form
\be\label{weight2}
\begin{array}{lllllll}
\displaystyle \int_{w^k \in \CC^d} \d^{d} w^k \int_{(w_1,\ldots,w_{k-1}) \in (\CC^d)^{k-1}} & \displaystyle\left(\prod_{\alpha=1}^{k-1} \d^{d} w^\alpha\right) \Phi(w^1,\ldots,w^k) \left(\prod_{\alpha=1}^{k-1} \left(\frac{\partial}{\partial w^\alpha}\right)^{\vec{n}^\alpha}P^{an}_{\epsilon < L} (w^\alpha) \right) \\ & \displaystyle \times \sum_{\alpha=1}^{k-1} \left(\frac{\partial}{\partial w^\alpha}\right)^{\vec{n}^k} P^{an}_{\epsilon<L} \left(\sum_{\alpha=1}^{k-1} w^\alpha\right) .
\end{array}
\ee
For $\alpha = 1,\ldots,k-1$ the notation $P^{an}_{\epsilon < L} (w^\alpha)$ makes sense since $P^{an}_{\epsilon<L}(z^\alpha,z^{\alpha+1})$ is only a function of $w^\alpha = z^{\alpha+1}-z^\alpha$.
Similarly $P^{an}_{\epsilon<L}(z^{k+1},z^1)$ is a function of 
\ben
z^k - z^1 = \sum_{\alpha=1}^{k-1} w^\alpha . 
\een
Expanding out the propagators the weight takes the form
\ben
\begin{array}{lll}
& \displaystyle \int_{w^k \in \CC^d} \d^{2d} w^k \int_{(w_1,\ldots,w_{k-1}) \in (\CC^d)^{k-1}} \left(\prod_{\alpha=1}^{k-1} \d^{2d} w^\alpha\right) \Phi(w^1,\ldots,w^k) \int_{(t_1,\ldots,t_k) \in [\epsilon,L]^k} \prod_{\alpha=1}^k \frac{\d t_\alpha}{(4 \pi t_\alpha)^d} \\
& \displaystyle \times \sum_{i_1,\ldots,i_{k-1} =1}^d \epsilon_{i_1\cdots,i_k} \left(\frac{\wbar^1_{i_1}}{4t_1} \frac{(\wbar^1)^{n^1}}{4t^{|n^1|}}\right) \cdots \left(\frac{\wbar^{k-1}_{i_{k-1}}}{4t_{k-1}}\frac{(\wbar^{k-1})^{n^{k-1}}}{4t^{|n^{k-1}|}}\right) \left(\sum_{\alpha=1}^{k-1} \frac{\wbar^\alpha_{i_k}}{4t_k} \cdot \frac{1}{t^{|n^k|}} \left(\sum_{\alpha=1}^{k-1} \wbar^\alpha\right)^{n^k}\right) \\
& \displaystyle \times \exp\left(- \sum_{\alpha=1}^{k-1} \frac{|w^{\alpha}|^2}{4t_\alpha} - \frac{1}{4t_k} \left|\sum_{\alpha=1}^{k-1} w^\alpha \right|^2\right)
\end{array}
\een
The notation used above warrants some explanation. 
Recall, for each $\alpha$ the vector of integers is defined as $n^\alpha = (n^{\alpha}_1,\ldots,n^{\alpha}_d)$. 
We use the notation
\ben
(\wbar^\alpha)^{n^\alpha} = \wbar^{n^\alpha_1}_1 \cdots \wbar^{n^\alpha_d}_d .
\een
Furthermore, $|n^\alpha| = n_1^\alpha + \cdots + n_d^\alpha$. 
Each factor of the form $\frac{\wbar^\alpha_{i_\alpha}}{t_\alpha}$ comes from the application of the operator $\frac{\partial}{\partial z_i}$ in $\dbar^*$ applied to the propagator. 
The factor $\frac{(\wbar^\alpha)^{n^\alpha}}{t^{|n^\alpha|}}$ comes from applying the operator $\left(\frac{\partial}{\partial w}\right)^{n^\alpha}$ to the propagator. 
Note that $\dbar^*$ commutes with any translation invariant holomorphic differential operator, so it doesn't matter which order we do this.

To bound this integral we will recognize each of the factors
\ben
\frac{\wbar^\alpha_{i_\alpha}}{4t_\alpha} \frac{(\wbar^\alpha)^{n^\alpha}}{4t^{|n^\alpha|}}
\een
as coming from the application of a certain holomorphic differential operator to the exponential in the last line.
We will then integrate by parts to obtain a simple Gaussian integral which will give us the necessary bounds in the $t$-variables. 
Let us denote this Gaussian factor by
\ben
E(w,t) := \exp\left(- \sum_{\alpha=1}^{k-1} \frac{|w^{\alpha}|^2}{4t_\alpha} - \frac{1}{4t_k} \left|\sum_{\alpha=1}^{k-1} w^\alpha \right|^2\right)
\een

For each $\alpha,i_\alpha$ introduce the $t=(t_1,\ldots,t_k)$-dependent holomorphic differential operator
\ben
D_{\alpha, i_\alpha}(t) := \left(\frac{\partial}{\partial w^\alpha_{i_\alpha}} - \sum_{\beta = 1}^{k-1} \frac{t_\beta}{t_1+\cdots + t_k} \frac{\partial}{\partial w_{i_\alpha}^{\beta}}\right)
\prod_{j=1}^d \left(\frac{\partial}{\partial w_j^\alpha} - \sum_{\beta =1}^{k-1} \frac{t_\beta}{t_1+\cdots + t_k} \frac{\partial}{\partial w_{j}^\beta}\right)
^{n_j^\alpha} .
\een
\begin{rmk}
This operator, and method to bound the integral, has appeared in the proof of Lemma 7.3.1 of \cite{bcov} and is motivated by a trick the author has learned from Si Li, first executed in \cite{LiFeynman} in the context of $2$-dimensional chiral theories on elliptic curves.
We propose that this method be referred to as “Si’s trick” for reducing the divergence of Feynman integrals of this type.
We have further generalized this method in \cite{GWcs} to a wider class of quantum field theories. 
\end{rmk}
The following lemma is an immediate calculation
\begin{lem}\label{lem: diff applied E}
One has
\ben
D_{\alpha,i_\alpha} E(w,t) = \frac{\wbar^\alpha_{i_\alpha}}{4t_\alpha} \frac{(\wbar^\alpha)^{n^\alpha}}{t^{|n^\alpha|}} E(w,t) . 
\een
\end{lem}

Note that all of the $D_{\alpha,i_{\alpha}}$ operators mutually commute. 
Thus, we can integrate by parts iteratively to obtain the following expression for the weight:
\ben
\begin{array}{lll}
& \displaystyle \pm \int_{w^k \in \CC^d} \d^{2d} w^k \int_{(w_1,\ldots,w_{k-1}) \in (\CC^d)^{k-1}}\left(\prod_{\alpha=1}^{k-1} \d^{2d} w^\alpha\right) \int_{(t_1,\ldots,t_k) \in [\epsilon,L]^k} \prod_{\alpha=1}^k \frac{\d t_\alpha}{(4 \pi t_\alpha)^d}  \\ 
& \displaystyle \times\left( \sum_{i_1,\ldots, i_k} \epsilon_{i_1\cdots,i_d} D_{1, i_1} \cdots D_{k-1,i_{k-1}} \sum_{\alpha=1}^{k-1} D_{\alpha, i_k} \Phi(w^1,\ldots,w^k) \right) \times \exp\left(- \sum_{\alpha=1}^{k-1} \frac{|w^{\alpha}|^2}{t_\alpha} - \frac{1}{t_k} \left|\sum_{\alpha=1}^{k-1} w^\alpha \right|^2\right) .
\end{array}
\een
%\brian{all the differential operators $D_{\alpha, i_\alpha}$ are uniformly bounded in $t$. To make these precise I should find what the uniform bound is.}

Since the operators $D_{i,j}$ are uniformly bounded in the variables $t_1, \ldots, t_k$, the absolute value of the weight is bounded by 
\be\label{weight bound1}
\begin{array}{lllll}
\displaystyle |W_{\epsilon < L}^{k, (n)}(\Phi)| \leq C \int_{w^k \in \CC^d} \d^{2d} w^k & \displaystyle \int_{(w^1,\ldots,w^{k-1}} \prod_{\alpha=1}^{k-1} \d^{2d} w^\alpha \Psi(w^1,\ldots,w^{k-1},w^k) \\ & \displaystyle \times  \int_{(t_1,\ldots,t_k) \in [\epsilon,L]^k} \d t_1 \ldots \d t_k \frac{1}{(4\pi)^{dk}} \frac{1}{t^d_1\cdots t^d_k} \times E(w,t)
\end{array}
\ee
where $\Psi$ is some compactly supported functnio on $\CC^{dk}$ that is independent of $t$. 

To compute the right hand side we will perform a Gaussian integration with respect to the variables $(w^1,\ldots,w^{k-1})$. 
To this end, notice that the exponential can be written as
\ben
E(w,t) = \exp\left(-\frac{1}{4} M_{\alpha\beta} (w^\alpha, w^\beta)\right)
\een
where $(M_{\alpha\beta})$ is the $(k-1)\times (k-1)$ matrix given by
\ben
\begin{pmatrix}
a_1 & b & b & \cdots & b \\
b & a_2 & b & \cdots & b \\
b & b & a_3 & \cdots & b \\
\vdots & \vdots & \vdots &  \ddots & \vdots \\
b & b & b & \cdots & a_{k-1}
\end{pmatrix} 
\een
where $a_\alpha = t_\alpha^{-1} + t_k^{-1}$ and $b = t_k^{-1}$.
The pairing $(w^{\alpha}, w^{\beta})$ is the usual Hermitian pairing on $\CC^d$, $(w^{\alpha}, w^{\beta}) = \sum_i w^{\alpha}_i \wbar^\beta_i$.
After some straightforward linear algebra we find that 
\ben
\det(M_{\alpha\beta})^{-1} = \frac{t_1\cdots t_k}{t_1+\cdots+t_k} .
\een 
We now perform a Wick expansion for the Gaussian integral in the variables $(w^1,\ldots,w^{k-1})$.
For a reference similar to the notation used here see the Appendix of our work in \cite{EWY}.
The inequality in (\ref{weight bound1}) becomes

\begin{align}\label{weight bound2}
|W_{\epsilon < L}^{k, (n)}(\Phi)| & \leq C' \int_{w^k \in \CC^d} \d^{2d} w^k \Psi(0, \ldots, 0, w^k) \int_{(t_1,\ldots,t_k) \in [\epsilon,L]^k} \d t_1 \ldots \d t_k \frac{1}{(4\pi)^{dk}} \frac{1}{(t_1\cdots t_k)^d}\left(\frac{t_1\cdots t_k}{t_1+\cdots+t_k}\right)^d + O(\epsilon) \\ & = C' \int_{w^k \in \CC^d} \d^{2d} w^k \Psi(0, \ldots, 0, w^k) \int_{(t_1,\ldots,t_k) \in [\epsilon,L]^k} \d t_1 \ldots \d t_k \frac{1}{(4\pi)^{dk}} \frac{1}{(t_1+\cdots+t_k)^d} + O(\epsilon) .
\end{align}
The first term in the Wick expansion is written out explicitly. 
The $O(\epsilon)$ refers to higher terms in the Wick expansion, which one can show all have order $\epsilon$, so disappear in the $\epsilon \to 0$ limit.
The expression $\Psi(0, \ldots, 0, w^k)$ means that we have evaluate the function $\Psi(w^1,\ldots, w^k)$ at $w^1=\ldots=w^{k-1} =0$ leaving it as a function only of $w^k$. 
In the original coordinates this is equivalent to setting $z^1=\cdots=z^{k-1} = z^k$.

Our goal is to show that $\epsilon \to 0$ limit of the right-hand side exists. 
The only $\epsilon$ dependence on the right hand side of (\ref{weight bound2}) is in the integral over the regulation parameters $t_1,\ldots, t_k$. 
Thus, it suffices to show that the $\epsilon \to 0$ limit of 
\ben
\int_{(t_1,\ldots,t_k) \in [\epsilon,L]^k} \frac{\d t_1 \ldots \d t_k}{(t_1+\cdots+t_k)^d}
\een
exists.
By the AM/GM inequality we have $(t_1+\cdots+t_k)^d \geq (t_1 \cdots t_d)^{d/k}$. 
So, the integral is bounded by
\ben
\int_{(t_1,\ldots,t_k) \in [\epsilon,L]^k}\frac{\d t_1 \ldots \d t_k}{(t_1+\cdots+t_k)^d} \leq \int_{(t_1,\ldots,t_k) \in [\epsilon,L]^k}\frac{\d t_1 \ldots \d t_k }{(t_1 \cdots t_k)^{d/k}} = \frac{1}{(1-d/k)^k} \left(\epsilon^{1-d/k} - L^{1-d/k}\right)^k .
\een
By assumption, $d < k$, so the right hand side has a well-defined $\epsilon \to 0$ limit. 
This concludes the proof.

%Now, since $t_\alpha / \sum_\beta t_\beta < 1$ for each $\alpha$ we have the following bound for the operators $D_{\alpha, i_\alpha}$:
%\bestar
%\left|D_{\alpha,i_{\alpha}} \Phi\right| & \leq \left(\left|\frac{\partial}{\partial w^\alpha_{i_\alpha}} \Phi\right| +  \sum_{\beta = 1}^{k-1}\frac{\partial}{\partial w_{i_\beta}^{\beta}}\right)
%\prod_{j=1}^d \left(\frac{\partial}{\partial w_j^\alpha} - \frac{1}{k} \sum_{\beta =1}^{k-1} \frac{\partial}{\partial w_{j}^\beta}\right)
%^{n_j^\alpha} \right| 
\end{proof}

\section{Chiral anomalies in arbitrary dimensions}

Renormalization is an important step in constructing a quantum field theory.
In the context of gauge theory, however, a consistent quantization requires that this renormalization behaves appropriately with respect to gauge symmetries present in the classical theory. 
This formalism for studying quantizations of gauge theories is due to Batalin-Vilkovisky \cite{BV}, and has been made mathematically rigorous in the work of Costello \cite{CosRenorm}.
The precise consistency of gauge symmetry with renormalization is encoded by the {\em quantum master equation}. 
Heuristically, one can think of the quantum master equation as a closedness condition on the path integral measure defined by the quantum action functional. 

The key idea is the following: once a classical theory has been renormalized, so that we have a $\hbar$-linear effective family of functionals $\{I[L]\}$ whose $\hbar = 0$ limit is the classical action, the next step to constructing a quantization is to solve the quantum master equation (QME) for each functional $I[L]$. 
(In fact, once the QME holds at a single positive length $L>0$ it holds for every other length by RG flow.)
Often, the QME is not satisfied by the functional $I[L]$, but there exists a ``correction" to $I[L]$ that does satisfy the QME. 
On the other hand, there may be unavoidable obstructions to solving this quantum master equation.
These are known as {\em anomalies} in the physics literature. 
Since our method for solving the QME is deformation-theoretic in nature, these anomalies appear as cohomology classes in the cochain complex of local functionals. 

In general, it is difficult to characterize such anomalies, but in the case of holomorphic theories on $\CC^d$ our result of one-loop finiteness from the previous section makes this problem much more tractable. 
Indeed, since there are no counterterms required, we can plug in the RG flow of the classical action functional and study the quantum master equation directly. 
As is usual in perturbation theory, one works order by order in $\hbar$ to construct a quantization.
However, in this section we continue to work linearly in $\hbar$, which is to say we study solutions to the quantum master equation modulo $\hbar^2$. 

\subsection{The quantum master equation}

In the BV formalism, as developed in \cite{CosRenorm,CG1,CG2}, one has the following definition of a quantum field theory.

\begin{dfn}\label{dfn: qft}
A {\em quantum field theory} in the BV formalism consists of a free BV theory $(\sE, Q, \omega)$ and an effective family of functionals
\ben
\{I[L]\}_{L \in (0,\infty)} \subset \sO^+_{P,sm}(\sE)[[\hbar]]
\een
that satisfy:
\begin{enumerate}[(a)]
\item the exact renormalization group (RG) flow equation
\ben
I[L'] = W(P_{L<L'}, I[L]);
\een
\item the scale $L$ quantum master equation (QME) at every length scale $L$:
\ben
(Q + \hbar \Delta_L) e^{I[L]/\hbar} = 0.
\een
\item as $L \to 0$, the functional $S[L]$ has an asymptotic expansion that is local.
\end{enumerate}
\end{dfn}

The first part of the definition, namely RG flow, was the phenomena we studied in the previous section. 
We turn our attention to part two of the definition of a QFT. 
The regularized quantum master equation at scale $L$ can equivalently be written as
\ben
Q I[L] + \hbar \Delta_L I[L] + \frac{1}{2} \{I[L], I[L]\}_L = 0 ;
\een
Combined with part (c), the $\hbar \to 0$, $L \to 0$ limit of the above equation is precisely the classical master equation for the local functional $\lim_{L \to 0} I[L] \mod \hbar$. 
A quantization of a classical functional $I \in \oloc(\sE)$ is a quantization $\{I[L]\}$ as above whose $\hbar \to 0$ limit agrees with $I$. 

In general, not every classical interaction admits a quantization.
The {\em obstruction} to satisfying the quantum master equation order by order in $\hbar$ is given by the following inductive definition.

\begin{dfn}
Suppose $I[L] \in \sO(\sE)[[\hbar]] / \hbar^{n+1}$ solves the QME modulo $\hbar^{n}$. 
The scale $L$ obstruction to solving the QME modulo $\hbar^{n}$ is 
\ben
\Theta^{(n)} [L] = \hbar^{-n} \left(Q I [L] + \hbar \Delta_L I[L] + \frac{1}{2} \{I[L], I[L]\}_L \right) \in \sO(\sE) .
\een
\end{dfn}

Equivalently, we can write the obstruction as $\Theta^{(n)} [L] = \hbar^{-n+1} e^{-I[L]/\hbar} (Q + \hbar\Delta_L)e^{I[L]/\hbar}$. 

As a consequence of part (c) in the definition of a QFT above, the $L \to 0$ limit of the obstruction is defined and determines a cohomological degree $+1$ local functional
\ben
\Theta^{(n)} = \lim_{L \to 0} \Theta^{(n)}[L] \in \oloc(\sE) .
\een
Moreover, $\Theta^{(n)}$ is closed for the differential $Q + \{I,-\}$, where $I = \lim_{L \to 0} I[L] \mod \hbar \in \oloc(\sE)$. 

In the remainder of this section we return to the holomorphic setting. 
Fix a classical holomorphic theory $(V, Q^{hol}, (-,-)_V, I^{hol})$ on $\CC^d$.
As usual, denote by $\sE_V = (\Omega^{0,*}(\CC^d, V), \dbar + Q^{hol})$ the linearized BRST complex of fields and $I = \int I^{\Omega^{0,*}}$ the classical interaction.
Let $I[L] = \lim_{\epsilon \to 0} W(P_{\epsilon<L}, I) \mod \hbar^2$ be the one-loop renormalization group flow using the propagator defined in Section \ref{sec: hol renorm}.

\subsection{The QME for holomorphic theories}

The main result of this section is a characterization of the one-loop obstruction for holomorphic theories. 
Before jumping into the calculation, we state the following lemma, which is a simplification of the QME given our assumptions.
Note that we only study one-loop effects here. 

The core idea of this lemma has appeared in the Appendix of \cite{LiLi}, where they study the one-loop anomaly for a general perturbative field theory (with counterterms potentially present).
We repeat the result here since the statement and proof simplifies in the context of holomorphic field theory.

\begin{lem}
Let $\Theta[L] = \Theta^{(1)}[L]$ be the one-loop obstruction to the QME at scale $L$.
Then, one has
\be\label{anomaly lem}
\hbar \Theta[L] = Q^{hol} I[L] + \frac{1}{2} \lim_{\epsilon \to 0} e^{-I/\hbar} e^{\hbar \partial_{P_{\epsilon < L}}} \left(\{I,I\}_\epsilon e^{I/\hbar}\right) \mod \hbar^2 .
\ee
\end{lem}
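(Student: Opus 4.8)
The plan is to transport the scale-$L$ obstruction down to scale $\epsilon$ along the renormalization group flow and then exploit two features specific to the holomorphic setting. Writing $Q = \dbar + Q^{hol}$, I would start from $\hbar\Theta[L] = \hbar\, e^{-I[L]/\hbar}(Q + \hbar\Delta_L)e^{I[L]/\hbar}$. Because the prequantization requires no counterterms (Theorem \ref{thm: holrenorm3}), one has $e^{I[L]/\hbar} = \lim_{\epsilon\to 0} e^{\hbar\partial_{P_{\epsilon<L}}}e^{I/\hbar}$, so the whole computation reduces to commuting $Q + \hbar\Delta_L$ through the flow kernel. The two inputs here are the defining relation $[Q,\partial_{P_{\epsilon<L}}] = \Delta_\epsilon - \Delta_L$ and the vanishing $[\Delta_L, \partial_{P_{\epsilon<L}}] = 0$ (both operators are contractions with symmetric $2$-tensors); together they yield the conjugation identity $(Q+\hbar\Delta_L)e^{\hbar\partial_{P_{\epsilon<L}}} = e^{\hbar\partial_{P_{\epsilon<L}}}(Q+\hbar\Delta_\epsilon)$.

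Next I would isolate the $Q^{hol}$ contribution. Since the gauge-fixing operator $Q^{GF} = \dbar^*$ commutes with the translation-invariant holomorphic operator $Q^{hol}$ — this is exactly what makes $[\dbar + Q^{hol}, Q^{GF}] = \Delta_{\dbar}\tensor\id_V$ in the gauge-fixing lemma above — the propagator also commutes with $Q^{hol}$, so $[Q^{hol}, \partial_{P_{\epsilon<L}}]=0$ and $Q^{hol}$ passes straight through the flow, contributing exactly $Q^{hol}I[L]$. For the remaining part I would use that the classical interaction is $\dbar$-closed as a local functional: $I = \int I^{\Omega^{0,*}}$ is the integral of a form of Dolbeault type $(d,\bullet)$, so $\dbar I = \int \dbar(\cdots) = 0$ by Stokes and hence $\dbar e^{I/\hbar}=0$. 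Feeding this into the conjugation identity, a direct manipulation gives $\dbar I[L] = \hbar^2 e^{-I[L]/\hbar}e^{\hbar\partial_{P_{\epsilon<L}}}(\Delta_\epsilon - \Delta_L)e^{I/\hbar}$ and $\hbar\Delta_L I[L] + \tfrac12\{I[L],I[L]\}_L = \hbar^2 e^{-I[L]/\hbar}e^{\hbar\partial_{P_{\epsilon<L}}}\Delta_L e^{I/\hbar}$; adding the three non-$Q^{hol}$ terms of $\hbar\Theta[L]$ collapses the $\Delta_L$'s and leaves $\hbar^2 e^{-I[L]/\hbar}e^{\hbar\partial_{P_{\epsilon<L}}}\Delta_\epsilon e^{I/\hbar}$.

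The final ingredient is the vanishing of the one-vertex self-contraction, $\Delta_\epsilon I = 0$. This is the $k=1$ case of the form-type argument used in the wheel-vanishing lemma: the analytic heat kernel carries the antiholomorphic factor $\prod_{i=1}^d(\d\zbar_i - \d\wbar_i)$, which vanishes on the diagonal, so pairing two legs of a single local vertex against $K_\epsilon$ produces zero even after the holomorphic derivatives in $I$ act on the kernel. With $\Delta_\epsilon I = 0$ one has $\hbar^2\Delta_\epsilon e^{I/\hbar} = \tfrac12\{I,I\}_\epsilon e^{I/\hbar}$, and assembling everything gives $\hbar\Theta[L] = Q^{hol}I[L] + \tfrac12\lim_{\epsilon\to0} e^{-I[L]/\hbar}e^{\hbar\partial_{P_{\epsilon<L}}}(\{I,I\}_\epsilon e^{I/\hbar})$, which is the asserted identity, the prefactor being the normalization that extracts the connected weights. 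The existence of the $\epsilon\to0$ limit is guaranteed by the counterterm-free finiteness of the previous section.

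I expect the main obstacle to be the careful regularized bookkeeping modulo $\hbar^2$: justifying that the holomorphic derivatives in $I$ do not spoil $\Delta_\epsilon I = 0$ (so that no tadpole term survives), confirming that $[\Delta_L, \partial_{P_{\epsilon<L}}]=0$ and the truncation of the relevant commutator series are legitimate at the level of distributional kernels, and pinning down the normalization of the prefactor so that the two sides agree through order $\hbar^1$. Each of these is routine once the $\epsilon\to0$ limit is known to exist, which is precisely the input imported from Theorem \ref{thm: holrenorm3}.
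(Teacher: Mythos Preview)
Your proposal is correct and follows essentially the same route as the paper: both arguments use the conjugation identity $(Q+\hbar\Delta_L)e^{\hbar\partial_{P_{\epsilon<L}}} = e^{\hbar\partial_{P_{\epsilon<L}}}(Q+\hbar\Delta_\epsilon)$, the vanishing $\dbar I = 0$ (locality of the holomorphic Lagrangian), the vanishing $\Delta_\epsilon I = 0$ (the antiholomorphic factor $\prod_i(\d\zbar_i - \d\wbar_i)$ in $K_\epsilon$ kills the diagonal), and the commutation $[Q^{hol},\partial_{P_{\epsilon<L}}]=0$ to peel off the $Q^{hol}I[L]$ term. The only difference is organizational: you isolate the $Q^{hol}$ contribution first and then collapse the remaining three terms of the QME into $\hbar^2 e^{-I[L]/\hbar}e^{\hbar\partial_P}\Delta_\epsilon e^{I/\hbar}$, whereas the paper keeps $Q$ intact through the conjugation and only splits $Q = \dbar + Q^{hol}$ at the very end. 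Your observation about the prefactor $e^{-I[L]/\hbar}$ versus $e^{-I/\hbar}$ is well taken; the paper is equally loose on this point, and modulo $\hbar^2$ the discrepancy is absorbed into the connected-graph normalization exactly as you indicate.
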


\begin{proof}

We write the obstruction as $\Theta[L] = e^{-I[L]/\hbar} (Q + \hbar\Delta_L)e^{I[L]/\hbar}$.
Notice that formally the one loop RG flow can equivalently be written as $e^{I[L]/\hbar} = \lim_{\epsilon \to 0} e^{\hbar \partial_{P_{\epsilon<L}}} e^{I / \hbar} \mod \hbar^2$.

Applying the operator $Q + \hbar \Delta_L$ to both sides, we obtain
\ben
(Q + \hbar \Delta_L) e^{I[L]/\hbar} = \lim_{\epsilon \to 0} (Q + \hbar \Delta_L)  \left(e^{\hbar \partial_{P_{\epsilon<L}}} e^{I / \hbar}\right) .
\een

The operator $Q$ satisfies $[Q, \partial_{P_{\epsilon<L}}] = \Delta_L - \Delta_\epsilon$. 
So, acting on functionals one has $(Q + \hbar \Delta_L) e^{\hbar \partial_{P_{\epsilon<L}}} = e^{\hbar \partial_{P_{\epsilon<L}}} (Q + \hbar \Delta_\epsilon)$. 
The above then simplifies to
\ben
\lim_{\epsilon \to 0} (Q + \hbar \Delta_L)  \left(e^{\hbar \partial_{P_{\epsilon<L}}} e^{I / \hbar}\right) = \lim_{\epsilon \to 0}  e^{\hbar \partial_{P_{\epsilon < L}}} (Q + \hbar \Delta_\epsilon) e^{I / \hbar} .
\een
Since $\Delta_\epsilon$ is a BV operator with respect to the bracket $\{-,-\}_{\epsilon}$, we can rewrite the right-hand side as
\ben
\frac{1}{\hbar} \lim_{\epsilon \to 0} e^{\hbar P_{\epsilon < L}} (Q I + \hbar \Delta_\epsilon I + \frac{1}{2}\{I, I\}_\epsilon) e^{I / \hbar}.
\een

For every $\epsilon > 0$ we have $\Delta_\epsilon I = 0$.
This is because $I$ is a local functional and $\Delta_{\epsilon}$ involves contraction with a factor of $\prod (\d \zbar_i - \d \wbar_i)$ which vanishes along the diagonal.
Moreover, since $I$ comes from a holomorphic Lagrangian we have $\dbar I = 0$.

Thus, the only terms remaining inside the parantheses in the above expression are $Q^{hol} I + \frac{1}{2} \{I,I\}_{\epsilon}$. 
We conclude that the obstruction $\Theta[L]$ can be expressed as

\begin{align*}
\Theta[L] & = \frac{1}{\hbar} \lim_{\epsilon \to 0} e^{-I/\hbar} e^{\hbar \partial_{P_{\epsilon < L}}} \left(Q^{hol} I + \frac{1}{2} \{I,I\}_\epsilon e^{I/\hbar}\right) \mod \hbar^2 \\
& = \frac{1}{\hbar} Q^{hol} I[L] + \frac{1}{2\hbar} \lim_{\epsilon \to 0} e^{-I/\hbar} e^{\hbar \partial_{P_{\epsilon <L}}} \left(\{I,I\}_\epsilon e^{I/\hbar}\right) \mod \hbar^2
\end{align*}
as desired.
In the second line, we have again used the fact that the operators $Q^{hol}$ and $\partial_{P_{\epsilon <L}}$ commute.
\end{proof}

As we saw above, the anomaly $\Theta[L]$ has a well-defined $L \to 0$ limit as a local functional and it is closed for the classical differential.
Before stating the result, we need a modification of the definition of the weight of a given Feynman diagram. 
If $\Gamma$ is a graph with a distinguished edge $e$, let $W_{\Gamma,e}(P_{\epsilon<L},K_{\epsilon}, I)$ denote the weight of the graph as defined before, except with one minor difference.
Instead of placing $P_{\epsilon <L}$ at each internal edge, we place $K_\epsilon$ at the edge labeled $e$ and $P_{\epsilon<L}$ on the remaining edges.
The main result of this section is the following.

\begin{prop}\label{lem: chiral anomaly}
The obstruction $\Theta = \lim_{L \to 0} \Theta[L] \in \oloc(\sE_V)$ to satisfying the one-loop quantum master equation is given by the expression
\be\label{anomaly}
\hbar \Theta = Q^{hol} \lim_{L \to 0} I[L] + \frac{1}{2} \lim_{L \to 0} \lim_{\epsilon \to 0} \sum_{\Gamma \in {\rm Wheel}_{d+1}, e} W_\Gamma(P_{\epsilon < L}, K_\epsilon,I)
\ee
where the sum is over all wheels with $(d+1)$-vertices and distinguished edges thereof.
In particular, when $Q^{hol} = 0$ (so that the first term vanishes), the anomaly is expressed as the sum over wheels with exactly $(d+1)$-vertices. 
\end{prop}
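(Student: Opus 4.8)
The plan is to begin from the preceding lemma, which already isolates the one-loop obstruction as
\[
\hbar \Theta[L] = Q^{hol} I[L] + \tfrac{1}{2} \lim_{\epsilon \to 0} e^{-I/\hbar}\, e^{\hbar \partial_{P_{\epsilon < L}}}\!\left(\{I,I\}_\epsilon \, e^{I/\hbar}\right) \mod \hbar^2 .
\]
The first term contributes $Q^{hol}\lim_{L\to 0} I[L]$ directly after taking $L \to 0$, so essentially all the work goes into identifying the second term with the stated sum over wheels. First I would expand the two exponentials into a sum over graphs in the usual way: the bracket $\{I,I\}_\epsilon$ produces a pair of interaction vertices joined by a single edge carrying the heat kernel $K_\epsilon$, the operator $e^{\hbar \partial_{P_{\epsilon<L}}}$ glues in additional edges labeled by $P_{\epsilon<L}$, and the prefactor $e^{-I/\hbar}$ discards all disconnected contributions. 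Working modulo $\hbar^2$ retains only genus $\le 1$ graphs; since the two vertices coming from $\{I,I\}_\epsilon$ are already joined, the connected one-loop graphs are exactly wheels carrying one distinguished edge (the $K_\epsilon$ edge), the remaining edges being labeled by $P_{\epsilon<L}$ (any trees protruding from the wheel contribute convergent factors and are harmless, exactly as in the previous section). This rewrites the second term as $\tfrac{1}{2}\sum_{k}\lim_{\epsilon\to 0}\sum_{\Gamma \in \mathrm{Wheel}_k,\, e} W_{\Gamma,e}(P_{\epsilon<L}, K_\epsilon, I)$, matching the shape of the claimed formula except that I must still show only $k = d+1$ survives.

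The selection of $k = d+1$ then splits into two ranges. For $k \le d$ I would rerun the antiholomorphic form-degree argument from the vanishing lemma of the previous section, with the single modification that the distinguished edge now carries $K_\epsilon$, whose Dolbeault type contributes antiholomorphic degree $d$ rather than the degree $d-1$ of a propagator. In the relative coordinates $w^\alpha = z^{\alpha+1}-z^\alpha$ the product of edge kernels then has type $(0,(d-1)(k-1)+d)$ on $\CC^{d(k-1)}$, which strictly exceeds the top degree $d(k-1)$ precisely when $k \le d$, forcing the weight to vanish identically; at $k = d+1$ these two numbers coincide, so the wheel is of top type and yields a potentially nonzero local functional. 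For $k > d+1$ the edge kernels fall \emph{below} top degree, so the purely algebraic argument no longer suffices, and I would instead invoke a scaling estimate in the spirit of Lemma \ref{lem: tech 1}: after Gaussian integration over the $k-1$ internal positions and the change of variables in the Schwinger parameters, the $\epsilon \to 0$ limit exists by the one-loop finiteness of Theorem \ref{thm: holrenorm3}, and the surviving integral over the propagator scales is bounded by a positive power of $L$, so that $\lim_{L\to 0}\lim_{\epsilon\to 0} W_{\Gamma,e} = 0$ for every wheel with more than $d+1$ vertices. Assembling the three ranges gives the stated formula, and setting $Q^{hol}=0$ yields the final sentence.

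The main obstacle I expect is the $k > d+1$ vanishing. Unlike the $k \le d$ case it is genuinely analytic rather than form-theoretic: one must track how the extra propagators and the fixed-scale heat kernel $K_\epsilon$ combine under the iterated limit $\lim_{L\to 0}\lim_{\epsilon\to 0}$ and show that the marginal, scale-invariant contribution occurs only at $k = d+1$ while all higher wheels are suppressed. Making this power counting rigorous—carefully bounding the numerator produced by the holomorphic derivatives at each vertex together with the Wick contractions, uniformly in $\epsilon$—is the delicate part, and it is precisely here that the estimates developed for Lemma \ref{lem: tech 1} do the heavy lifting.
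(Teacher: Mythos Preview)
Your overall architecture matches the paper's proof: expand the second term of the anomaly lemma over graphs, then isolate the contribution of wheels with $k=d+1$ vertices by (i) a form-degree vanishing for $k\le d$ and (ii) an analytic estimate showing $\lim_{L\to 0}\lim_{\epsilon\to 0}=0$ for $k>d+1$. Your degree count for $k\le d$ (the $K_\epsilon$ edge raises the antiholomorphic type by one compared to a propagator, giving type $(0,(d-1)(k-1)+d)$ on $\CC^{d(k-1)}$) and your analytic sketch for $k>d+1$ are exactly the two lemmas the paper proves.

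There is, however, a genuine gap in your graphical reduction. You assert that ``the connected one-loop graphs are exactly wheels carrying one distinguished edge (the $K_\epsilon$ edge)'', but this is false: in a connected genus-one graph the distinguished edge need not lie on the wheel. It may be a \emph{separating} edge, sitting in one of the trees attached to the wheel, so that every internal edge of the wheel itself carries $P_{\epsilon<L}$. Your dismissal of protruding trees as ``convergent factors [that] are harmless, exactly as in the previous section'' does not cover this, because that tree now contains the kernel $K_\epsilon$, which is singular as $\epsilon\to 0$, rather than only propagators. The paper explicitly splits the graphs into Type~(a) (distinguished edge separating) and Type~(b) (distinguished edge on the wheel), and disposes of Type~(a) by invoking the \emph{classical master equation}: summing over the tree portion containing the $K_\epsilon$ edge reproduces, in the $\epsilon\to 0$ limit, the BV bracket $\{I,I\}$ paired against the remainder of the graph, and this is controlled by the CME. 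Only after this step is the problem reduced to wheels with the heat kernel on an internal edge, where your arguments (i) and (ii) apply.
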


\begin{rmk}
This result, and Lemma \ref{lem: anomalyanalysis} below, can be seen as generalizations of results that have appeared in the work \cite{bcov}, specifically Lemma 7.2.7 and its proof, where the one-loop anomaly present in holomorphic Chern-Simons on the flat odd Calabi-Yau manifold $\CC^d$ is computed.
The method of our proof is similar to the method employed there.
\end{rmk}

This obstruction determines an element in the cohomology of the local deformation complex
\ben
[\Theta] \in H^1\left(\oloc(\sE_V), \dbar + Q^{hol} + \{I,-\}\right) .
\een
This is a complete characterization of the cohomological obstruction to satisfying the quantum master equation for the classical theory $I$. 
If we chose any other quantization of $\{I'[L]\}$ of $I$, say coming from a different gauge fixing condition, we obtain class cohomologous to this $[\Theta] = [\Theta']$.

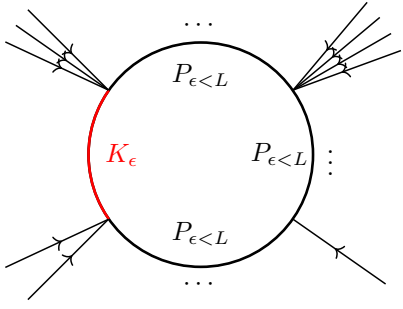
\begin{figure}
\begin{center}
\begin{tikzpicture}[line width=.2mm, scale=1.5]

%\pgfmathsetmacro{\ex}{0}
%\pgfmathsetmacro{\ey}{1}

%\draw (\ex,\ey) ++(45:.8) arc (45:-45:.8);

		\draw[fill=black] (0,0) circle (1cm);
		%\draw[fill=red] (0,0) arc (145:215:1);
		\draw[fill=white] (0,0) circle (0.99cm);
		\draw[line width=0.35mm,red] ++(145:0.995) arc (145:215:0.995);
		%\draw[red] (0,0) arc (30:60:3);

		\draw[fermion](140:2) -- (145:1);
		\draw[fermion](145:2) -- (145:1);
		\draw[fermion](150:2) -- (145:1);
			%\node at (145:0.85) {$v_0$};
			%\node at (145:2.25) {$\gamma$};
		\node at (90:1.15) {$\cdots$};
		%\node at (90:0.8) {$P_{\epsilon<L}$};
		\draw[fermion](210:2) -- (215:1cm);
		\draw[fermion](220:2) -- (215:1cm);
			%\node at (215:0.85) {$v_{d}$};
			%\node at (215:2.25) {$\gamma$};
		\node[red] at (180:0.7) {$K_\epsilon$};
		\draw[fermion](27.5:2) -- (35:1);
		\draw[fermion](32.5:2) -- (35:1);
		\draw[fermion](37.5:2) -- (35:1);
		\draw[fermion](42.5:2) -- (35:1);
			%\node at (35:0.85) {$v_{\alpha}$};
			%\node at (35:2.25) {$\gamma$};
		\draw[fermion](-35:2) -- (-35:1cm);
			%\node at (-35:0.85) {$v_{\beta}$};
			%\node at (-35:2.25) {$\gamma$};
		\node at (0:1.15) {$\vdots$};
		\node at (0:0.7) {$P_{\epsilon<L}$};
		\node at (270:1.15) {$\cdots$};
		\node at (270:0.7) {$P_{\epsilon<L}$};
		\node at (90:0.7) {$P_{\epsilon<L}$};
	    	\clip (0,0) circle (1cm);
\end{tikzpicture}
\caption{The second term in Equation (\ref{anomaly}) representing the holomorphic anomaly.}
\label{fig:chernwheel}
\end{center}
\end{figure}

\begin{proof}[Proof of Proposition \ref{lem: chiral anomaly}]

Like the proof of the non-existence of counterterms for holomorphic theories, the proof of this result will be the consequence of an explicit calculations and bounds of certain Feynman diagrams. 

Note that the first term, involving $Q^{hol}$, is the $L \to 0$ limit of the right-hand side of Equation (\ref{anomaly lem}). 
Thus, it suffices to focus on the second term. 

We express the quantity
\be\label{graph exp}
\lim_{\epsilon \to 0} e^{-I/\hbar} e^{\hbar \partial_{P_{\epsilon < L}}} \left(\{I,I\}_\epsilon e^{I/\hbar}\right)\mod \hbar^2
\ee
as a sum over graphs.
By assumption, we are only looking at graphs of genus one which look like wheels with possible trees attach.
Graphically, the quantity $\{I,I\}_\epsilon$ is the graph of two vertices with a separating edge labeled by the heat kernel $K_\epsilon$.
Thus, all weights appearing in the expansion of (\ref{graph exp}) attach the propagator $P_{\epsilon<L}$ to all edges besides a single distinguished edge $e$, which is labeled by $K_\epsilon$. 
Thus, as a over a sum of graphs, we see that the following two types of weights occur in the expansion of (\ref{graph exp}).
\begin{enumerate}[(a)]
\item the distinguished edge $e$ is separating;
\item the distinguished edge $e$ is {\em not} separating, and so appears as the internal edge of the wheel portion of the graph.
\end{enumerate}

The classical master equation implies that the $\epsilon \to 0, L \to 0$ limit of weights of Type (a) go to zero.
Thus, we must only consider the weights of Type (b). 

The result will follow from two steps.
These should seem familiar from the proof of the main result about the existence of no counterterms.
\begin{enumerate}
\item If $\Gamma$ is a wheel with $k < d+1$ vertices, then $ W_\Gamma(P_{\epsilon < L}, K_\epsilon,I)
 = 0$ identically. 
\item If $\Gamma$ is a wheel with $k > d+1$ vertices, then $\lim_{\epsilon \to 0} W_\Gamma(P_{\epsilon < L}, K_\epsilon,I) =0$.
\end{enumerate} 

The proof of both of these facts is only dependent on the analytic part of the weights. 
Thus, it suffices to make the same reduction as we did in the previous section.
To extract that analytic part of the graph we proceed as in Definition \ref{dfn: analytic weight}.
If $(k, (\vec{n}))$ is a pair of wheel data (recall $k$ labels the number of vertices and $\vec{n}$ labels the derivatives at each vertex) define the smooth distribution
\ben
\Tilde{W}_{\epsilon < L}^{k, (n)} : C_c^\infty((\CC^d)^k) \to \CC,
\een
that sends a smooth compactly supported function $\Phi \in C_c^\infty((\CC^d)^k) = C_c^\infty(\CC^{dk})$ to
\be\label{anomalyweight1}
\Tilde{W}_{\epsilon < L}^{k, (n)} (\Phi) = \int_{(z^1,\ldots, z^k) \in (\CC^d)^k} \prod_{\alpha=1}^k \d^d z^\alpha \Phi(z^1,\ldots,z^\alpha)  \left(\frac{\partial}{\partial z^k}\right)^{\vec{n}^k} K_{\epsilon}(z^1,z^k) \prod_{\alpha = 1}^{k-1} \left(\frac{\partial}{\partial z^\alpha}\right)^{\vec{n}^\alpha} P_{\epsilon < L}^{an}(z^\alpha, z^{\alpha+1}) .
\ee

Item (1) follows from the following observation.

\begin{lem}\label{lem: anomalyanalysis} Let $(k, (\vec{n}))$ be a pair of wheel data.
If the number of vertices $k$ satisfies $k \leq d$ then
\ben
\Tilde{W}_{\epsilon < L}^{k, (n)}  = 0
\een
as a distribution on $\CC^{dk}$ for any $\epsilon,L > 0$. 
\end{lem}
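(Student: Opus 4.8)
The plan is to mimic the proof of the corresponding vanishing statement for the ordinary one-loop weight $W^{k,(n)}_{\epsilon<L}$, but to exploit the fact that the distinguished edge now carries the heat kernel $K_\epsilon$ rather than a propagator. As I will explain, the heat kernel contributes one more antiholomorphic form than a propagator would, so the form-degree count becomes strictly more favorable and the argument closes uniformly for all $k \le d$; in particular, there is no delicate boundary case at $k=d$ to treat separately, in contrast with the non-anomalous lemma.

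First I would isolate the purely form-theoretic part of the integrand of $\Tilde{W}^{k,(n)}_{\epsilon<L}(\Phi)$ in (\ref{anomalyweight1}). As in the earlier argument, the integrand factors as a scalar analytic part (Gaussian exponentials, powers of the $t_\alpha$, and polynomial factors in the $\wbar$'s) times a product of antiholomorphic one-forms coming from the $k$ edges; only the latter is relevant for identical vanishing, and the holomorphic derivatives $\left(\frac{\partial}{\partial z^\alpha}\right)^{\vec{n}^\alpha}$ act solely on the scalar part and do not affect this count. I would then perform the change of coordinates $w^\alpha = z^{\alpha+1}-z^\alpha$ for $1\le\alpha<k$ and $w^k=z^k$ of (\ref{coords1})--(\ref{coords2}), under which $P^{an}_{\epsilon<L}(z^\alpha,z^{\alpha+1})$ depends only on $w^\alpha$ and $K_\epsilon(z^1,z^k)$ depends only on $z^1-z^k = -\sum_{\alpha=1}^{k-1}w^\alpha$.

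The key bookkeeping step is to record the form content of each edge. Each of the $k-1$ propagator edges contributes the $(0,d-1)$-form $\eta^\alpha\prod_{i=1}^d\d\wbar^\alpha_i$, exactly as in the proof for $W^{k,(n)}_{\epsilon<L}$. The distinguished heat-kernel edge, however, carries the full product $\prod_{i=1}^d(\d\zbar^1_i-\d\zbar^k_i)=(-1)^d\prod_{i=1}^d\big(\sum_{\alpha=1}^{k-1}\d\wbar^\alpha_i\big)$, a $(0,d)$-form with no accompanying contraction operator $\eta$, precisely because no $\dbar^*$ is applied to $K_\epsilon$. Assembling these, the relevant form is, up to sign,
\[
\prod_{i=1}^d\Big(\sum_{\alpha=1}^{k-1}\d\wbar^\alpha_i\Big)\;\prod_{\alpha=1}^{k-1}\Big(\eta^\alpha\prod_{i=1}^d\d\wbar^\alpha_i\Big),
\]
which involves only the variables $\wbar^\alpha_i$ with $1\le\alpha\le k-1$, hence is a form on $\CC^{d(k-1)}$ of antiholomorphic degree $d+(k-1)(d-1)=k(d-1)+1$.

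Finally I would conclude by a degree count: the inequality $k(d-1)+1>d(k-1)$ is equivalent to $k\le d$, so under our hypothesis the antiholomorphic degree strictly exceeds the complex dimension $d(k-1)$ of the space on which the form lives, forcing it to vanish identically and hence $\Tilde{W}^{k,(n)}_{\epsilon<L}=0$. I do not anticipate a genuine obstacle; the only point requiring care is the correct accounting of antiholomorphic form degrees---specifically, recognizing that the heat-kernel edge supplies $d$ forms rather than the $d-1$ supplied by a propagator---which is exactly what upgrades the earlier $k<d$ bound to the sharp range $k\le d$ and removes the need for the $\eta^\beta\eta^\beta=0$ argument used at the boundary in the non-anomalous case.
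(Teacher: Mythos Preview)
Your proposal is correct and follows essentially the same argument as the paper: after the change of coordinates (\ref{coords1})--(\ref{coords2}), both you and the paper identify the form factor as $\prod_{i=1}^d\big(\sum_{\alpha=1}^{k-1}\d\wbar_i^\alpha\big)\prod_{\alpha=1}^{k-1}\big(\eta^\alpha\prod_{i=1}^d\d\wbar_i^\alpha\big)$, observe it lives on $\CC^{d(k-1)}$ with antiholomorphic degree $d+(k-1)(d-1)$, and conclude by the degree inequality. Your additional commentary explaining why the heat-kernel edge (carrying $d$ forms rather than $d-1$) eliminates the boundary case $k=d$ that required the $\eta^\beta\eta^\beta=0$ trick in the non-anomalous lemma is a helpful observation not made explicit in the paper.
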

\begin{proof}
In fact, the integrand of (\ref{anomalyweight1}) is identically zero provided $k \leq d$ by a simple observation of the differential form type.
Consider the factor in the integrand of $\Tilde{W}_{\epsilon < L}^{k, (n)}$ given by
\ben
 \left(\frac{\partial}{\partial z^k}\right)^{\vec{n}^k} K_{\epsilon}(z^1,z^k) \prod_{\alpha = 1}^{k-1} \left(\frac{\partial}{\partial z^\alpha}\right)^{\vec{n}^\alpha} P_{\epsilon < L}^{an}(z^\alpha, z^{\alpha+1}).
\een
Making the usual change of coordinates $w^\alpha = z^{\alpha +1} - z^\alpha$ and $w^k = z^k$ we see that this factor is proportional to the following constant coefficient differential form
\ben
\left(\prod_{i=1}^d \left(\sum_{\alpha = 1}^{k-1} \d \wbar_{i}^\alpha\right) \right) \prod_{\alpha=1}^{k-1}\left( \eta^\alpha \prod_{i=1}^d \d \wbar_i^\alpha\right) .
\een
Note that this differential form only involves the coordinates $(w_i^\alpha)$ for $\alpha = 1,\ldots,k-1$.
Thus, we may consider it as a Dolbeualt form on $\CC^{d(k-1)}$.
As such, it is of the type $(0, d + (k-1)(d-1)) = (0, (d-k+1) + d(k-1))$. 
Clearly, $(d-k+1) + d (k-1) > d (k-1)$ provided $k \leq d$.
Thus, the weight is identically zero provided $k \leq d$, as desired.
\end{proof}

Item (2) follows from the following technical lemma that the analytic weight associated to the wheels of valency $k > d+1$ vanish in the limit $\epsilon \to 0$. 

\begin{lem}
Let $(k, (\vec{n}))$ be a pair of wheel data such that $k > d+1$.
Then the $\epsilon \to 0$ limit of the analytic weight
\ben
\lim_{\epsilon \to 0} \Tilde{W}_{\epsilon < L}^{k, (n)} = 0
\een
is identically zero as a distribution on $\CC^{dk}$. 
\end{lem}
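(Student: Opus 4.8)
The plan is to adapt the proof of Lemma~\ref{lem: tech 1} almost verbatim, the one structural novelty being that the distinguished edge carries the heat kernel $K_\epsilon$ frozen at scale $\epsilon$ rather than a propagator integrated over $[\epsilon,L]$. First I would apply the change of coordinates $w^\alpha = z^{\alpha+1} - z^\alpha$ for $1 \le \alpha < k$ and $w^k = z^k$ of (\ref{coords1})--(\ref{coords2}). The $k-1$ propagator edges then depend on $w^1,\dots,w^{k-1}$ through their own scales $t_\alpha \in [\epsilon,L]$, while the $K_\epsilon$ edge depends on $z^1 - z^k = -\sum_{\alpha<k} w^\alpha$ and contributes the Gaussian
\[
E(w,t,\epsilon) = \exp\left(-\sum_{\alpha=1}^{k-1}\frac{|w^\alpha|^2}{4 t_\alpha} - \frac{1}{4\epsilon}\left|\sum_{\alpha=1}^{k-1} w^\alpha\right|^2\right).
\]
Thus the role played by $t_k$ in Lemma~\ref{lem: tech 1} is now taken over by the frozen parameter $\epsilon$.

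Next I would run ``Si's trick'' exactly as before: the antiholomorphic prefactors produced by $\dbar^*$ on the propagators and by the holomorphic derivatives $(\partial/\partial z^\alpha)^{\vec{n}^\alpha}$ and $(\partial/\partial z^k)^{\vec{n}^k}$ are rewritten, via the analogue of Lemma~\ref{lem: diff applied E}, as holomorphic differential operators $D_{\alpha,i_\alpha}(t,\epsilon)$ acting on $E$, and are then transferred onto the compactly supported input $\Phi$ by integration by parts. Because these operators are uniformly bounded in $t_1,\dots,t_{k-1}$ and $\epsilon$, the estimate reduces to a Gaussian integral over $w^1,\dots,w^{k-1}$ whose covariance matrix $M$ has diagonal entries $t_\alpha^{-1}+\epsilon^{-1}$ and off-diagonal entries $\epsilon^{-1}$. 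A short computation gives $\det(M)^{-1} = \epsilon\, t_1\cdots t_{k-1}/(\epsilon + t_1 + \cdots + t_{k-1})$; raising this to the $d$-th power and cancelling it against the heat-kernel prefactor $\epsilon^{-d}\prod_{\alpha<k} t_\alpha^{-d}$ collapses the bound to the scalar integral
\[
\left|\Tilde{W}_{\epsilon<L}^{k,(n)}(\Phi)\right| \;\le\; C \int_{[\epsilon,L]^{k-1}} \frac{\d t_1 \cdots \d t_{k-1}}{(\epsilon + t_1 + \cdots + t_{k-1})^{d}} .
\]

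The final step is where vanishing—rather than the mere existence established in Lemma~\ref{lem: tech 1}—must be extracted, and it is the step I expect to be the main obstacle. Rescaling $t_\alpha = \epsilon s_\alpha$ isolates the factor $\epsilon^{(k-1)-d}$, which is a strictly positive power precisely because $k > d+1$; the goal is to show that the full weight is genuinely of order $\epsilon^{(k-1)-d}$, so that it vanishes in the limit. The difficulty is that the absolute-value bound above is too lossy to see this: the rescaled $s$-integral over $[1,L/\epsilon]^{k-1}$ grows like $(L/\epsilon)^{(k-1)-d}$ and exactly cancels the prefactor, and the same crude bound fails to distinguish the present (vanishing) weights from the finite, generically nonzero wheel weights of Lemma~\ref{lem: tech 1}. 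The decisive extra input is the antisymmetry of the antiholomorphic form $\prod_i\left(\sum_\alpha \d\wbar_i^\alpha\right)\prod_{\alpha<k}\left(\eta^\alpha\prod_i\d\wbar_i^\alpha\right)$, which on the variables $w^1,\dots,w^{k-1}$ has degree strictly below the top degree $d(k-1)$ when $k > d+1$ (the reverse of the situation in Lemma~\ref{lem: anomalyanalysis}). Retaining this form structure inside the Gaussian integration—rather than bounding it away—should force the scale-invariant part of the integrand to cancel, leaving a contribution of order $\epsilon^{(k-1)-d}$. Making this cancellation precise, so that $\lim_{\epsilon\to0}\Tilde{W}_{\epsilon<L}^{k,(n)}=0$ as distributions on $\CC^{dk}$, is the heart of the matter.
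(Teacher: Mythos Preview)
Your setup, coordinate change, integration-by-parts via Si's trick, the Gaussian determinant, and the resulting scalar bound
\[
\left|\Tilde{W}_{\epsilon<L}^{k,(n)}(\Phi)\right| \;\le\; C \int_{[\epsilon,L]^{k-1}} \frac{\d t_1 \cdots \d t_{k-1}}{(\epsilon + t_1 + \cdots + t_{k-1})^{d}}
\]
all match the paper exactly. You are also right that this bound alone does not yield $\lim_{\epsilon\to 0}\Tilde{W}^{k,(n)}_{\epsilon<L}=0$ for fixed $L>0$: after AM--GM it evaluates to a constant times $(L^{1-d/(k-1)}-\epsilon^{1-d/(k-1)})^{k-1}$, which tends to $C\,L^{(k-1)-d}$ as $\epsilon\to 0$.

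The gap is in your last paragraph. The paper does \emph{not} attempt to squeeze more out of the estimate at fixed $L$; instead it observes (see the sentence ``Thus, to show that the limit $\lim_{L\to 0}\lim_{\epsilon\to 0}\ldots$'' inside the proof) that what is actually required for Proposition~\ref{lem: chiral anomaly} is the iterated limit $\lim_{L\to 0}\lim_{\epsilon\to 0}$, and the crude bound already gives this: the $\epsilon\to 0$ limit exists because $d/(k-1)<1$, and the remaining $L^{(k-1)-d}$ then vanishes as $L\to 0$. So the lemma statement is slightly loose---read it as asserting vanishing of the double limit---and the bound you already have is sufficient. Your proposed refinement via the antiholomorphic form structure is unnecessary, and as stated would not work anyway: for $k>d+1$ the form has degree $d+(k-1)(d-1)$, which is \emph{below} the top degree $d(k-1)$ on $\CC^{d(k-1)}$, and a sub-top form carries no automatic cancellation. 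Drop that step and simply append the $L\to 0$ limit; the proof is then complete and identical to the paper's.
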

\begin{proof}
The proof is very similar to the argument we gave in the proof of Lemma \ref{lem: tech 1}, so we will be a bit more concise.
First, we make the familiar change of coordinates as in Equations (\ref{coords1}),(\ref{coords2}).
Using the explicit form the heat kernel and propagator we see that for any $\Phi \in C^\infty_c(\CC^{dk})$ the weight is
\ben
\begin{array}{lll}
\Tilde{W}_{\epsilon < L}^{k, (n)}(\Phi) & = \displaystyle \int_{w^k \in \CC^d} \d^{2d} w^k \int_{(w_1,\ldots,w_{k-1}) \in (\CC^d)^{k-1}} \left(\prod_{\alpha=1}^{k-1} \d^{2d} w^\alpha\right) \Phi(w^1,\ldots,w^k) \\ & \displaystyle \times \int_{(t_1,\ldots,t_k) \in [\epsilon,L]^{k-1}} \frac{1}{(4 \pi \epsilon)^d} \prod_{\alpha=1}^{k-1} \frac{\d t_\alpha}{(4 \pi t_\alpha)^d} \\
& \displaystyle \times \sum_{i_1,\ldots,i_{k-1} =1}^d \epsilon_{i_1,\ldots,i_d} \left(\frac{\wbar^1_{i_1}}{t_1} \frac{(\wbar^1)^{n^1}}{4t^{|n^1|}}\right) \cdots \left(\frac{\wbar^{k-1}_{i_{k-1}}}{4t_{k-1}}\frac{(\wbar^{k-1})^{n^{k-1}}}{4t^{|n^{k-1}|}}\right) \left(\frac{1}{4t^{|n^k|}} \left(\sum_{\alpha=1}^{k-1} \wbar^\alpha\right)^{n^k}\right) \\
& \displaystyle \times \exp\left(- \sum_{\alpha=1}^{k-1} \frac{|w^{\alpha}|^2}{4t_\alpha} - \frac{1}{4\epsilon} \left|\sum_{\alpha=1}^{k-1} w^\alpha \right|^2\right) .
\end{array}
\een
We will integrate by parts to eliminate the factors of $\wbar_i^\alpha$.

For each $1 \leq \alpha < k$ and $i_\alpha$, define the $\epsilon$ and $t=(t_1,\ldots,t_{k-1})$-dependent holomorphic differential operator
\ben
D_{\alpha, i_\alpha}(t) := \left(\frac{\partial}{\partial w^\alpha_{i_\alpha}} - \sum_{\beta = 1}^{k-1} \frac{t_\beta}{t_1+\cdots + t_{k-1} + \epsilon} \frac{\partial}{\partial w_{i_\alpha}^{\beta}}\right)
\prod_{j=1}^d \left(\frac{\partial}{\partial w_j^\alpha} - \sum_{\beta =1}^{k-1} \frac{t_\beta}{t_1+\cdots + t_{k-1} + \epsilon} \frac{\partial}{\partial w_{j}^\beta}\right)
^{n_j^\alpha} .
\een
And the $\epsilon,t$-dependent holomorphic differential operator
\ben
D_{k}(t) = \prod_{j=1}^d \left(\frac{\partial}{\partial w_j^k} - \sum_{\beta =1}^{k-1} \frac{t_\beta}{t_1+\cdots + t_{k-1} + \epsilon} \frac{\partial}{\partial w_{j}^\beta}\right)^{n_j^k} .
\een

By a completely analogous version of Lemma \label{lem: diff applied E} the operators above allow us to integrate by parts and express the weight in the form
\ben
\begin{array}{lll}
\Tilde{W}_{\epsilon < L}^{k, (n)}(\Phi) & = \displaystyle \pm \int_{w^k \in \CC^d} \d^{2d} w^k \int_{(w_1,\ldots,w_{k-1}) \in (\CC^d)^{k-1}}\left(\prod_{\alpha=1}^{k-1} \d^{2d} w^\alpha\right) \\ & \displaystyle \times \int_{(t_1,\ldots,t_{k-1}) \in [\epsilon,L]^{k-1}} \frac{1}{(4 \pi \epsilon)^d} \prod_{\alpha=1}^{k-1} \frac{\d t_\alpha}{(4 \pi t_\alpha)^d}  \\ 
& \displaystyle \times\left( \sum_{i_1,\ldots, i_{k-1}} \epsilon_{i_1\cdots,i_d} D_{1, i_1}(t) \cdots D_{k-1,i_{k-1}} (t) D_k(t) \Phi(w^1,\ldots,w^k) \right) \\ &\displaystyle \times \exp\left(- \sum_{\alpha=1}^{k-1} \frac{|w^{\alpha}|^2}{4t_\alpha} - \frac{1}{4\epsilon} \left|\sum_{\alpha=1}^{k-1} w^\alpha \right|^2\right) .
\end{array}
\een 
Observe that the operators $D_{\alpha,i_\alpha}(t), D_k(t)$ are uniformly bounded in $t$.
Thus, there exists a constant $C = C(\Phi) > 0$ depending only on the function $\Phi$ such that we can bound the weight as
\be\label{anomaly bound 1}
\begin{array}{lll}
|\Tilde{W}_{\epsilon < L}^{k, (n)}(\Phi)| & \leq & \displaystyle C \int_{(w^1,\ldots,w^{k-1}} \prod_{\alpha=1}^{k-1} \d^{2d} w^\alpha  \int_{(t_1,\ldots,t_{k-1}) \in [\epsilon,L]^{k-1}} \d t_1 \ldots \d t_k \frac{1}{\epsilon^d t^d_1\cdots t^d_{k-1}} \\
& & \displaystyle \times \exp\left(- \sum_{\alpha=1}^{k-1} \frac{|w^{\alpha}|^2}{4t_\alpha} - \frac{1}{4\epsilon} \left|\sum_{\alpha=1}^{k-1} w^\alpha \right|^2\right) .
\end{array}
\ee
Thus, to show that the limit $\lim_{L \to 0} \lim_{\epsilon \to 0}  \Tilde{W}_{\epsilon < L}^{k, (n)}(\Phi) = 0$ it suffices to show that the limit of the right-hand side vanishes. 

The Gaussian integral over the variables $w^\alpha_i$ contributes the following factor
\ben
\int_{(w^1,\ldots,w^{k-1}} \prod_{\alpha=1}^{k-1} \d^{2d} w^\alpha \exp\left(- \sum_{\alpha=1}^{k-1} \frac{|w^{\alpha}|^2}{4t_\alpha} - \frac{1}{4\epsilon} \left|\sum_{\alpha=1}^{k-1} w^\alpha \right|^2\right) = C' \left(\frac{\epsilon t_1 \cdots t_{k-1}}{\epsilon + t_1 + \cdots + t_{k-1}}\right)^{d} .
\een
Where $C'$ involves factors of $2$ and $\pi$.
Plugging this back in to the right-hand side of (\ref{anomaly bound 1}) we see that 
\ben
|\Tilde{W}_{\epsilon < L}^{k, (n)}(\Phi)| \leq C C' \int_{[\epsilon,L]^{k-1}} \frac{\d t_1 \cdots \d t_{k-1}}{(\epsilon + t_1 + \cdots + t_{k-1})^d} \leq C C' \prod_{\alpha = 1}^{k-1} \int_{t_\alpha = \epsilon}^L \d t_\alpha t_\alpha^{-d/(k-1)} .
\een
In the second inequality we have used the fact that $\epsilon > 0$ and the AM-GM inequality.  
It is immediate to see that the $\epsilon \to 0$ limit of the above exists provided $k > d+1$, which is the situation we are in, and that the $L \to 0$ limit vanishes. 
\end{proof}

This completes the proof of Proposition \ref{lem: chiral anomaly} 
\end{proof}
%By the compatibility between the QME and RG flow, it suffices to solve this equation at any scale $L$. 
%Thus, $(Q + \hbar \Delta_L)e^{I[L] / \hbar} = 0$ if and only if 
%\begin{lem}
%\ben
%\lim_{L \to 0} \lim_{\epsilon \to 0} e^{-I / \hbar} e^{\hbar \partial_{P_{\epsilon < L}}} \left(\{I, I\}_\epsilon e^{I / \hbar}\right) = \lim_{L \to 0} \lim_{\epsilon \to 0} \sum_{\Gamma} W_\Gamma(P_{\epsilon<L}, K_\epsilon, I) \mod \hbar^2 .
%\een
%\end{lem}

\subsubsection{Relation to the ABJ anomaly}

The lemma we have just proved implies that for holomorphic theories on $\CC^d$ the anomaly is given by evaluating a collection of wheel diagrams with exactly $d+1$ vertices. 
This expression for the obstruction fits into a generic class of of one-loop anomalies from gauge theory called the Adler-Bell-Jackiw (ABJ) anomaly \cite{Adler, BJ}.
This anomaly is most commonly associated with four dimensional gauge theory.

We recall the basic setup for the ABJ anomaly. 
Consider a free Dirac fermion $\Psi$ on $\RR^4$ coupled to a background gauge field $A \in \Omega^1(\RR^4) \tensor \fg$. 
For this to make sense, $\Psi$ is taken to be in valued in a representation $V$ of the Lie algebra $\fg$ so we may think of it as an element $\Psi \in \sS(\RR^4) \tensor V$.
Here, $\sS(\RR^4)$ is the space of sections of the full spinor bundle on $\RR^4$.
The action functional is
\ben
S(A, \Psi) = \int \<\Psi, \slashed\partial_A \Psi\>_V
\een
where $\slashed\partial_A = \slashed\partial + [A,-]$ is the $A$-coupled Dirac operator. 
We are implicitly using the canonical spin invariant symplectic pairing $\sS \tensor \sS \to \Omega^4(\RR^4) = C^\infty(\RR^4) \d^4 x$ and a $\fg$-invariant pairing $\<-,-\>_V : V \tensor V \to \CC$, to obtain a local functional.

For any smooth map $\alpha : \RR^4 \to \fg$, the infinitesimal transformation $\Psi \to \Psi + \epsilon [\alpha, \Psi]$ (where $\epsilon$ is an even parameter of square zero) is a classical symmetry of $S(A, \Psi)$. 
Quantum mechanically, there is a one-loop anomaly which measures the failure of the path integral to be invariant with respect to this symmetry. 
It is a well-known calculation, see for instance \cite{FujikawaSuzuki}, that this anomaly is measured by the following local functional
\be\label{ABJ}
\int \Tr_V \left(\alpha F_A F_A\right) .
\ee
The trace is taken in the representation $V$.
The fundamental calculation is the infamous "triangle diagram", 
%see Figure \ref{fig: triangle},
where two vertices are labeled by the gauge field and the third by $\alpha$.
In practice, physicists express the anomaly as a failure for the Noether current associated to the symmetry $\alpha$ to be divergenceless. 

There is the following holomorphic version of this anomaly. 
Again, let $V$ be a $\fg$ representation.
Consider the following action functional on $\CC^2$:
\ben
S(A, \beta,\gamma) = \int \<\beta, \dbar_A \gamma\>_V
\een
where $\gamma : \CC^2 \to V$, $\beta \in \Omega^{2,1}(\CC^2 , V)$, and $A \in \Omega^{0,1}(\CC^2, \fg)$. 
Since $A$ is a $(0,1)$ form it defines a deformation of the trivial holomorphic $G$-bundle. 
Although we have not put this theory in the BV formalism, there is a natural way to do so. 
The infinitesimal symmetry we contemplate is of the form $\gamma \to \gamma + \epsilon [\alpha, \gamma]$ where $\alpha : \CC^2 \to \fg$. 
We study the anomaly to quantizing this symmetry to one-loop.
Following the result for the anomaly given in the previous section, one sees that it is computed by a wheel with three vertices. 
For type reasons, one vertex is labeled $\alpha$ and the other two are labeled by the gauge fields $A$.
%A special case of a general calculation performed later in Chapter \ref{chap: symmetries} of this thesis computes the value of the diagram as
\ben
\int {\rm Tr}_V(\alpha \partial A \partial A) .
\een
This is the holomorphic version of ABJ anomaly (\ref{ABJ}). 
Note that there are no terms of order $A^3$ or above. 
In fact, the functional $\int {\rm Tr}(\alpha F_A F_A)$ is cohomologous to the expression above in the local deformation complex.

%\begin{rmk} 
%In the next section, using the concept of the "equivariant" BV formalism, we will make coupling background fields to a classical theory precise. 
%Then, Lemma \ref{lem: chiral anomaly} above applies rigorously to give the form of anomaly we have given. 
%We will see a precise statement of this for the holomorphic current algebra in our proof of the Grothendieck-Riemann-Roch theorem in Chapter \ref{chap: symmetries}.
%\end{rmk}

\begin{rmk} 
We have already shown how familiar topological theories can be cast in a holomorphic language.
For instance, topological $BF$ theory is a holomorphic deformation of holomorphic $BF$ theory. 
It is a peculiar consequence of the above result that such topological theories also admit a simple regularization procedure. 
Without much more difficulty, one can extend this to certain topological theories to odd dimensional manifolds of the form $X \times S$, where $X$ is a complex manifold and $S$ is a real one-dimensional manifold. 
We consider the theory as a product of a holomorphic theory on $X$ and a one-dimensional topological theory on $S$. 
This can be further extended to transversely holomorphic foliations \cite{THF1,THF2}, which we will study in a future publication.
Further, often topological BF theory further deforms to Yang-Mills.
It would be interesting to apply our analysis above to such gauge theories. 
\end{rmk}

%\subsection{Anomaly cancellation}

%The characterization 

%\section{Applications}

%\subsection{One-loop renormalization for topological theories}

%\subsection{Chiral anomalies in arbitrary dimensions}

\appendix

\section{Some functional analysis}

Homological algebra plays a paramount role in our approach to quantum field theory.
We immediately run into a subtle issue, which is that the underlying graded spaces of the complexes of fields we are interested in are infinite dimensional, so care must be taken when defining constructions such as duals and homomorphism spaces. 
A common approach to dealing with issues of infinite dimensional linear algebra is to consider vector spaces equipped with a topology. 
A problem with this is that the category of topological vector spaces is not an abelian category, so doing any homological algebra in this naive category is utterly hopeless. 
It is therefore advantageous to enlarge this to the category of {\em differentiable vector spaces}.
The details of this setup are carried out in the Appendix of \cite{CG1}, but we will recall some key points for completeness of exposition.
In this appendix we also set up our notation for duals and function spaces. 

Let ${\rm Mfld}$ be the site of smooth manifolds.
The covers defining the Grothendieck topology are given by surjective local diffeomorphisms.
There is a natural sheaf of algebras on this site given by smooth functions $C^\infty : M \mapsto C^\infty(M)$. 
%By definition, a $C^\infty$-module is a module sheaf over $C^\infty$ on ${\rm Mfld}$. 

For any $p$ the assignment $\Omega^p : M \mapsto \Omega^p (M)$ defines a $C^\infty$-module.
Similarly, if $F$ is any $C^\infty$-module we have the $C^\infty$-module of $p$-forms with values on $F$ defined by the assignment 
\ben
\Omega^1(F) : M \in {\rm Mfld} \mapsto \Omega^p(M, F) = \Omega^p(M) \tensor_{C^\infty(M)} F(M) .
\een

\begin{dfn}
A {\em differentiable vector space} is a $C^\infty$-module equipped with a map of sheaves on ${\rm Mfld}$
\ben
\nabla : F \to \Omega^1(F) 
\een 
such that for each $M$, $\nabla(M)$ defines a flat connection on the $C^\infty(M)$-module $F(M)$. 
A map of differentiable vector spaces is one of $C^\infty$-modules that intertwines the flat connections. 
This defines a category that we denote ${\rm DVS}$.
\end{dfn}

Our favorite example of differentiable vector spaces are imported directly from geometry.

\begin{eg}
Suppose $E$ is a vector bundle on a manifold $X$. 
Let $\sE(X)$ denote the space of smooth global sections.
Let $C^\infty(M, \sE(X))$ be the space of sections of the bundle $\pi_X^*E$ on $M \times X$ where $\pi_X : M \times X \to X$ is projection. 
The assignment $M \mapsto C^\infty(M , \sE(X))$ is a $C^\infty$-module with flat connection, so defines a differentiable vector space.
Similarly, the space of compactly supported sections $\sE_c(X)$ is a DVS. 
\end{eg}

Many familiar categories of topological vector spaces embed inside the category of differentiable vector spaces. 
Consider the category of locally convex topological vector spaces ${\rm LCTVS}$.
If $V$ is such a vector space, there is a notion of a smooth map $f : U \subset \RR^n \to V$.
One can show, Proposition B.3.0.6 of \cite{CG1}, that this defines a functor ${\rm dif}_t : {\rm LCTVS} \to {\rm DVS}$ sending $V$ to the $C^\infty$-module $M \mapsto C^\infty(M, V)$.
If ${\rm BVS} \subset {\rm LCTVS}$ is the subcategory with the same objects but whose morphisms are bounded linear maps, this functor restricts to embed ${\rm BVS}$ as a full subcategory ${\rm BVS} \subset {\rm DVS}$. 

There is a notion of completeness that is useful when discussing tensor products. 
A topological vector space $V \in {\rm BVS}$ is {\em complete} if every smooth map $c : \RR \to V$ has an anti-derivative \cite{KM97}.
There is a full subcategory ${\rm CVS} \subset {\rm BVS}$ of complete topological vector spaces.
The most familiar example of a complete topological vector space will be the smooth sections $\sE(X)$ of a vector bundle $E \to X$.

We let ${\rm Ch}({\rm DVS})$ denote the category of cochain complexes in differentiable vector spaces (we will refer to objects as differentiable vector spaces). 
It is enriched over the category of differential graded vector spaces in the usual way.
We say that a map of differentiable cochain complexes $f : V \to W$ is a quasi-isomorphism if and only if for each $M$ the map $f : C^\infty(M, V) \to C^\infty(M,W)$ is a quasi-isomorphism.

\begin{thm}[Appendix B \cite{CG1}]
The full subcategory ${\rm dif}_c : {\rm CVS} \subset {\rm DVS}$ is closed under limits, countable coproducts, and sequential colimits of closed embeddings. 
Furthermore, {\rm CVS} has the structure of a symmetric monoidal category with respect to the completed tensor product $\Hat{\tensor}_{\beta}$. 
\end{thm}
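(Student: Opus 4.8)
The plan is to recognize ${\rm CVS}$ as a packaging of the category of \emph{convenient vector spaces} in the sense of Kriegl--Michor \cite{KM97}, and then to verify each assertion by tracking smooth curves, using the defining characterization of completeness recalled in the excerpt: $V \in {\rm BVS}$ lies in ${\rm CVS}$ precisely when every smooth curve $c : \RR \to V$ admits an antiderivative. Throughout I would exploit two structural facts about ${\rm DVS}$. First, limits are computed objectwise over the site ${\rm Mfld}$, so that $(\lim_i V_i)(M) = \lim_i V_i(M)$ for every manifold $M$; this reduces curves into a limit to compatible families of curves. Second, the flat connection $\nabla$ attached to a differentiable vector space makes a smooth curve with vanishing derivative constant, so that the antiderivative of a given curve is unique up to an additive constant. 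This second fact is exactly what lets local antiderivatives be glued.

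Closure under limits I would handle first, as it is the cleanest. A smooth curve into $\lim_i V_i$ is the same datum as a compatible family of smooth curves $c_i : \RR \to V_i$; each $c_i$ has an antiderivative by completeness of $V_i$, these may be normalized to vanish at $0$, and uniqueness up to constants forces the normalized family to remain compatible, hence to assemble into an antiderivative in the limit. The two colimit statements carry the real content. For a countable coproduct $\bigoplus_n V_n$ and for a sequential colimit $V_1 \hookrightarrow V_2 \hookrightarrow \cdots$ along closed embeddings, I would first establish a \emph{factorization lemma}: the restriction of a smooth curve to a compact interval has bounded image, and under the stated hypotheses a bounded subset of the colimit is contained in a single finite partial sum (respectively a single stage $V_n$). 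Granting this, the curve locally factors through a member of the diagram that itself lies in ${\rm CVS}$, one produces an antiderivative there, and the uniqueness-up-to-constants principle patches these local antiderivatives into a single global one valued in the colimit.

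For the monoidal structure I would define $\Hat{\tensor}_\beta$ by its universal property: $V \Hat{\tensor}_\beta W$ corepresents the functor of bounded bilinear maps out of $V \times W$, realized concretely as the convenient completion of the algebraic tensor product equipped with the projective bornology. The first task, that $V \Hat{\tensor}_\beta W \in {\rm CVS}$, is then immediate from the completion step. The associativity constraint, the braiding, and the left and right unitors (with unit $\CC$) come for free: each side of each constraint corepresents the same functor of bounded multilinear maps, and the canonical reassociation, transposition, and scalar-action identities for multilinear maps supply the coherence isomorphisms. The pentagon and hexagon axioms hold because the two composites in each diagram corepresent the identity natural transformation on the functor of bounded multilinear maps, so they coincide by Yoneda.

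The hard part will be the factorization lemma underlying the two colimit cases, and specifically the role of the closed-embedding hypothesis. One must show both that the image of a compact set under a smooth curve is bounded in the bornology of the colimit, and that such a bounded set cannot spread across infinitely many stages; this is a Mackey-convergence argument that genuinely uses closedness of the embeddings, without which the colimit need not be separated and the lemma fails. A secondary subtlety is checking that the objectwise description of colimits in ${\rm DVS}$ is compatible with the bornological colimit used to define the convenient structure, so that ``antiderivative computed at a single stage'' really is an antiderivative in the colimit object. Once the factorization lemma is in hand, every remaining step reduces to a routine application of completeness together with the uniqueness of antiderivatives, along the lines of Appendix B of \cite{CG1}.
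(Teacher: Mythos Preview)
The paper does not contain a proof of this statement: it is quoted verbatim as a result of Costello--Gwilliam, with the attribution ``Appendix B \cite{CG1}'' in the theorem header, and no argument is supplied. There is therefore nothing in the present paper to compare your proposal against.

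That said, your outline is broadly in the spirit of how the cited reference proceeds (via the Kriegl--Michor theory of convenient vector spaces), and the structural points you flag---objectwise limits, uniqueness of antiderivatives up to constants, the need for a factorization lemma for curves into colimits, and the role of the closed-embedding hypothesis---are the right ones. If you want to turn this into a self-contained proof rather than a sketch, the place requiring genuine work is exactly where you say: showing that bounded sets in the relevant colimits are absorbed by a single stage, and checking compatibility of the bornological and sheaf-theoretic colimits. For the purposes of this paper, however, the correct move is simply to cite \cite{CG1} and \cite{KM97}, as the author does.
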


We will not define the tensor product $\Hat{\tensor}_\beta$here, but refer the reader the cited reference for a complete exposition.
We will recall its key properties below.
Often times we will write $\tensor$ for $\Hat{\tensor}_\beta$ where there is no potential conflict of notation. 
The fundamental property of the tensor product that we use is the following.
Suppose that $E,F$ are vector bundles on manifolds $X,Y$ respectively.
Then, $\sE(X), \sF(Y)$ lie in ${\rm CVS}$, so it makes sense to take their tensor product using $\Hat{\tensor}_\beta$. 
There is an isomorphism
\be\label{tensor1}
\sE(X) \Hat{\tensor}_\beta \sF(Y) \cong \Gamma(X \times Y, E \boxtimes F)
\ee
where $E \boxtimes F$ denotes the external product of bundles, and $\Gamma$ is smooth sections. 

If $E$ is a vector bundle on a manifold $X$, then the spaces $\sE(X), \sE_c(X)$ both lie in the subcategory ${\rm CVS} \subset {\rm DVS}$. 
The differentiable structure arises from the natural topologies on the spaces of sections. 

We will denote by $\Bar{\sE}(X)$ ($\Bar{\sE}_c(X)$) the space of (compactly supported) distributional sections.
It is useful to bear in mind the following inclusions
\ben
\begin{tikzcd}      
& \Bar{\sE}_c(X) \arrow[hook]{dr} & \\
\sE_c(X)  \arrow[hook]{ur} \arrow[hook]{rd} & & \Bar{\sE}(X) \\
    & \sE(X) \arrow[hook]{ru} & .
\end{tikzcd}
\een
When $X$ is compact the bottom left and top right arrows are equalities.

Denote by $E^\vee$ the dual vector bundle whose fiber over $x \in X$ is the linear dual of $E_x$. 
Let $E^!$ denote the vector bundle $E^\vee \tensor {\rm Dens}_X$, where ${\rm Dens}_X$ is the bundle of densities. 
In the case $X$ is oriented, ${\rm Dens}_X$ is isomorphic to the top wedge power of $T^*X$. 
Let $\sE^!(X)$ denote the space of sections of $E^!$. 
The natural pairing 
\ben
\sE_c(X) \tensor \sE^!(X) \to \CC
\een
that pairs sections of $E$ with the evaluation pairing and integrates the resulting compactly supported top form exhibits $\Bar{\sE}_c(X)$ as the continuous dual to $\sE^!(X)$. 
Likewise, $\sE_c(X)$ is the continuous dual to $\Bar{\sE}^!(X)$. 
In this way, the topological vector spaces $\Bar{\sE}(X)$ and $\Bar{\sE}_c(X)$ obtain a natural differentiable structure.

If $V$ is any differentiable vector space then we define the space of linear functionals on $V$ to be the space of maps $V^* = {\rm Hom}_{\rm DVS}(V, \RR)$. 
Since ${\rm DVS}$ is enriched over itself this is again a differentiable vector space. 
Similarly, we can define the polynomial functions of homogeneous degree $n$ to be the space
\ben
\Sym^n(V^*) = {\rm Hom}^{multi}_{\rm DVS}(V \times \cdots \times V, \RR)_{S_n}
\een
where the hom-space denotes multi-linear maps, and we have taken $S_n$-coinvariants on the right-hand side.
The algebra of functions on $V$ is defined by
\ben
\sO(V) = \prod_{n} \Sym^n(V^*) .
\een

As an application of Equation (\ref{tensor1}) we have the following identification.

\begin{lem}\label{lem: fnls}
Let $E$ be a vector bundle on $X$. 
Then, there is an isomorphism
\ben
\sO(\sE(X)) \cong \prod_{n} \sD_c(X^n , (E^!)^{\boxtimes n})_{S_n}
\een
where $\sD_c(X^n , (E^!)^{\boxtimes n})$ is the space of compactly supported distributional sections of the vector bundle $(E^!)^{\boxtimes n}$.
Again, we take $S_n$-coinvariants on the right hand side.
\end{lem}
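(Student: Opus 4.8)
The plan is to establish the isomorphism one homogeneous degree at a time. By the definition recalled above, $\sO(\sE(X)) = \prod_n \Sym^n(\sE(X)^*)$, and the right-hand side of the asserted isomorphism is likewise a product over $n$, so it suffices to produce a natural isomorphism $\Sym^n(\sE(X)^*) \cong \sD_c(X^n, (E^!)^{\boxtimes n})_{S_n}$ for each $n$, compatibly with the $S_n$-actions. The whole argument is then a matter of assembling the tensor-product and duality statements recorded earlier in this appendix.

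First I would unwind the definition $\Sym^n(\sE(X)^*) = {\rm Hom}^{multi}_{\rm DVS}(\sE(X)^{\times n}, \RR)_{S_n}$ and invoke the fact that $\Hat{\tensor}_\beta$ is the symmetric monoidal structure on ${\rm CVS}$ that classifies (bounded) multilinear maps. This gives a natural identification ${\rm Hom}^{multi}_{\rm DVS}(\sE(X)^{\times n}, \RR) \cong (\sE(X)^{\Hat{\tensor}_\beta n})^*$ of multilinear functionals with the linear dual of the $n$-fold completed tensor power, intertwining the two $S_n$-actions so that the passage to $S_n$-coinvariants may be deferred to the very end.

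Next I would compute the tensor power geometrically. Iterating the isomorphism (\ref{tensor1}) yields $\sE(X)^{\Hat{\tensor}_\beta n} \cong \Gamma(X^n, E^{\boxtimes n})$, the smooth sections on $X^n$ of the external tensor power of $E$. It then remains to identify the dual of this space. By the duality recalled above, the pairing between smooth sections and compactly supported distributional sections of the $!$-twisted bundle (using that $(-)^!$ is an involution) identifies $\Gamma(X^n, E^{\boxtimes n})^*$ with the compactly supported distributional sections $\sD_c(X^n, (E^{\boxtimes n})^!)$. Finally I would record the elementary bundle identity $(E^{\boxtimes n})^! = (E^!)^{\boxtimes n}$, which follows from $(E \boxtimes F)^\vee = E^\vee \boxtimes F^\vee$ together with ${\rm Dens}_{X \times Y} = {\rm Dens}_X \boxtimes {\rm Dens}_Y$. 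Taking $S_n$-coinvariants on both sides and then the product over all $n$ delivers the claim.

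The step I expect to be the main obstacle is the functional-analytic input, rather than any of the geometric bookkeeping: one must verify that the category-theoretic dual $V^* = {\rm Hom}_{\rm DVS}(V,\RR)$ and the internal space of multilinear maps genuinely reproduce the classical topological notions, i.e.\ that the dual of smooth sections is exactly the space of compactly supported distributional sections of the dual bundle, and that $\Hat{\tensor}_\beta$ classifies multilinear maps so that the dual of the tensor power computes multilinear functionals. These compatibilities are precisely the structural content of Appendix~B of \cite{CG1}; once they are in hand, the identifications via (\ref{tensor1}) and the bundle identity $(E^{\boxtimes n})^! = (E^!)^{\boxtimes n}$ are routine.
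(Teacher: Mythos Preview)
Your proposal is correct and follows exactly the route the paper indicates: the paper does not give a proof of this lemma at all, but simply presents it ``as an application of Equation~(\ref{tensor1}),'' leaving the reader to supply precisely the steps you outline (reduce to homogeneous degree, identify multilinear maps with the dual of the completed tensor power, apply~(\ref{tensor1}), and invoke the duality between smooth sections and compactly supported distributional sections of $E^!$). Your identification of the functional-analytic compatibilities from Appendix~B of \cite{CG1} as the substantive input is also on the mark.
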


\bibliographystyle{alpha}
\bibliography{holrenorm}

\end{document}